\newcommand*{\doi}[1]{\href{http://dx.doi.org/\detokenize{#1}}{doi}}
\newtheorem{theorem}{Theorem}[section]
\newtheorem{proposition}[equation]{Proposition}
\newtheorem{lemma}[equation]{Lemma}
\theoremstyle{definition}
\newtheorem{remark}[equation]{Remark}
\newtheorem{example}[equation]{Example}
\numberwithin{equation}{section}
\newcommand{\N}{\mathbb{N}}
\newcommand{\Z}{\mathbb{Z}}
\newcommand{\cH}{\mathcal{H}}
\newcommand{\T}{\mathcal{T}}
\newcommand{\cX}{\mathcal{X}}
\newcommand{\cW}{\mathcal{W}}
\newcommand{\cF}{\mathcal{F}}
\newcommand{\cM}{\mathcal{M}}
\newcommand{\cZ}{\mathcal{Z}}
\newcommand{\hcX}{\widehat{\cX}}
\newcommand{\hcW}{\widehat{\cW}}
\newcommand{\hT}{\widehat{T}}
\newcommand{\hmu}{\widehat{\mu}}
\newcommand{\htheta}{\widehat{\theta}}
\newcommand{\tmu}{\tilde{\mu}}
\newcommand{\tcZ}{\tilde{\cZ}}
\newcommand{\bxi}{\xi^{\circ}}
\newcommand{\orho}{\bar{\rho}}
\newcommand{\tv}{\tilde{v}}
\newcommand{\tT}{\widetilde T}
\newcommand{\ttheta}{\tilde\theta}
\newcommand{\vep}{\varepsilon}
\newcommand{\ww}{\mathtt r}
\DeclareMathOperator{\Palm}{Palm}
\newcommand{\one}{\mathds{1}}
\newcommand{\dd}{\mathrm{d}}
\renewcommand{\geq}{\geqslant}
\renewcommand{\ge}{\geqslant}
\renewcommand{\leq}{\leqslant}
\renewcommand{\le}{\leqslant}
\renewcommand{\subset}{\subseteq}
\definecolor{midblue}{rgb}{.2,.2,.7}
\definecolor{darkgreen}{rgb}{0,.5,0}
\newcommand{\blue}{\color{blue}}
\newcommand{\red}{\color{red}}
\newcommand{\green}{\color{darkgreen}}
\newcommand{\orange}{\color{orange}}
\newcommand{\purple}{\color{purple}}
\begin{document}

\title{Soliton decomposition of the Box-Ball System}

\author{Pablo A. Ferrari, Chi Nguyen, Leonardo T. Rolla, Minmin Wang}

\maketitle

\begin{abstract}
The Box-Ball System, shortly BBS, was introduced by Takahashi and Satsuma as a discrete counterpart of the KdV equation. Both systems exhibit solitons whose shape and speed are conserved after collision with other solitons. We introduce a slot decomposition of ball configurations, each component being an infinite vector describing the number of size $k$ solitons in each $k$-slot. The dynamics of the components is linear: the $k$-th component moves rigidly at speed $k$. Let $\zeta$ be a translation invariant family of independent random vectors under a summability condition and~$\eta$ the ball configuration with components~$\zeta$. We show that the law of $\eta$ is translation invariant and invariant for the BBS. This recipe allows us to construct a big family of invariant measures, including product measures and stationary Markov chains with ball density less than $\frac{1}{2}$. We also show that starting BBS with an ergodic measure, the position of a tagged $k$-soliton at time $t$, divided by $t$ converges as $t\to\infty$ to an effective speed $v_k$. The vector of speeds satisfies a system of linear equations related with the Generalized Gibbs Ensemble of conservative laws.
\end{abstract}

\begin{figure*}[hb!]
\includegraphics[width=\textwidth]{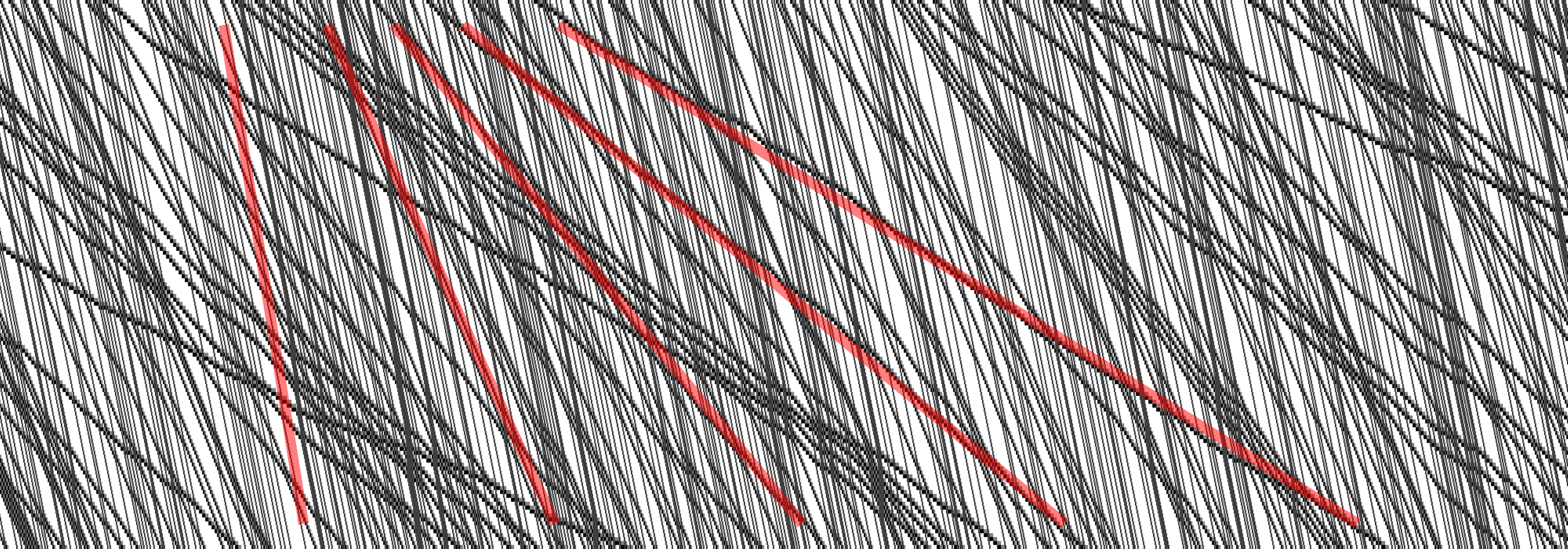}
\\
BBS dynamics for i.i.d.\ initial configuration with density 0.25. Time is going down. Straight red lines are deterministic and computed using Theorem~\ref{thm:speedsexplicit}.
{
\hfill
\small
(high resolution, color online)}
\end{figure*}

\section*{Overview}

Assume that there is a \emph{box} at each integer $x\in\Z$ and that each box may contain a \emph{ball} or be empty. Denote $\eta\in\{0,1\}^\Z$ a ball configuration, with the convention $\eta(x):= 1$ if there is a ball at $x$, else $\eta(x):= 0$.
Imagine a carrier that traverses $ \Z $ from left to right as follows.
When visiting box $x$, the carrier picks a ball if there is one, and deposits a ball if the box $ x $ is empty and the carrier has at least one ball.
Let $T\eta$ be the configuration obtained after the carrier visited all boxes.
An example of $\eta$, carrier load, and $T\eta$ is as follows.
\begin{align*}
\begin{array}{ll}
\hbox{\tt
...0 0 0 1 0 1 1 0 0 0 0 1 1 0 1 0 0 0 0 0} &\eta\\
\hbox{\tt
\ ...0 0 0 1 0 1 2 1 0 0 0 1 2 1 2 1 0 0 0} &\hbox{carrier load}
\\
\hbox{\tt
...0 0 0 0 1 0 0 1 1 0 0 0 0 1 0 1 1 0 0 0 } &T\eta
\end{array}
\end{align*}

The map $T$ for a general $ \eta\in\{0,1\}^\Z $ is defined in~\eqref{t2}.
This dynamics is called \emph{Box-Ball System} (BBS) and was introduced by Takahashi and Satsuma~\cite{TakahashiSatsuma90}, who proposed an algorithm to identify \emph{solitons} in configurations with a finite number of balls and argued that each soliton identified at time 0 can be tracked at successive iterations of $T$.
For example, in the configuration having exactly 3 balls at boxes 1,2,3 the $3$-soliton $\gamma$ consists of these 3 occupied boxes and the empty boxes 4,5,6.
Evolving this configuration by $t$ iterations of $T$, the new configuration will have a $3$-soliton $\gamma^t$ which is a translation of $\gamma$ by $3t$.
In general,~\cite{TakahashiSatsuma90} observed that a $k$-soliton always consists of $k$ occupied boxes and $k$ empty boxes. The relative positions can change and be more scattered during collisions with other solitons but the striking property of the BBS is that such collisions neither create nor destroy solitons. The distance between solitons of the same size is also conserved after collisions.

The main goal of~\cite{TakahashiSatsuma90} was to propose an integrable system with the same behavior as the Korteweg-de Vries equation, KdV, whose solutions include solitons of different sizes that keep shape after collision with other solitons.
Then~\cite{TokihiroTakahashiMatsukidairaSatsuma96,TakahashiMatsukidaira97} argued a way to go from KdV to BBS via {ultradiscretization} and {tropical geometry}, see also~\cite{KatoTsujimotoZuk17,Zuk20}.
Further use of Bethe ansatz to study asymptotic behavior of BBS can be found in~\cite{InoueKunibaOkado04, MadaIdzumiTokihiro06, InoueKunibaTakagi12, KunibaLyuOkado18}. 
The model has also attracted attention in the combinatorics and probability communities.
Using a map between solitons and Dick paths in~\cite{ToriiTakahashiSatsuma96}, the paper~\cite{LevineLyuPike17} relates soliton sizes to longest increasing subsequence of restricted permutations.
\cite{FerrariGabrielli20a} find a new soliton identification that maps to a branch decomposition of the Neveu-Aldous trees of random walks.
See~\cite{LamPylyavskySakamoto21, Sakamoto14, Sakamoto14a} for some other combinatorial developments.
The paper~\cite{HamblyMartinOConnell01} shows that stationary Markov chains are invariant measures for the Pitman transformation~\cite{Pitman75}, a dynamics equivalent to BBS, see~\cite{CroydonKatoSasadaTsujimoto18,CroydonSasada19,croydon2020discrete} for extensions.

This paper has three main contributions.
Firstly, we discover the following fact:
any ball configuration $\eta$ with ball density less than $\frac12$ can be mapped to a family of soliton components, $(\zeta_k)_{k\ge1}$, called the \emph{slot decomposition} of $\eta$; here $\zeta_k(i)$ represents the number of $k$-solitons at coordinate~$i\in\Z$ of the $k$-th component.
The components of a ball configuration evolve linearly under $ T $: component $k$ moves rigidly at speed~$k$. The interaction among components reappears when the ball configuration is reconstructed from the components. This is a delicate hierarchical arrangement where the $k$-th component is \emph{appended} to a certain subset of $\Z$ called $k$-\emph{slots}, determined by the $m$-components for $m>k$. The above facts are purely deterministic. It would be interesting to understand the relationship between the {rigged configurations} in~\cite{InoueKunibaOkado04} with the slot decomposition.

The BBS can be seen as a dynamical system acting on the set of configurations with density less than $ \frac{1}{2} $, and a natural question is about the invariant measures $\mu$ defined by $\mu T^{-1} =\mu$.
Our second main result states that given a translation invariant random family $\zeta$ of independent vectors satisfying a summability condition, the law of the random ball configuration whose slot decomposition is $\zeta$ is translation invariant and invariant for the BBS.
Product measures and stationary Markov chains with density less than $\frac12$ satisfy those properties, as well as a large family of measures based on soliton weights~\cite{FerrariGabrielli20}. We conjecture that the slot decomposition characterizes $T$-invariant probability measures in the sense that, if $\mu$ is shift-mixing and $T$-invariant, then its components should be independent and shift-mixing, see Remark~\ref{rmk:conjecture}.

Our third main contribution is the characterization of the \emph{effective soliton speeds} for shift-ergodic initial states, illustrated by the red lines in the figure of the abstract. The result is based on the following rough description of the soliton dynamics. A size $k$ soliton travels at speed $k$ in absence of other solitons and when two solitons collide, the smaller soliton gets delayed and the bigger soliton ``jumps'' over the slower one. 
Our proof is based on shift-ergodicity and Palm theory, and does not use the slot decomposition. For non-homogeneous initial condition, the effective speed equations have been derived by~\cite{CroydonSasada20}, who also perform a hydrodynamic limit, using our slot decomposition. The results are analogous to those for the hard rod system~\cite{boldrighini1983one}, where disjoint segments of fixed size of the real line called rods travel ballistically at assigned speeds until collision, when the speeds are interchanged. A \emph{pulse} follows the rod that is travelling at one of the given speeds. Hard rod pulses and BBS solitons have a similar dynamics and consequently, similar hydrodynamic and effective speed equations. These results belong to the very active area of generalized hydrodynamics of the generalized Gibbs ensemble~\cite{ spohn2012large,DoyonSpohn17,DoyonYoshimuraCaux18,CaoBulchandaniSpohn19,KunibaMisguichPasquier20} and references therein. 

The paper is organized as follows.
In \S\ref{sec:results} we state the main results of this paper, after giving the definitions needed for the statements.
In \S\ref{sec:solitons} we introduce the slot decomposition of ball configurations, show that the slot decomposition is an injective map (Theorem~\ref{bijection1}), and describe how a configuration can be reconstructed from the components.
In \S\ref{sec:compevol} we show that under the BBS dynamics each component shifts rigidly (Theorems~\ref{simple-linear} and~\ref{thm:hierarchy}).
In \S\ref{sec:measures} we give an explicit construction of $T$-invariant measures that are shift-invariant (Theorem~\ref{thm:invariant}).
In \S\ref{sec:palm} we compile fragments of Palm theory from the literature
that play an important role in many of our arguments.
In \S\ref{sec:speeds} we study the asymptotic speed of tagged solitons (Theorems~\ref{thm:hspeed} and~\ref{thm:speedsexplicit}).
We also study the soliton speeds in terms of tagged records (Theorem~\ref{thm:speeds}).
In \S\ref{sec:postponed} we complete some proofs postponed in previous sections.

\section{Preliminaries and results}
\label{sec:results}

In this section we describe our main results.
We will work mostly with configurations with ball density less than $\frac12$. More precisely, let
\begin{align}
\cX_\lambda :=\Bigl\{\eta\in\{0,1\}^\Z: \lim_{y\to\infty}\,\frac1{y} \sum_{x=-y}^0\eta(x) = \lim_{y\to\infty}\,\frac1{y} \sum_{x=0}^y\eta(x) =\lambda\Bigr\}
\ \
\text{ and }
\ \
\cX := \bigcup_{\mathclap{0 < \lambda < \frac12}} \cX_\lambda
.
\end{align}

In the sequel, a site $x\in\Z$ is often referred to as a \emph{box}.
For $\eta \in \cX$ we define the set of \emph{records} by
\begin{equation}
\label{eq:reta}
R\eta := \Big\{ x\in \Z : \sum_{y=z}^x \eta(y) < \sum_{y=z}^x [1-\eta(y)] \text{ for all } z\le x \Big\}.
\end{equation}
Note that $ \eta(x)=0 $ for all $ x \in R\eta $.
The piece of configuration between two consecutive records forms a finite \emph{excursion}.
The operator $T$ is defined by
\begin{align}
\label{t2}
T\eta(x) :=
\begin{cases}
0 ,& x \in R\eta, \\
1-\eta(x) ,& \text{otherwise}.
\end{cases}
\end{align}
In other words, the value at records stay 0 and the excursions are flipped. When applied to finite ball configurations, this operator coincides with the operator described in the Overview.
We show in \S\ref{sec:solitons} that if $\lambda<\frac12$ then $\cX_\lambda$ is invariant under~$T$.

\subsection{Identifying and tracking solitons}

Define the \emph{runs} of $\eta$ as maximal blocks of successive sites where $\eta$ has a constant value, so that they form a partition of $\Z$.
Assume first that $\eta$ has a finite number of balls, so it has a finite number of finite runs and two semi-infinite runs of zeros, one to the left and one to the right.

A \emph{$k$-soliton} is a collection of $2k$ boxes identified by the Takahashi--Satsuma algorithm~\cite{TakahashiSatsuma90} running as follows.

\begin{algorithm}[H]
Start with a doubly infinite \emph{word}, where each \emph{letter} in the word is $0$ or $1$.
\\
\While{\rm there are still ones in the \emph{word}}{
Select the leftmost run in the \emph{word} whose length is at least as long as the length (denote it $k$) of the run preceding it
\\
Identify a soliton of size $k$, or simply $k$-soliton, consisting of the first $k$ letters of this run and the letters of the run preceding it
\\
Remove these $2k$ letters from the \emph{word}
}
\end{algorithm}

\begin{figure}[b]
\centering
\rule{.95\textwidth}{.5pt}
\par
\bigskip
\includegraphics[width=.9\textwidth]{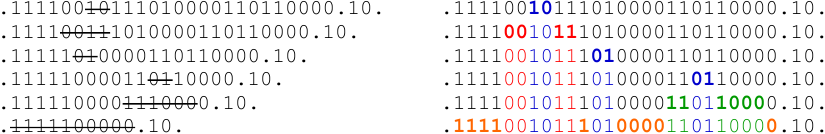}
\caption{\small%
Applying the Takahashi--Satsuma algorithm to a sample configuration. Dots represent records. On the left we have the resulting word after successive iterations. Identified solitons are shown in bold once and then with a color corresponding to their size.
The algorithm is applied to each excursion separately, so the rightmost $1$-soliton in the picture is ignored by this instance of the procedure.
(color online)}
\label{fig:algo}
\end{figure}

Notice that a $k$-soliton consists of $k$ zeros (possibly non-consecutive) followed by $k$ ones or vice-versa, and letters which do not belong to any soliton are all zero and correspond to the records of $\eta$, see Fig.~\ref{fig:algo}.
For a general $\eta \in \cX$, all excursions have a finite number of boxes.
To identify the solitons in $\eta$, we take each excursion of $\eta$, append infinitely many zeros to the left and right of the excursion, and then apply the above algorithm to it.

We define the \emph{head} and \emph{tail} of a $k$-soliton $\gamma$ as follows: the head $\cH(\gamma)=\{\cH_{1}, \cH_{2}, \dots, \cH_{k}\}\subset \Z$ is the set of $k$ boxes with ones in $\gamma$ and the tail $\T(\gamma)=\{\T_{1}, \T_{2}, \dots, \T_{k}\}\subset \Z$ is the set of $k$ boxes with zeroes in $\gamma$.
Namely, $\cH(\gamma)\cup \T(\gamma)$ is the set of boxes that are removed when executing the previous algorithm on $\eta$.
Let $\Gamma_k\eta$ be the set of $k$-solitons of a ball configuration $\eta\in \cX$.
The following is proved in \S\ref{sec:postponed}.

\begin{proposition}
\label{prop:solitontrack}
For any $\eta\in\cX$ and $A \subseteq \Z$, there is a $k$-soliton $\gamma\in\Gamma_k\eta$ with tail $\T(\gamma)=A$ if and only if there is a $k$-soliton $\gamma^1 \in \Gamma_k(T\eta)$ with head $\cH(\gamma^1) = A$.
\end{proposition}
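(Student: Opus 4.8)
The plan is to recast everything through the walk associated with $\eta$ and its non-crossing ball–hole matching, where the action of $T$ becomes transparent. Put $W(0):=0$ and $W(x)-W(x-1):=1-2\eta(x)$, so that balls are down-steps and holes up-steps. Then \eqref{eq:reta} says exactly that $x\in R\eta$ iff $W(x)>W(y)$ for all $y<x$, i.e. the records are the sites where $W$ attains a strict new running maximum $\bar W(x):=\max_{y\le x}W(y)$. Pairing each ball with the first later hole at the same height (the last-in–first-out matching) produces nested, non-crossing arcs whose left endpoints are balls and whose right endpoints are the matched holes, the unmatched holes being precisely the records. From \eqref{t2} — the identity underlying \eqref{eq:Txi} — one computes that the walk of $T\eta$ is $W'(x)=2\bar W(x)-W(x)$, i.e. $T$ reflects $W$ about its running maximum. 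Read on the configuration this says that $T$ keeps every record empty, empties every ball, and fills every matched hole; equivalently, inside each arc the ball slides from the left to the right endpoint. In particular the balls of $T\eta$ are exactly the matched holes of $\eta$.

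Because the tail of a soliton is a set of holes and its head a set of balls, and tails consist only of matched holes (records lie in no soliton), the two sides of the statement compare objects living on the same ground set: the holes of $\eta$ that $T$ turns into balls. What must be shown is that the partition of these holes into soliton tails produced by the Takahashi–Satsuma algorithm on $\eta$ coincides, block by block and size by size, with the partition of the balls of $T\eta$ into soliton heads produced by the same algorithm on $T\eta$. I would first verify this for an isolated $k$-soliton, where the relevant letters form a block $1^k0^k$: one step of $T$ slides the block $k$ sites to the right, so its tail becomes verbatim the head of the shifted soliton, exhibiting the mechanism $\T(\gamma)=\H(\gamma^1)$ in the base case.

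For the general case the natural strategy is to follow the algorithm itself, peeling solitons one at a time, and to prove that a single peeling step is intertwined by $T$. Concretely I would isolate a soliton the algorithm can currently remove (one with nothing nested strictly inside it), record the head/tail positions it assigns, excise those $2k$ letters, and argue (a) that the excision leaves the record set and the matching of all remaining letters unchanged, being the removal of a completed, net-zero sub-excursion, and (b) that under the reflection description of $T$ the same excision on $T\eta$ removes exactly the matching soliton, whose head is the tail just recorded. Iterating, the tail-partition of $\eta$ and the head-partition of $T\eta$ are generated by one and the same recursion, so they agree; moreover this recursion pairs the solitons of $\eta$ with those of $T\eta$ bijectively, which yields both implications of the stated equivalence at once (alternatively the converse follows from invertibility of $T$ on $\cX$ by the symmetric argument with heads and tails interchanged).

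The real difficulty is concentrated in the collision bookkeeping of step (b). When smaller solitons are nested inside a larger one, neither the head nor the tail of the larger soliton is a contiguous block, a single soliton may appear as $1^k0^k$ or as $0^k1^k$ in the reduced word, and a given arc may contribute one endpoint to one soliton and the other endpoint to a different soliton; one must check that the run-rule groups exactly the same arcs into the same $k$-soliton before and after a $T$-step, and that sliding balls to the right endpoints of arcs does not reshuffle this grouping. I expect the crux to be the commutation lemma (b) — that excising a completed sub-excursion commutes with reflection about the running maximum — since everything else is either the base case or the stability of records and matchings under local excision. The cleanest route is probably to prove this intertwining directly on the walk, where excising a finished sub-excursion and reflecting about the running maximum are manifestly compatible operations.
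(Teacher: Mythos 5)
Your walk-and-matching setup is correct and agrees with the paper's \eqref{eq:Txi}: in your sign convention records are the strict running maxima, $T$ is reflection about the running maximum, and the balls of $T\eta$ are exactly the matched holes of $\eta$. The gap is that the entire induction rests on your commutation lemma (b), which you never prove, and which is in fact false in the form you state it. Take $\eta$ with balls at $\{1,2,5\}$ only, i.e.\ $\eta=\cdots 0\,1\,1\,0\,0\,1\,0\,0\cdots$. The Takahashi--Satsuma algorithm first removes the $2$-soliton $\gamma$ with $\H(\gamma)=\{1,2\}$ and $\T(\gamma)=\{3,4\}$ (it is contiguous, with nothing nested inside, so it is precisely ``a soliton the algorithm can currently remove''); the other soliton of $\eta$ is the $1$-soliton $\{5,6\}$. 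Now $T\eta=\cdots 0\,0\,0\,1\,1\,0\,1\,0\,0\cdots$ (balls at $3,4,6$), whose solitons are the $1$-soliton $\{5,6\}$ with head $\{6\}$ and tail $\{5\}$, and the $2$-soliton $\gamma^1=\{3,4,7,8\}$ with head $\{3,4\}=\T(\gamma)$. So the partner $\gamma^1$ of your contiguous $\gamma$ is \emph{not} contiguous: during this single $T$-step the $1$-soliton slid inside it, and the matching arcs interleave (ball $4$ is matched to hole $5$, which belongs to the $1$-soliton, while ball $6$ is matched to hole $7$, which belongs to $\gamma^1$). Consequently excision does not commute with $T$: $T(\eta\setminus\gamma)$ is the configuration on $\Z\setminus\{1,2,3,4\}$ with a single ball at $6$, whose $1$-soliton has tail $\{7\}$, whereas $(T\eta)\setminus\gamma^1$, living on $\Z\setminus\{3,4,7,8\}$, has its $1$-soliton with tail $\{9\}$; neither is the true $1$-soliton $\{5,6\}$ of $T\eta$. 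Only the \emph{heads} agree in all three configurations, and that head-only statement (heads of $T(\eta\setminus\gamma)$ equal the heads of the solitons of $T\eta$ other than $\gamma^1$, together with the existence of a soliton of $T\eta$ whose head is $\T(\gamma)$) is essentially Proposition~\ref{prop:solitontrack} all over again --- it is the hard content, not a routine check that excision and reflection are ``manifestly compatible'' on the walk.

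This interleaving is exactly the difficulty the paper's proof is engineered to avoid. Instead of peeling solitons, it runs a streaming (left-to-right) version of the TS algorithm, splits a finite configuration into the rising part, the falling part and the remainder of its first soft excursion, shows that on the first two blocks $\eta$ and $T\eta$ produce letter-by-letter opposite suffixes (so every soliton completed there has head and tail exchanged), tracks separately the single largest soliton whose new tail lands in the remainder, and closes an induction on the number of balls; general $\eta\in\cX$ is then handled by a localization argument between records. To repair your plan you would need (i) a correct substitute for (b) --- either the head-only intertwining above, or a smallest-size-first peeling order together with an argument that the minimal soliton size is preserved by $T$, which is itself part of what is being proven --- and (ii) some treatment of infinite configurations, which the proposal never addresses.
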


By the above proposition, we can track each $k$-soliton $\gamma$ in the evolution of $\eta$. For each $k$-soliton $\gamma \in \Gamma_k \eta$, call $(\gamma^t)_{t\ge 0}$ the trajectory satisfying $\gamma^0 =\gamma$, $\gamma^t\in \Gamma_kT^t\eta$
and
\begin{equation}
\label{eq:trackgamma}
\cH(\gamma^{t+1}) = \T(\gamma^t)
.
\end{equation}

\subsection{Soliton nesting and motion}

As shown in Fig.~\ref{fig:algo}, solitons can be nested inside larger solitons.
As it turns out, they are nested in a hierarchical way.
Moreover, solitons only move to the right, and they are only free to move when they are not nested inside larger solitons.
In particular, solitons can only overtake smaller solitons.
We now make these statements precise.

Let $ \gamma \in \Gamma_k \eta $ for some $ k\in\N $.
Let us denote by $x(\gamma)$ its leftmost site.
Take $ z $ as the first site to the right of $ \gamma $ such that $ z $ is either a record or belongs to another $ m $-soliton for some $ m \geq k $.
The \emph{interval spanned by $ \gamma $} is defined as $ I(\gamma) := [x(\gamma),z-1] \cap \Z $.

\begin{lemma}
\label{lem:basics}
Let $ \gamma \in \Gamma_k \eta $.
Then both the head and tail of $ \gamma $ are contained in $ I(\gamma) $.
Also, $I(\gamma)$ does not contain any record, nor any site that belongs to the head or tail of another $ m $-soliton with $ m \geq k $.
Moreover, if $\gamma' \in \Gamma_m \eta $ with $m > k$ is such that $I(\gamma)\cap I(\gamma')\ne\varnothing$, then $I(\gamma)\subset I(\gamma')$.
If $\gamma$ and $\gamma'$ are two different $k$ solitons, then $I(\gamma)\cap I(\gamma')=\varnothing$.
\end{lemma}

\begin{figure}[b]
\centering
\rule{.95\textwidth}{.5pt}
\par
\bigskip
\(
\bf
\cdots
000000\underline{\orange11\underline{\blue01}111000\underline{\red1100}\underline{\red1100\underline{\blue10}}00}00\underline{\purple1111\underline{\green000111}0000\underline{\green111000}}0
\cdots
\)
\smallskip
\caption{\small%
Here we show $ I(\gamma) $ in an example with 9 records, a 5-soliton, a 4-soliton, two 3-solitons, two 2-solitons and two 1-solitons, with one color for each size.
In this example, a 1-soliton is contained in a 2-soliton, both 2-solitons are contained in the 5-soliton, both 3-solitons are contained in the 4-soliton.
$ I(\gamma) $ is underlined with the same color as $ \gamma $, and black zeros are records.
(color online)
}
\label{fig:intervals}
\end{figure}

The interval $ I(\gamma) $ and the above properties are illustrated in Fig.~\ref{fig:intervals}.

Recall the notation $\gamma^{1}$ from Proposition~\ref{prop:solitontrack}. The following lemma says that a soliton can move forward only if it is not already nested inside a larger one.
	
\begin{lemma}
\label{lem:xgamma}
Let $ \gamma \in \Gamma_k \eta $.
If $ I(\gamma) \subseteq I(\gamma') $ for some $ \gamma' \in \Gamma_m\eta $ with $ m > k $, then $\T(\gamma^{1})=\cH(\gamma)$ and $\cH(\gamma^{1})=\T(\gamma)$; hence $ x(\gamma^1) = x(\gamma) $.
Otherwise, $\T(\gamma^{1}) \ne \cH(\gamma)$ and $ x(\gamma^1) > x(\gamma) $.
\end{lemma}

Our last observation is that smaller solitons never overtake larger ones.
\begin{lemma}
\label{lemma:comparex}
Let $ \eta\in\cX $ and suppose that, $\gamma\in \Gamma_{k} \eta$ and $\tilde\gamma\in \Gamma_{m} \eta$ for some $ m \geq k \geq 1 $. If $x(\gamma)<x(\tilde\gamma)$, then $ x(\gamma^{t}) < x(\tilde\gamma^{t}) $ for all $ t\in\N $.
\end{lemma}

These three lemmas are also proved in \S\ref{sec:postponed}.

\subsection{Asymptotic speeds}

We use $\theta$ to denote shift operators on $ \Z $, its power set $ \mathcal{P}(\Z) $ and $ \{0,1\}^\Z $. Namely,
\begin{equation}
\label{def: theta}
\theta x = x-1
,
\quad
\theta A = \{\theta x : x\in A\}
,
\quad
\theta\eta(y):= \eta(\theta^{-1} y) \text{ for } \eta\in \{0,1\}^\Z
.
\end{equation}
Let $ \mathcal{B} $ denote the Borel $ \sigma $-field of $ \{0,1\}^\Z $.
We say that a probability measure $\mu$ on $\{0,1\}^\Z$ is \emph{shift-ergodic} if $\mu$ is $\theta$-invariant and for every event $A \in \mathcal{B}$ satisfying $\theta^{-1}(A)=A$ we have $\mu(A)=0$ or $1$.
Let $\mu$ be a shift-ergodic measure on $\cX$ and
denote by
$\rho_k$ the mean number of $k$-solitons per excursion,
by $w_0=1+\sum_k 2k\rho_k$ the mean distance between records, and
by $\orho_k = \frac{\rho_k}{w_0}$ the mean number of $k$-solitons per unit space (precise definitions in \S\ref{sub:finitew}).
Recall that $x(\gamma)$ is the leftmost site of a soliton $\gamma$ and that $\gamma^t$ is the soliton $\gamma$ at time $t$.
We now state the main result concerning soliton speeds.

\begin{theorem}
\label{thm:hspeed}
Let $\mu$ be a $T$-invariant and shift-ergodic measure on $\cX$.
Then there exists deterministic speeds $(v_k)_k$, 
such that, $\mu$-a.s., for all $k\ge 1$ and $\gamma \in \Gamma_k\eta$,
\begin{align}
\label{eq:speedexists}
\lim_{t\to\infty} \frac{x(\gamma^t)}{t} &= v_k.
\end{align}
The soliton speeds $ v_k $ are finite, positive, increasing in $ k $, and satisfy the system
\begin{equation}
\label{eq:hspeedseq}
v_k = k + \sum_{m<k} 2m \orho_m (v_k-v_m) - \sum_{m>k} 2k \orho_m (v_m-v_k)
.
\end{equation}
\end{theorem}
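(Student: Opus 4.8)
\emph{Strategy.} The plan is to prove the existence of the limit in~\eqref{eq:speedexists} by an ergodic argument for the environment seen from a tagged $k$-soliton, and then to identify that limit with the right-hand side of~\eqref{eq:hspeedseq} through a collision-counting (flux) computation, using the rigidity supplied by Theorem~\ref{thm:hierarchy}.

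\emph{Existence of the limit.} First I would fix $k$ and track a single $k$-soliton $\gamma$ via Proposition~\ref{prop:solitontrack} and~\eqref{eq:trackgamma}, writing $x_t:=x(\gamma^t)$ for its leftmost site. The one-step increment $x_{t+1}-x_t$ is a measurable function $f$ of the configuration $T^t\eta$ recentred at $x_t$, i.e.\ of the environment $\eta^\ast_t:=\theta_{x_t}T^t\eta$ seen from the tagged soliton, where $\theta$ denotes the spatial shift. The structural step is to introduce the Palm measure $\Palm_k\mu$, obtained by size-biasing $\mu$ on $k$-solitons and centering at a tagged one, and to show that the sequence $(\eta^\ast_t)_t$ is stationary and ergodic under $\Palm_k\mu$. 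Stationarity follows from the $T$-invariance and shift-invariance of $\mu$ together with the soliton-tracking bijection of Proposition~\ref{prop:solitontrack}, via a mass-transport identity between $\mu$ and its Palm version; ergodicity is inherited from the shift-ergodicity of $\mu$. Granting integrability of $f$, Birkhoff's ergodic theorem then gives
\[
\frac{x_t}{t}=\frac1t\sum_{s=0}^{t-1} f(\eta^\ast_s)\xrightarrow[t\to\infty]{}\ v_k:=\mathbb E_{\Palm_k\mu}[f]
\]
$\mu$-a.s.\ and, by a standard Palm-to-stationary transfer, simultaneously for every $\gamma\in\Gamma_k\eta$, with $v_k$ deterministic. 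Integrability of $f$, hence finiteness of $v_k$, is exactly where the hypothesis $\lambda<\tfrac12$ enters: it guarantees a finite mean excursion length $w_0<\infty$, which controls the expected one-step displacement.

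\emph{The microscopic displacement identity and collision rates.} Next I would decompose the displacement into free motion plus scattering shifts. In isolation a $k$-soliton advances by exactly $k$ per step, and the rigidity of Theorem~\ref{thm:hierarchy} — the $k$-component translates as a block by an amount determined by the solitons of size $m>k$ — yields the exact accounting
\[
x(\gamma^t)-x(\gamma^0)=kt+\sum_{m<k}2m\,N_{k,m}(t)-\sum_{m>k}2k\,N_{k,m}(t),
\]
where $N_{k,m}(t)$ counts the collisions of the tagged soliton with $m$-solitons up to time $t$. The per-collision shift follows the scattering rule: in a collision between sizes $j<k$ the larger soliton advances and the smaller retreats, each by $2\min(j,k)$; thus a faster $k$-soliton overtaking an $m$-soliton with $m<k$ gains $2m$, while a faster $m$-soliton with $m>k$ overtaking ours costs it $2k$. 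It then remains to compute $\lim_t N_{k,m}(t)/t$. In the lab frame the $m$-solitons have spatial density $\orho_m$ and asymptotic speed $v_m$ while the tagged soliton moves at $v_k$, so the rate at which $m$-solitons cross it is the kinematic flux $\orho_m\,|v_m-v_k|$; making this rigorous is again an ergodic statement, now for the number of $m$-solitons swept through a space–time strip, using the almost-sure convergence of $x_t/t$ from the previous step and the spatial ergodic theorem for $\orho_m$. Since $v_k-v_m>0$ for $m<k$ and $v_m-v_k>0$ for $m>k$, substituting $N_{k,m}(t)/t\to\orho_m|v_m-v_k|$ into the identity above and letting $t\to\infty$ produces exactly~\eqref{eq:hspeedseq}.

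\emph{Main obstacle.} I expect the principal difficulty to be the structural step: establishing that the environment seen from the tagged soliton is genuinely stationary and ergodic under $\Palm_k\mu$, so that Birkhoff applies and the limit is deterministic. This demands the precise Palm calculus for the soliton point process together with the tracking bijection, and care that size-biasing on $k$-solitons is compatible with the $T$-dynamics. The second, quantitative, obstacle is the flux step: a priori the displacement mixes all soliton types, and one must show the scattering shifts are \emph{exactly} additive, with no residual error from near-simultaneous multi-soliton collisions — this is precisely what the hierarchical rigidity of Theorem~\ref{thm:hierarchy} is designed to guarantee. Finally, interchanging the limit $t\to\infty$ with the infinite sums over $m$ requires a tail estimate; here finiteness of each individual $v_k$ is already secured by the first step, while the convergence of the series and the limit exchange are handled using $\sum_m 2m\,\orho_m=\tfrac{w_0-1}{w_0}<1$, which in particular gives $\sum_m m\,\orho_m<\infty$.
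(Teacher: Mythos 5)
Your overall strategy (Palm measure of a tagged $k$-soliton plus collision counting) is the same as the paper's, but two of your steps have genuine gaps. The first concerns determinism of the limit. You assert that the environment process seen from the tagged soliton is ergodic under the Palm measure because ergodicity ``is inherited from the shift-ergodicity of $\mu$.'' This inheritance is precisely what is not automatic: shift-ergodicity of $\mu$ gives \emph{spatial} ergodicity of $\hmu_k$ under the shift to the next $k$-soliton, but it says nothing a priori about the invariant $\sigma$-algebra of the \emph{temporal} dynamics $\hT_k$. The paper never proves (nor uses) temporal ergodicity: it applies Birkhoff using only $\hT_k$-invariance, which yields an a.s.\ limit that could still be random, and then removes the randomness by a separate argument --- the field $\tv_k(\eta,\cdot)$ of limiting speeds is a shift-covariant function of $\eta$, hence spatially ergodic, and it is monotone along $\Gamma_k^\circ\eta$ because tagged $k$-solitons never overtake one another; a monotone, spatially ergodic stationary field must be a.s.\ constant. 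This non-overtaking/monotonicity argument is the missing idea in your write-up; without it (or an actual proof of temporal ergodicity, which you flag as an obstacle but do not supply) your $v_k$ remains a random variable.

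The second gap is the limit/sum interchange when infinitely many soliton sizes have positive density. Your proposed domination uses $\sum_m 2m\orho_m = (w_0-1)/w_0 < 1$, but the quantities to be dominated are $N_{k,m}(t)/t$, whose limits $\orho_m|v_m-v_k|$ involve the speeds $v_m$, which are unbounded in $m$; summability of $m\orho_m$ alone does not control $\sum_{m>k}\orho_m v_m$, so dominated convergence does not apply as stated. The paper sidesteps the interchange entirely: take expectation of the displacement identity at $t=1$ under $\hmu_k$ (where exchanging the sum with the integral is legitimate because the two sums have terms of fixed sign), obtaining
\begin{equation}
v_k = k + \sum_{m\ne k} c_{k,m} \int N_1^m(\eta)\,\hmu_k(\dd\eta),
\end{equation}
and then identify each coefficient $\int N_1^m \,\dd\hmu_k = \orho_m|v_k-v_m|$ by writing $N_t^m$ as an ergodic sum of $N_1^m$ along $\hT_k$-iterates, whose Cesàro limit is matched with the constant produced by the sandwich estimate you also describe (that estimate itself is only carried out under the assumption $\orho_m=0$ for large $m$). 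You would need to adopt this device, or else supply a genuine uniform-in-$t$ tail bound on $N_t^m/t$, to handle the case of infinitely many soliton types.
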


\begin{figure}[b]
\includegraphics[width=\textwidth]{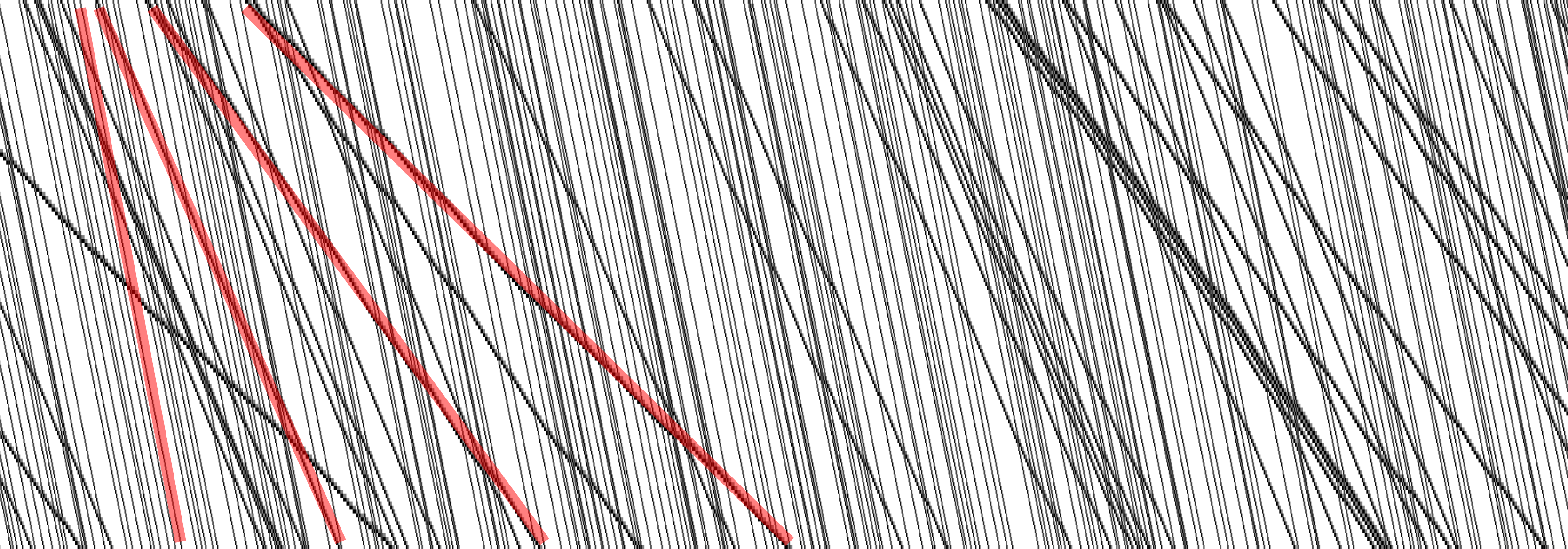}
\caption{\small%
Simulation for an i.i.d.\ configuration with density $0.15$.
The transparent red lines have deterministic slopes computed by Theorem~\ref{thm:speedsexplicit}, which have been manually shifted so that they would overlay a soliton.
This window covers 2000 sites and 140 time steps going downwards, and has been stretched vertically by a factor of $5$.
The figure in the first page is the same except for the density.
(high resolution, color online)}
\label{fig:speeds1}
\end{figure}

System~\eqref{eq:hspeedseq} comes from the following.
When a $k$-soliton is isolated, it advances by $k$ units, and when it encounters an $m$ soliton, the encounter causes it to advance $2m$ extra units if $m<k$ or 
stay put for $2$ units of time if $m>k$. 
The term $\orho_m |v_k-v_m|$ gives the frequency of such encounters as seen from a $k$-soliton.

When $ \rho_k=0 $, the soliton speed $ v_k $ does not come from~\eqref{eq:speedexists} but in principle formally from~\eqref{eq:hspeedseq}.
There is still an interpretation for $ v_k $ in terms of the dynamics, as discussed in \S\ref{sub:vertical}.

When $\rho_k>0$ for finitely many $ k $, the system has a unique solution~\cite[Lemma~5.1]{CroydonSasada20}.
We believe it is unique in general, but we have been unable to prove it.
The following partial result is proved in \S\ref{sec:postponed}.

\begin{proposition}
\label{prop:uniqueness}
If $\sum_m m \orho_m < \frac{1}{4}$, then the non-negative solution to~\eqref{eq:hspeedseq} is unique.
\end{proposition}

As a side remark, $ \sum_m m \orho_m = \lambda $, that is, the mean number of occupied boxes per unit space.
Uniqueness has not been proved to hold in general, and we show that $(v_k)_k$ is determined by the vector $ (\orho_k)_k $ under a stronger assumption in terms of soliton components (described in \S\ref{sub:invmeasures}).

\begin{theorem}
\label{thm:speedsexplicit}
If, when conditioned on having a record at $ x=0 $, $\mu$ has independent soliton components, and each component is i.i.d.,\
then the soliton speeds $(v_k)_k$ in~\eqref{eq:speedexists} are also given by the unique solution to
\begin{equation}
\label{eq:hspeedexplicit}
w_k=1+\sum_{m>k} 2(m-k) \rho_m
, \
\
\rho_k = \alpha_k w_k
, \
\
s_k = k+\sum_{m<k}2(k-m) s_m \alpha_m
, \
\
v_k = \frac {s_k}{w_k}
,
\end{equation}
and in particular they are determined by $(\rho_k)_k$.
\end{theorem}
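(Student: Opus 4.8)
The plan is to take the implicit speed equation~\eqref{eq:hspeedseq} supplied by Theorem~\ref{thm:hspeed} as the starting point and to show that the very same vector $(v_k)_k$ solves the explicit triangular system~\eqref{eq:hspeedexplicit}; since the latter will be seen to have a unique solution, the two must coincide. Merging the two sums in~\eqref{eq:hspeedseq}, I will work throughout with the symmetric form
\[
v_k = k + \sum_{m} 2\min(k,m)\,\orho_m (v_k - v_m),
\]
in which the $m=k$ term vanishes. The i.i.d.\ hypothesis enters only through two structural identities, which I will import from the slot decomposition: the combinatorial slot count $w_k = 1 + \sum_{m>k}2(m-k)\rho_m$ (valid per excursion and then averaged) and the factorization $\rho_k=\alpha_k w_k$ (a Wald/renewal identity that holds because, under independent components, the numbers of $k$-solitons attached to distinct $k$-slots are i.i.d.\ and independent of the slot geometry). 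The fast decay of the densities guarantees that all the series below converge absolutely, which I will use to justify the rearrangements.

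The crux is a conservation identity: I claim $\sum_m \rho_m v_m = \sum_m m\rho_m$, equivalently $\sum_m \orho_m v_m = \lambda$. To prove it I multiply the symmetric form by $\orho_k$ and sum over $k$, obtaining $\sum_k \orho_k(v_k-k) = \sum_{k,m}2\min(k,m)\orho_k\orho_m(v_k-v_m)$. The summand on the right is the product of the symmetric kernel $2\min(k,m)\orho_k\orho_m$ with the antisymmetric factor $(v_k-v_m)$, so the double sum vanishes and $\sum_k\orho_k(v_k-k)=0$ follows. This single line, where the symmetry of the scattering shift $2\min(k,m)$ does all the work, is what makes the identity self-contained (no extra conservation law is needed).

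With this identity in hand the reduction is purely algebraic. Set $s_k := v_k w_k$; using $\rho_m=\alpha_m w_m$ the target recursion $s_k = k + \sum_{m<k}2(k-m)\alpha_m s_m$ becomes $v_k w_k = k + \sum_{m<k}2(k-m)\rho_m v_m$. I will verify this by multiplying~\eqref{eq:hspeedseq} through by $w_0 = 1+\sum_m 2m\rho_m$, splitting $\min(k,m)$ according to $m<k$, $m=k$, $m>k$, collecting the terms that combine into $v_k(w_k-1)=v_k\sum_{m>k}2(m-k)\rho_m$, and then invoking the conservation identity (in the form $kP = k\sum_{m<k}2\rho_m v_m + 2k\rho_k v_k + k\sum_{m>k}2\rho_m v_m$, where $P=\sum_m 2m\rho_m$) to cancel the diagonal and the remaining $m>k$ contributions. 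What survives is exactly $v_k w_k = k + \sum_{m<k}2(k-m)\rho_m v_m$, i.e.\ the desired recursion.

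Finally, \eqref{eq:hspeedexplicit} is solvable in one pass and has a unique solution: given $(\rho_k)_k$, the value $w_k=1+\sum_{m>k}2(m-k)\rho_m$ is an explicit convergent sum, $\alpha_k=\rho_k/w_k$ is then determined, the relation $s_k=k+\sum_{m<k}2(k-m)\alpha_m s_m$ is lower triangular and is solved by forward induction from $s_1=1$, and $v_k=s_k/w_k$ is well defined because $w_k\ge 1$. Hence the speeds of Theorem~\ref{thm:hspeed} are precisely this unique solution and depend on $(\rho_k)_k$ alone. I expect the main obstacle to be not the algebra but the rigorous justification of the two structural identities under the i.i.d.\ assumption---particularly the factorization $\rho_k=\alpha_k w_k$, which requires a renewal argument across $k$-slots---together with the absolute-convergence bookkeeping needed to legitimize summing~\eqref{eq:hspeedseq} against $\orho_k$ in the proof of the conservation identity.
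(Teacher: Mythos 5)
Your proposal takes a genuinely different route from the paper's proof, and its algebraic core is correct. The paper (\S\ref{sub:speedexplicit}) never passes through \eqref{eq:hspeedseq}: it computes $v_k$ directly from the Palm formula \eqref{eq:anothervfrompalm}, showing $E\sum_{y}\Delta^k_\eta(y)=\alpha_k s_k$ by observing that under the reconstruction measure only the $k$-solitons appended to the record at the origin actually jump, that their number $\zeta_k(0)$ has mean $\alpha_k$ and is independent of the jump lengths, and that the jump length (tip of head to tip of tail) has mean $s_k$, proved by induction on $k$ through the word structure of nested solitons; then $v_k=\alpha_k s_k/\rho_k=s_k/w_k$. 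You instead take \eqref{eq:hspeedseq} from Theorem~\ref{thm:hspeed} as input and reduce it algebraically to \eqref{eq:hspeedexplicit}. Two remarks on this. First, the structural identities you flag as the main obstacle are not an obstacle at all: they are precisely \eqref{eq:skk1} and \eqref{eq:rhok}, proved in \S\ref{sub:finitew}, and the paper's own proof quotes exactly these for the first two equations of \eqref{eq:hspeedexplicit}. Second, I checked your reduction: writing \eqref{eq:hspeedseq} as $v_k=k+\sum_m 2\min(k,m)\,\orho_m(v_k-v_m)$, multiplying by $w_0$, using $\sum_m 2\bigl(m-\min(k,m)\bigr)\rho_m=w_k-1$ and your conservation identity $\sum_m\rho_m v_m=\sum_m m\rho_m$ indeed yields $v_kw_k=k+\sum_{m<k}2(k-m)\rho_m v_m$, which is the third and fourth equations of \eqref{eq:hspeedexplicit} once one sets $s_k:=v_kw_k$ and uses $\rho_m=\alpha_m w_m$. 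The conservation identity itself, obtained by pairing the symmetric kernel $2\min(k,m)\orho_k\orho_m$ against the antisymmetric factor $v_k-v_m$, is a nice observation absent from the paper; your route exhibits \eqref{eq:hspeedexplicit} as an algebraic consequence of \eqref{eq:hspeedseq} plus the slot identities, whereas the paper's route is an independent derivation that also supplies the probabilistic meaning of $s_k$ and $w_k$.

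The genuine gap is the convergence claim underpinning the antisymmetry cancellation. You assert that ``fast decay of the densities'' gives absolute convergence, but the standing hypothesis provides only $\sum_k k\rho_k<\infty$, and decay of $(\rho_k)_k$ alone cannot help: what is needed is $\sum_{k,m}\min(k,m)\,\orho_k\orho_m|v_k-v_m|<\infty$, hence an a priori bound of the form $\sum_k\rho_k v_k<\infty$. The only speed bound available before this point is $v_k\le w_0/\rho_k$ (end of \S\ref{sub:speedexists}), which gives $\rho_k v_k\le w_0$, not summable; indeed the faster $\rho_k$ decays, the worse that bound on $v_k$ becomes, so no decay assumption can close this. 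What you need is a growth bound such as $v_k\le k\,w_0$, which requires its own argument; for instance, drop the negative terms in \eqref{eq:displacement} to get $x(\gamma^t)\le kt+\sum_{m<k}2m N^m_t$, note that solitons never move left so every $m$-soliton counted in $N^m_t$ started, up to an $o(t)$ correction, in $[0,x(\gamma^t)]$, whence the ergodic theorem gives $\limsup_t \frac1t N^m_t\le\orho_m v_k$; since the sum over $m<k$ is finite, $v_k\le k+v_k\sum_{m<k}2m\orho_m\le k+v_k\bigl(1-\tfrac1{w_0}\bigr)$, i.e.\ $v_k\le k w_0$. With such a lemma, $\sum_k\rho_kv_k\le w_0\sum_k k\rho_k<\infty$, your double sum converges absolutely, and the rest of your argument goes through. (As written, your proof is complete only in the case where finitely many $\rho_k$ are nonzero, since then all sums are finite.)
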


In system~\eqref{eq:hspeedexplicit}, $w_k$ is the density of $k$-slots per excursion (see \S\ref{sub:components} for the definition of $k$-slot), $\alpha_k$ is the density of $k$-solitons per $k$-slot, $s_k$ is the average size of the head of a $k$-soliton, $k-m$ is the number of $m$-slots in the head of a $k$-soliton, and the factor $\frac{1}{w_k}$ is the probability that a typical $k$-soliton is free to move (see \S\ref{sec:speeds} for details).

The proof of~\eqref{eq:hspeedexplicit} uses independence properties of the components for an explicit computation of the mean jump size of a typical $k$-soliton $\gamma$ in one iteration.
By ergodicity, the mean jump size is $v_k$, the limit of $x(\gamma^t)/t$, as shown in \S\ref{sub:speedexists}.

In the setup of Theorem~\ref{thm:hspeed}, we cannot compute the mean jump size explicitly.
However, if we assume that the solution to~\eqref{eq:hspeedseq} is indeed unique, then by taking an initial measure with independent components and the same vector $ (\rho_k)_k $, we see that the vector $ (v_k)_k $ must be given by~\eqref{eq:hspeedexplicit}.
So if the solution to~\eqref{eq:hspeedseq} is indeed unique (as conjectured), the independence assumption in Theorem~\ref{thm:speedsexplicit} can be waived.

\begin{figure}[b]
\includegraphics[width=\textwidth]{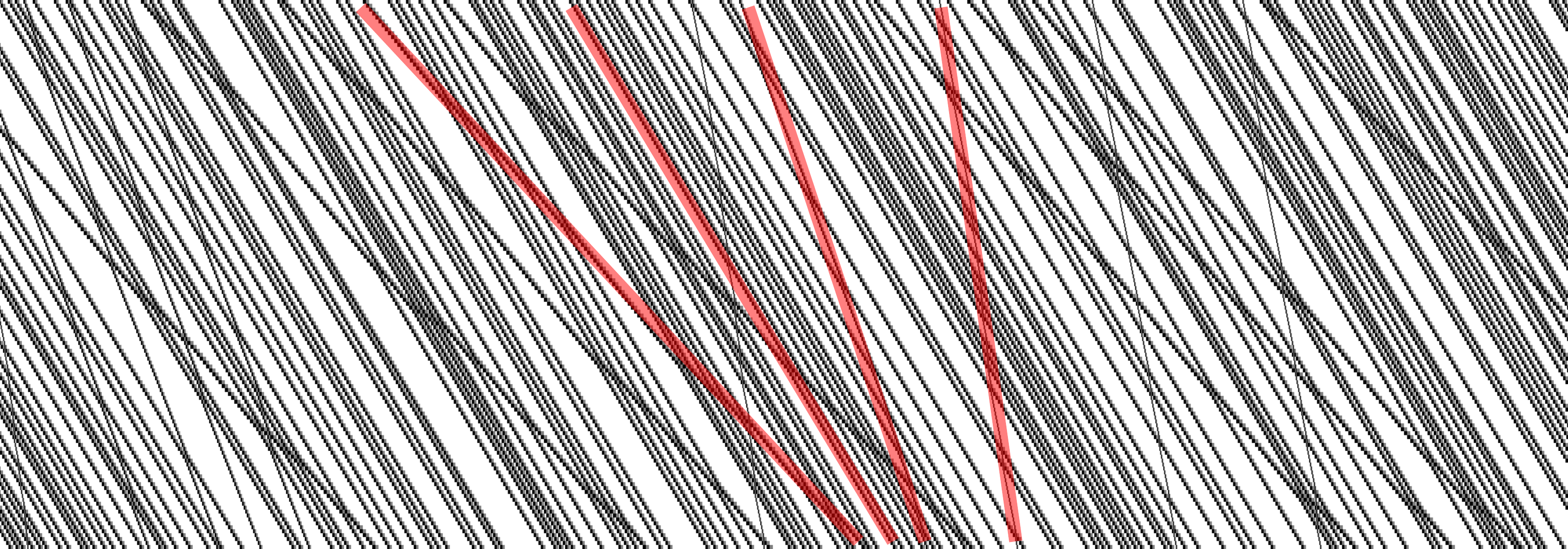}
\caption{\small%
Simulation for $(\rho_k)_k=(.006,.005 ,.1,.003,0,0,0,\dots)$.
The initial configuration was obtained by first appending one $k$-soliton with probability $\rho_k$ after each record, and then applying $T$ a number of times in order to mix.
As in Fig.~\ref{fig:speeds1} it is a 2000x140 window stretched by 5, and red lines are deterministic.
(high resolution, color online)}
\label{fig:speeds2}
\end{figure}

In practice, the soliton speeds $(v_k)_{k}$ can be computed by truncating $\rho$ (replace $\rho_k$ by $0$ for large $k$), and solving these finite recursions for $w$, $\alpha$, $s$ and finally $v$.
When the initial ball configuration consists of i.i.d.\ Bernoulli random variables, one can find $\alpha_k$ explicitly in terms of the density $\lambda$ by computing partition functions~\cite{FerrariGabrielli20}, substitute the equation for $\rho$ into that for $w$, and then compute $s$ and $v$.
Using this and the above theorem, we have found the asymptotic speeds of the solitons for the simulations shown in Figs.~\ref{fig:speeds1} and~\ref{fig:speeds2} as well as that of the first page.

\subsection{Slot decomposition}
\label{sub:components}

Recall that a $k$-soliton has a head and a tail, each one consisting of $k$ (possibly non-consecutive) sites.
We say the $j$-th box of the head or tail of an $m$-soliton is a \emph{$k$-slot} for all $k<j$ and that a record is a $k$-slot for every $k$.
Roughly speaking, the $k$-slots are the places where $k$-solitons can be appended, see \S\ref{sub:slot} for precise definitions and examples.

The set of configurations with a record at the origin are defined by
\begin{align}
\hcX:=\{\eta\in\cX:0\in R\eta\}
.
\end{align}
Assume that $\eta\in\hcX$.
Enumerate the $k$-slots from left to right in a way that the \emph{$0$-th $k$-slot} is at position $s_k(\eta,0)=0$, and let $s_k(\eta,i)$ denote the position of the $i$-th $k$-slot for $i\in\Z$.
We say that a $k$-soliton $\gamma$ is \emph{appended} to the $i$-th $k$-slot if its head and tail are contained
between $s_k(\eta,i)$ and $s_k(\eta,i+1)$.
Define the \emph{$k$-th component} of $\eta$ as the configuration $M_k\eta$
given by
\(
M_k\eta(i) := \text{number of the $k$-solitons appended to the $i$-th $k$-slot}.
\)
Define
\begin{align}
\cM:=\bigl\{\zeta= (\zeta_k)_{k\ge1}, \,\zeta_k \in (\Z_+)^\Z: \textstyle{\sum}_k \zeta_k(i)<\infty,\,\text{ for all }i\in\Z\bigr\}.
\end{align}
The next result, proved in \S\ref{sub:reconstruction}, says that we can recover $\eta$ from its components $(M_{k}\eta)_{k}$. 
\begin{theorem}
\label{bijection1}
For $\eta\in \hcX$, we have $(M_{k}\eta)_{k}\in \cM$. Moreover, the map $M:\eta\in \hcX\mapsto (M_k\eta)_k$ is invertible. 
\end{theorem}

The fundamental property of this decomposition is that it makes the BBS dynamics linear.
We show a simple case, deferring the full statement until Theorem~\ref{thm:hierarchy} in \S\ref{sec:compevol}.

\begin{theorem}
\label{simple-linear}
Suppose $ \eta \equiv 0 $ on $ \Z_- $ and $ \eta $ has infinitely many records on $ \Z_+ $.
Then, for all $k\ge 1$, 
\begin{align}
M_k T\eta =\theta^{-k} M_k\eta.
\end{align}
\end{theorem}

So, when we see a configuration $ \eta $ through its components, the $k$-th component is displaced by $k$ units, \emph{without interacting with the other components}.
In the general case, there will be larger solitons arriving from the left which will disrupt the enumeration of $k$-slots, making the statement substantially more involved.
See the details in \S\ref{sec:compevol}.

\subsection{Invariant measures}
\label{sub:invmeasures}

Using the slot decomposition and the reconstruction map $(M_{k}\eta)_{k}\mapsto \eta$ that will be described in detail in \S\ref{sub:reconstruction}, we get the following.

\begin{theorem}
\label{thm:invariant}
Let $\zeta=(\zeta_k)_{k\ge 1}$ be independent random elements of $(\Z_{+})^{\Z}$ with shift-invariant distributions satisfying $\sum_k k E[\zeta_k(0)] <\infty$ and $P(\sum_{k,i}\zeta_{k}(i)>0)=1$.
Then there exists a unique shift-invariant
probability $\mu$ on $\cX$ such that $M_k \eta \overset{d}{=}\zeta_k$ when $\eta$ is distributed with $\mu$ conditioned on $ \hcX $.
This measure $\mu$ is $T$-invariant.
If moreover $(\zeta_k(i))_{i\in\Z}$ is i.i.d.\ for each $k$, then $\mu$ is also shift-ergodic.
\end{theorem}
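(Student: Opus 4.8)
The plan is to build the measure $\mu$ explicitly via the component decomposition and then verify invariance by reducing the $T$-dynamics to the hierarchical translation described in Theorem~\ref{thm:hierarchy}. First I would set up the reconstruction map: given a family of component configurations $(\zeta_k)_{k\ge1}$, one reconstructs a ball configuration hierarchically, starting from the largest solitons and descending. The key subtlety is that the $k$-slots of $\eta$ are themselves defined in terms of the heads and tails of the $m$-solitons with $m>k$, so the components must be inserted in decreasing order of $k$: once all $m$-solitons for $m>k$ are placed, the $k$-slots become determined, and then $\zeta_k$ specifies how many $k$-solitons to append to each. The summability hypothesis $\sum_k kE[\zeta_k(0)]<\infty$ is exactly what guarantees that this infinite insertion procedure converges to a well-defined configuration in $\cX$ with finite density $\lambda<\tfrac12$, so I would first check that the resulting $\eta$ almost surely lies in $\cX$ and that the total ball density is finite.

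Next I would address existence and uniqueness of the shift-invariant $\mu$ with the prescribed Palm-type conditional law $\hmu$. The natural route is to define $\mu$ on the event $\{0\in R\eta\}$ through $\hmu$ by declaring $M_k\eta\overset{d}{=}\zeta_k$ independently across $k$, and then extend to all of $\cX$ by the standard correspondence between a shift-invariant measure and its Palm measure conditioned on a record at the origin; concretely, $\mu$ is obtained by size-biasing and re-centering, using the mean inter-record distance $w_0=1+\sum_k2k\rho_k$ (finite by the summability hypothesis) as the normalizing constant. Uniqueness follows because the component decomposition $\eta\mapsto(M_k\eta)_k$ together with the record structure is a bijection onto its image (ball configurations are reconstructible from components, as asserted in the introduction), so the joint law of the components plus the rule $0\in R\eta$ pins down $\hmu$ and hence $\mu$.

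For $T$-invariance, the crux is Theorem~\ref{thm:hierarchy}: under one step of the dynamics, the $k$-component $M_k\eta$ is merely translated by an integer amount that depends only on the $m$-soliton data for $m>k$. I would argue that the prescribed product law is preserved under such a translation. Because each $\zeta_k$ is shift-invariant, applying a deterministic shift to $M_k\eta$ leaves its marginal law unchanged; the only delicate point is that the shift applied to the $k$-component is itself a random quantity measurable with respect to the higher components $(\zeta_m)_{m>k}$. Independence across levels is what saves the argument: conditioning on all the higher components, the shift becomes a constant, the shift-invariance of $\zeta_k$ gives that the conditional law of $M_k(T\eta)$ equals that of $M_k\eta$, and independence propagates the equality to the joint law. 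One must also check that $T$ maps $\{0\in R\eta\}$ compatibly so that the Palm relation is respected after recentering, which again follows from density conservation (Lemma~\ref{lemma:densityconservation}) and the soliton-tracking of Proposition~\ref{prop:solitontrack}.

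Finally, for the ergodicity statement under the i.i.d.\ assumption on each $(\zeta_k(i))_{i\in\Z}$, the plan is to show that the reconstruction map intertwines the spatial shift on $\eta$ with a shift on the component space, and that an i.i.d.\ product measure is mixing (hence ergodic) under the shift. I expect the main obstacle to lie here rather than in the invariance: the spatial shift on $\cX$ does not correspond to a simple coordinate shift on the components, because moving the origin by one site reshuffles which slots lie to the left versus right of $0$ and changes the indexing of all components simultaneously. I would handle this by passing to the Palm measure $\hmu$, where the shift from one record to the next induces a genuine measure-preserving transformation on the component configurations, and then prove ergodicity of $\mu$ by showing that shift-invariant events for $\eta$ pull back to events measurable with respect to the i.i.d.\ components that are invariant under this induced transformation; a Kolmogorov zero--one / tail-field argument across the independent levels $k$ then forces such events to be trivial. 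Verifying that this induced transformation is ergodic when each level is i.i.d., and that no information leaks between levels through the slot structure, is the technically demanding step.
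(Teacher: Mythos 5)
Your overall architecture coincides with the paper's: reconstruct $\eta$ hierarchically from the components (the map $M^{-1}$ of \S\ref{sub:reconstruction}), let $\hmu$ be the law of $M^{-1}\zeta$, pass to $\mu=\Palm_{R}^{\Z}(\hmu)$ by Palm inversion, and prove $T$-invariance by combining Theorem~\ref{thm:hierarchy} with conditioning on the higher components; your third paragraph is essentially the paper's proof of $\hT$-invariance in \S\ref{sub:invariance}, transported through Lemma~\ref{lemma:harris}, and your uniqueness argument (invertibility of $M$ on $\hcW$ plus the Palm bijection) is also the paper's. However, there is a gap in the prerequisites for the Palm inversion: it requires both that $\hmu$ be $\htheta$-invariant and that $\hmu(\ww)<\infty$, and neither is free. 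The $\htheta$-invariance needs its own conditioning argument (the paper shows the number $n_k(\vep^0)$ of $k$-slots consumed by the first excursion is determined by $(\zeta_m)_{m>k}$, hence independent of $\zeta_k$). More seriously, you assert $w_0=1+\sum_k 2k\rho_k<\infty$ ``by the summability hypothesis'', but the hypothesis controls $\alpha_k=E[\zeta_k(0)]$, the mean number of $k$-solitons \emph{per $k$-slot}, whereas $\rho_k=\alpha_k w_k$ involves the unknown mean number $w_k$ of $k$-slots per excursion, which blows up as the insertion of large solitons creates slots for smaller ones. The paper devotes all of \S\ref{sub:finitew} to this point: the $w_k$ satisfy the recursion \eqref{eq:skk1}, shown to have a finite solution by identifying $w_k$ with expected absorption times of an auxiliary absorbing Markov chain, plus a truncation and monotone-convergence argument. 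Without this step your normalizing constant could be infinite and $\mu$ undefined.

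The second gap is in the ergodicity step, where your proposed mechanism points in the wrong direction. A ``Kolmogorov zero--one / tail-field argument across the independent levels $k$'' cannot close the argument: $\htheta$-invariant events need not lie in the tail over $k$ (an event such as ``the empirical density of $\zeta_1$ exceeds $\tfrac12$'' is invariant under the induced transformation yet depends on a single level), and Remark~\ref{rmk:counter-examples} exhibits components that are independent over $k$ and shift-ergodic at each level yet produce a non-ergodic $\mu$ --- so no argument using only cross-level independence together with per-level shift-invariance can succeed. What the i.i.d.-in-$i$ hypothesis actually buys, and what the paper uses, is that the reconstructed excursions $(\vep^j)_{j\in\Z}$ between successive records form an i.i.d.\ sequence (each excursion consumes a disjoint block of entries of every $\zeta_k$, of length determined by the levels above), so that $\hmu$ is $\htheta$-ergodic; Lemma~\ref{lemma:palmergodic} then transfers this to $\theta$-ergodicity of $\mu$. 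You set up the correct transfer (Palm measure plus induced transformation) but leave its ergodicity open, and the cross-level zero--one law you propose for closing it would not do so.
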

The above theorem says that the family of invariant measures for this dynamics is at least as large as the family of sequences of states of $k$-soliton configurations.
In particular, given a sequence $(\alpha_k)_k$ specifying the density of $k$-solitons in the $k$-th component,
we can construct an infinite number of mutually singular shift-invariant and $T$-invariant laws $\mu$ on $\cX$, all having the same specified component densities.

The extra assumption needed in Theorem~\ref{thm:speedsexplicit} is that $\mu$ be of the above form, i.e., conditioning on $ \hcX $, each $k$-th component is i.i.d.\ and they are independent over $k$.
In this case, we can also study the speed of tagged records and the speed of solitons measured in terms of tagged records, see \S\ref{sub:vertical}.
As pointed out above, this condition should not be necessary for Theorem~\ref{thm:speedsexplicit} to hold.

We should also note that the converse of Theorem~\ref{thm:invariant} is false. In particular, there exist invariant measures which are not constructed from independent components;
see Examples~\ref{exp1} and~\ref{exp2} as well as Remark~\ref{rmk:conjecture} in \S\ref{sub:muconstruct}.

\section{Slot decomposition}
\label{sec:solitons}

\begin{figure}[b]
\centering
\includegraphics[width=.8\textwidth]{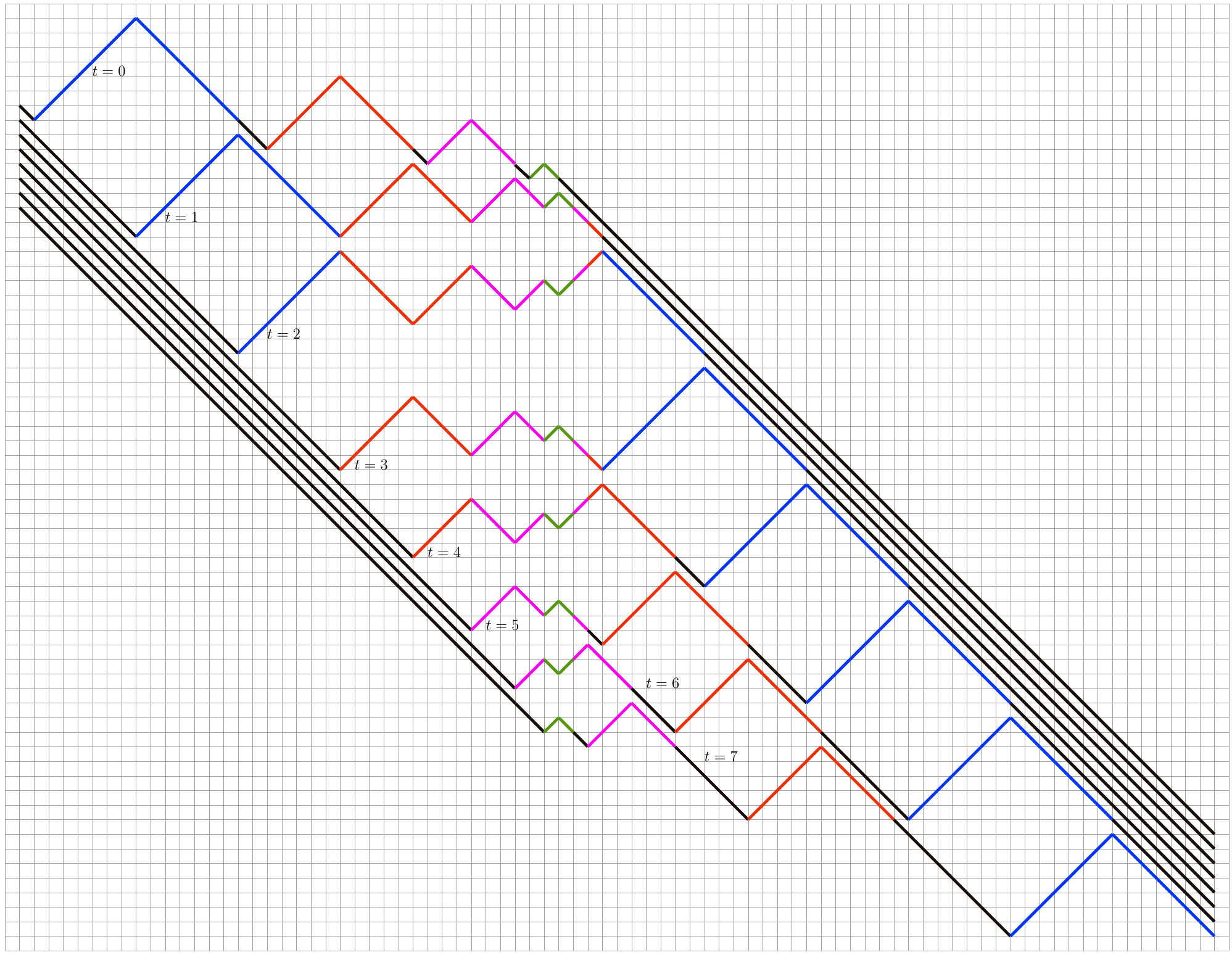}
\caption{\small%
Time-evolution of a walk under seven iterations of $T$.
This example has four solitons,
of size 7, 5, 3 and 1.
Different colors are used to highlight their conservation.
To facilitate view we have shifted the walk at time $t$ by $t$ units down. (color online)}
\label{fig:conservation}
\end{figure}

Let $\xi=\xi[\eta]$ be a walk on $\Z$ that jumps one unit up at $x$ when there is a ball at $x$ and jumps one unit down when box $x$ is empty.
That is,
\begin{align}
\xi(x)-\xi(x-1) = 2\eta(x)-1
.
\end{align}
Note that for each $\eta$, such a walk $\xi[\eta]$ is not unique and only defined up to a vertical shift. 
We define records for a walk $\xi$ in the usual sense, that is,
we say that $x$ is a \emph{record} for $\xi$ if
$\xi(z)> \xi(x)$ for all $z<x$.
Let $ \cW_\lambda $ the the set of all simple walks $ \xi $ such that $ \eta[\xi] \in \cX_\lambda $, which is the set of walks $ \xi $ such that
\begin{equation}
\label{eq:slope}
\lim_{x\to\pm\infty} \frac{\xi(x)}{x} = 2 \lambda - 1,
\end{equation}
and $\cW = \cup_{0< \lambda <1/2 }\cW_{\lambda}$.
Then every $\xi \in \cW$ satisfies
$\min\limits_{y\le x}\xi(y) \in \Z$ for all $x \in \Z$, and we define
\begin{align}
\label{eq:Txi}
T\xi(x) := 2 \min_{y\le x} \xi(y) - \xi(x) = \big[\min_{y\le x}\xi(y)\big]-\Big[\xi(x)-\min_{y\le x}\xi(y)\Big].
\end{align}
This amounts to reflecting the walk $\xi$ with respect to its running minimum.
It is worth mentioning that the above transformation for Brownian motion was studied by Pitman~\cite{Pitman75}.
The way that $T$ acts on $\cW$ is illustrated with an example in Fig.~\ref{fig:conservation}. 

One can see $\xi$ as a \emph{lift} of $\eta$ which includes an arbitrary choice of vertical shift, or equivalently an arbitrary labeling of records in increasing order.
Conversely, $\eta[\xi]$ is unambiguously defined by $\eta(x)=\frac{1+\xi(x)-\xi(x-1)}{2}$.
Consider the following diagram:
\begin{displaymath}
\xymatrixcolsep{3pc}
\xymatrix{
\xi
\ar[r]^{T}
\ar@/^.5pc/@{->}[d]^{\mathcal{P}}
& T\xi
\ar@/^.5pc/@{->}[d]^{\mathcal{P}}
\\
\eta
\ar[r]^{T}
\ar@/^.5pc/@{->}[u]^{\mathcal{L}}
& T\eta
\ar@/^.5pc/@{->}[u]^{\mathcal{L}}
}
\end{displaymath}
Notice that the above definition of record coincides with the one given in~\eqref{eq:reta} and~\eqref{t2} is equivalent to~\eqref{eq:Txi}. Therefore, this diagram commutes except that the lifting $\mathcal{L}$ misses uniqueness while the projection $\mathcal{P}$ cancels such non-uniqueness.
They are analogous to the derivative and indefinite integral where the latter comes with an indeterminate additive constant.
If a property is insensitive to the choice of the lift $\xi[\eta]$, then it is in fact a property of $\eta$, even if is described in terms of $\xi$.
For instance, 
for $ \lambda < \frac{1}{2}$, $ T $-invariance of $\cX_\lambda$ is equivalent to the $T$-invariance of $\cW_{\lambda}$, which follows immediately from~\eqref{eq:slope} and~\eqref{eq:Txi}.
Note that properties of $\eta$ always translate to $\xi$, for instance $\Gamma_m \xi$ means simply $\Gamma_m \eta[\xi]$, etc.
However, some of the objects considered in this section do depend on the lift $\xi$.

\subsection{Slots and components}
\label{sub:slot}

\begin{figure}[b]
\centering
\includegraphics[width=.32\textwidth,page=1]{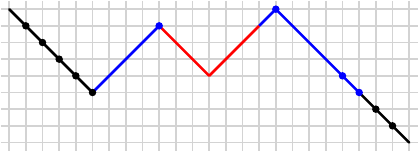}
\hfill
\includegraphics[width=.32\textwidth,page=2]{figures/slots-intro}
\hfill
\includegraphics[width=.32\textwidth,page=3]{figures/slots-intro}
\\
\includegraphics[width=.32\textwidth,page=4]{figures/slots-intro}
\hfill
\includegraphics[width=.32\textwidth,page=5]{figures/slots-intro}
\hfill
\includegraphics[width=.32\textwidth,page=6]{figures/slots-intro}
\caption{\small%
A red $ 3 $-soliton can be appended to a blue $ 5 $-soliton in $ 2 \times (5-3) = 4 $ different places, represented by blue dots.
It is also possible to append it to records, represented by black dots.
Attempting to insert a $ 3 $-soliton at a site not marked by a dot will 
result in erroneous soliton identification. For instance, the 3-soliton in the middle bottom plot 
should actually start 3 sites earlier, while in the right bottom plot we should have a 2 and 6-soliton instead of 3 and 5. 
The green crosses indicate that the coloring is inconsistent with the procedure shown in Fig.~\ref{fig:algo}.
(color online)}
\label{fig:slots-intro}
\end{figure}

\begin{figure}[b]
\centering
\includegraphics[width=.9\textwidth]{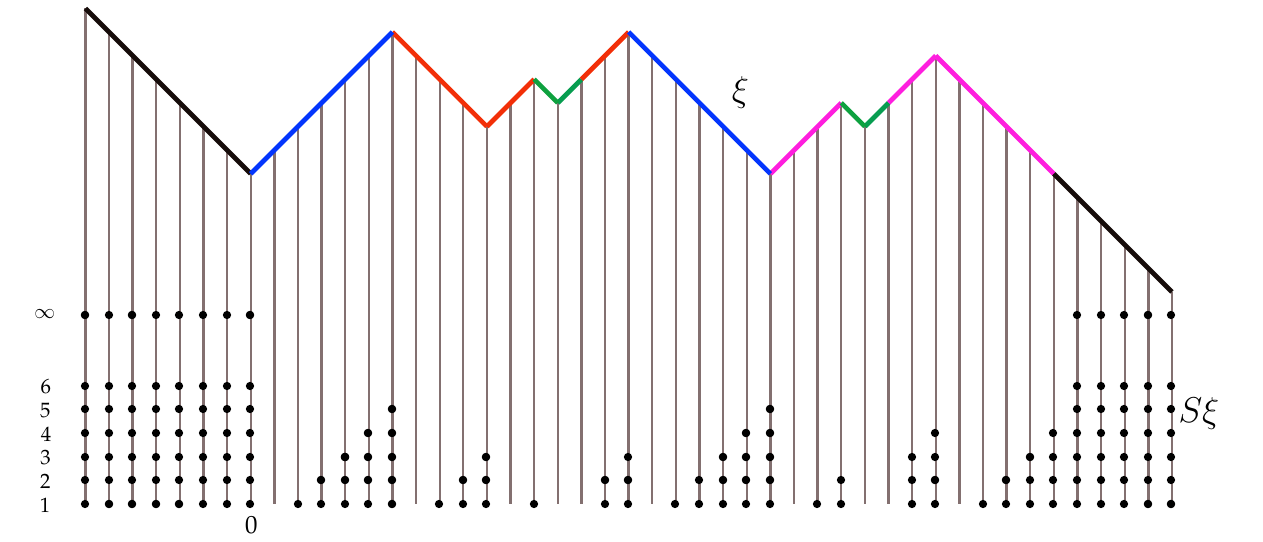}
\caption{\small%
Slot configuration of a walk $\xi$.
Different colors correspond to different solitons; records are painted in black. For each site, the number of dots below it indicates its level in slots: there being $k$ dots means a $k$-slot. (color online)}
\label{fig:slots1}
\end{figure}

\begin{figure}[b]
\centering
\includegraphics[width=.8\textwidth]{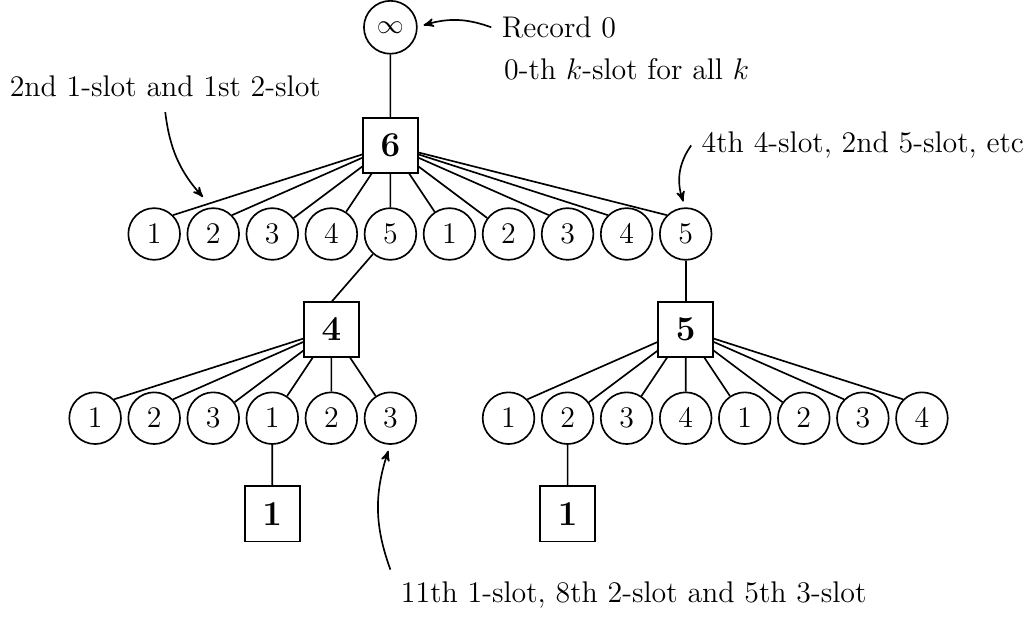}
\caption{\small%
An illustration of how the solitons are nested inside bigger solitons via slots, in the same sample configuration as in Fig.~\ref{fig:slots1}.
Solitons are represented by squares and slots by circles. For each $k \ge 1$, each slot with index $m\ge k$ is a $k$-slot.
We say it is the $n$-th $k$-slot, where $n$ is determined by counting how many $k$-slots appear before it in the depth-first order, and the counting starts from the 0-th $k$-slot present at Record~0.
}
\label{fig:slots-tree}
\end{figure}

We now describe how solitons can be nested inside each other via what we call \emph{slots}.
Intuitively, the idea of $k$-slot is that it marks a place where a $k$-soliton can be inserted without interfering with the rest of the configuration in terms of the Takahashi-Satsuma algorithm, see Fig.~\ref{fig:slots-intro}.

Let $\gamma\in \Gamma_{k}(\xi)$ be a $k$-soliton for the walk representation $\xi$.
We label the sites in the head (resp.~in the tail) of $\gamma$ from left to right: $\cH(\gamma)=\{\cH_1(\gamma), \dots, \cH_k(\gamma)\}$ (resp.~$\T(\gamma)= \{\T_1(\gamma),\dots,\T_k(\gamma)\}$).
The \emph{slot configuration} $S\xi:\Z\to \{0,1,2,\dots\}\cup\{\infty\}$ is defined by
\begin{align}
S\xi(x) :=
\begin{cases}
m-1,&\hbox{if } x = \T_m(\gamma) \text{ or } \cH_m(\gamma), \hbox{ for }\gamma \in\Gamma_k\xi \hbox{ and } k \ge m,\\
\infty,&\hbox{if } x \hbox{ is a record for } \xi.
\end{cases}
\end{align}
For each $k\ge 1$ we say that $x$ is a \emph{$k$-slot} for $\xi$ if $S\xi(x)\ge k$.

Note that a record is a $k$-slot for all $k$, and an $m$-soliton contains a number $2m-2k$ of $k$-slots, see Fig.~\ref{fig:slots1}.
Since every $\xi\in \cW$ has infinitely many records, it also has infinitely many $k$-slots.

For $j\in \Z$, the position of the record at level $-j$ will be called \emph{Record}~$j$ and denoted as
\begin{align}
\label{eq:r111}
r(\xi,j) := \min\{x\in\Z: \xi(x)=-j\}.
\end{align}
This is the leftmost site where the walk $\xi$ takes the value $-j$.
If $\xi\in\cW$, we have $r(\xi,j) \in \Z$ well-defined for all $j\in\Z$.
The site $r(\xi, 0)$ will play a central role in the sequel.
Note that $T\xi(x) \le \xi(x)$ by~\eqref{eq:Txi}, so $r(T\xi, j)\le r(\xi, j)$ for all $j$.

We label all the $k$-slots in increasing order: $\cdots<s_{k}(\xi, -1)<s_{k}(\xi, 0)<s_{k}(\xi, 1)<\cdots$, where $s_k(\xi,0):=r(\xi,0)$.
The set of $k$-slots of $\xi$ is denoted $S_k\xi:= \{s_k(\xi,i):{i\in\Z}\}$.
We then say that a $k$-soliton $\gamma$ is \emph{appended} to the $i$-th $k$-slot $s_{k}(\xi, i)$ if $\gamma\cap [s_k(\xi,i),s_k(\xi,i+1)-1]\ne \varnothing$.
Observe that if that is the case, we necessarily have $\gamma\subset [s_k(\xi,i)+1,s_k(\xi,i+1)-1]$, as a consequence of Lemma~\ref{lem:basics} and the fact that $k$-slots can only be records or sites of solitons with larger size.
It follows that each $k$-soliton in $\xi$ is appended to a unique $k$-slot. On the other hand, it is possible to have multiple $k$-solitons appended to a single $k$-slot.
Finally, we let $M_k\xi(i)$ be the number of $k$-solitons appended to the $i$-th $k$-slot and call $M_{k}\xi=(M_{k}\xi(i))_{i\in \Z}$ the \emph{$k$-th component} of $\xi$. For instance, in the example of Fig.~\ref{fig:slots1}, we have
\(
M_6\xi(0) = 1,
\
M_5\xi(2) = 1,
\
M_4\xi(2) = 1,
\
M_1\xi(9) = 1,
\
M_1\xi(18) = 1,
\)
and $M_k\xi(i) = 0$ otherwise.
See also Fig.~\ref{fig:slots-tree} for an illustration on how the solitons are nested inside each other via slots.

\subsection{Reconstructing the configuration from the components}
\label{sub:reconstruction}

\begin{figure}[b]
\centering
\includegraphics[clip,width=0.9\textwidth]{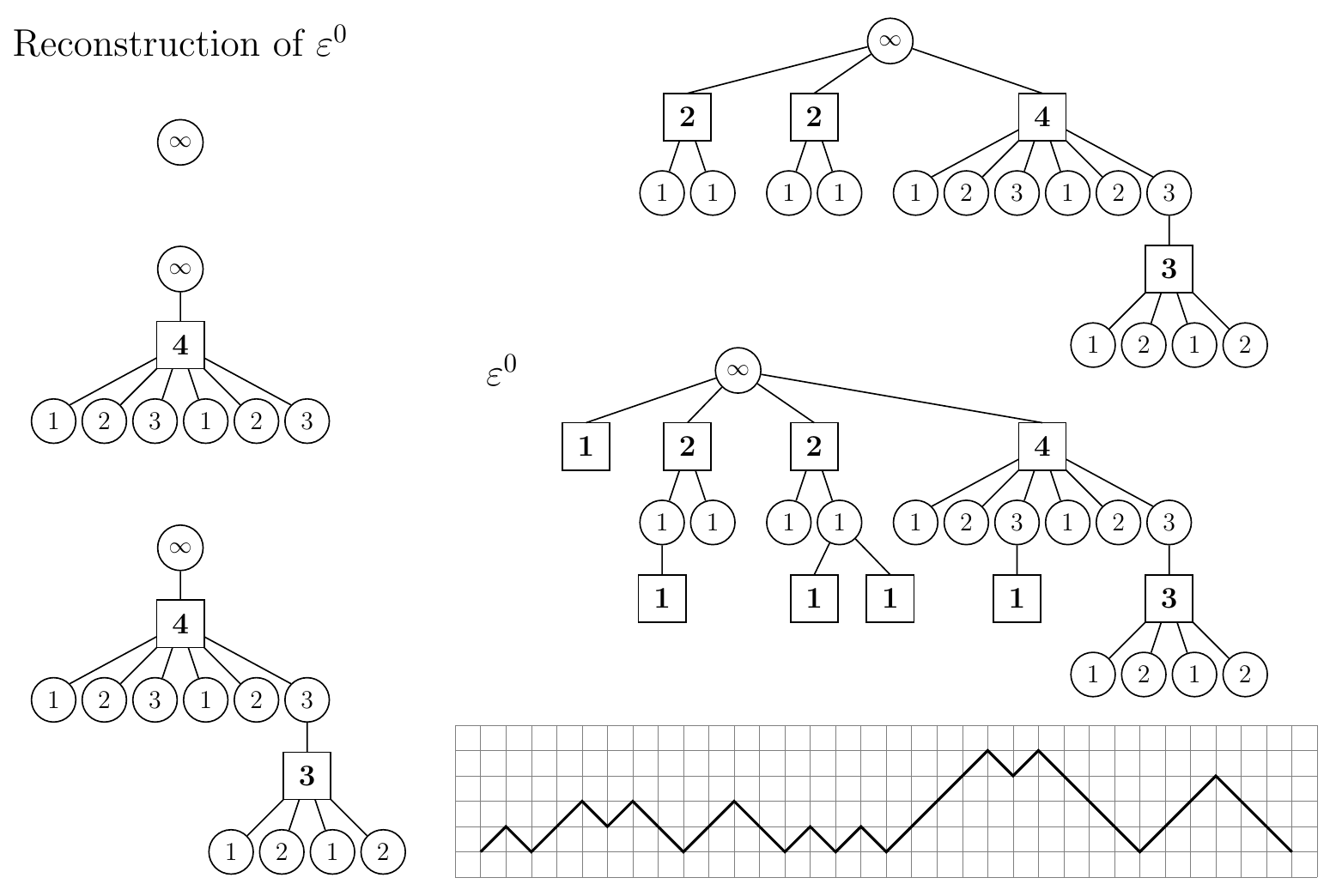}
\par
\vspace{1em}
\includegraphics[clip,width=0.9\textwidth]{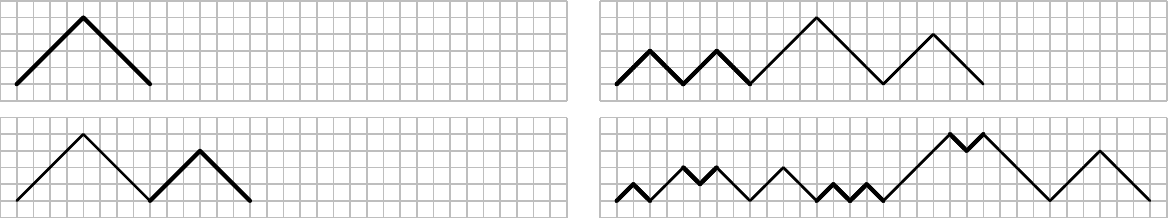}
\caption{\small%
Reconstruction algorithm for a single excursion. This example is obtained using the field $\zeta$ shown in Fig.~\ref{fig:zeta-2}.}
\label{fig:reconstruction-3}
\end{figure}

\begin{figure}[b]
\centering
\includegraphics[clip,width=0.9\textwidth]{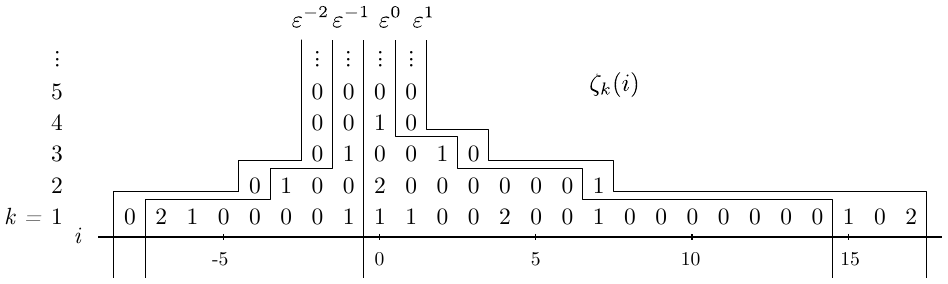}
\par
\includegraphics[clip,width=0.95\textwidth]{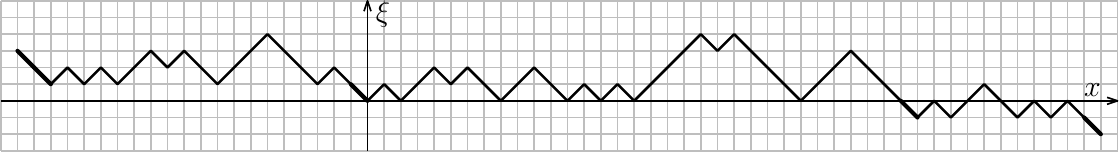}
\caption{\small%
Reconstruction of $\xi$ from $\zeta$.
In the lower part we show Records~$-2$ to~$2$ in boldface and the excursions between them.
Above we show the parts of the field $\zeta$ that used in the reconstruction of $\vep^{-2},\vep^{-1},\vep^{0},\vep^{1}$.
Reconstruction of $\vep^0$ was shown in Fig.~\ref{fig:reconstruction-3} and $\vep^{1},\vep^{-1},\vep^{-2}$ is shown in Fig.~\ref{fig:reconstruction-4}.}
\label{fig:zeta-2}
\end{figure}

\begin{figure}[b]
\centering
\includegraphics[clip,width=0.9\textwidth]{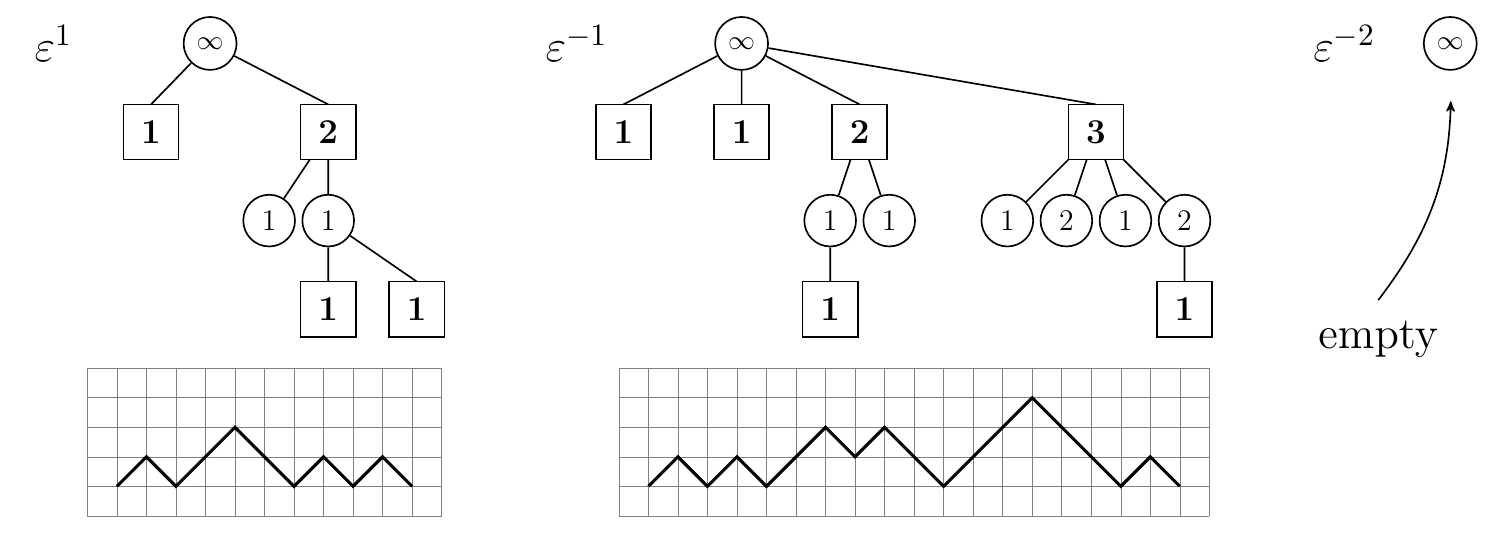}
\caption{\small%
Reconstruction algorithm for other excursions. The procedure is the same as in Fig.~\ref{fig:reconstruction-3} but all the intermediate steps are omitted.}
\label{fig:reconstruction-4}
\end{figure}

Here we prove Theorem~\ref{bijection1}. 
In this subsection only, we will work with a larger set of configurations than $\mathcal W$.
Let $\mathcal W_{*}$ be the set of walks $\xi$ such that $r(\xi, j)$ is well-defined for all $j\in \Z$.
Even though $\mathcal W_{*}$ is not preserved by $T$, the Takahashi--Satsuma algorithm still applies, as all the excursions of $\xi\in\mathcal W_{*}$ are finite.
So the $k$-th component $M_{k}$ described in \S\ref{sub:slot} is also defined for all $\xi\in \mathcal W_{*}$. On the other hand, since the slot decomposition $\xi \mapsto M\xi$ is insensitive to horizontal shifts, it is not possible to determine $\xi$ knowing $(M_k\xi)_{k\ge 1}$. So we introduce
\begin{align}
\hcW_{*}:= \{\xi\in\cW_{*}: r(\xi,0)=0\}.
\end{align}
({As a side remark, $ \cW_* $ is the set of simple walks $ \xi $ such that $ \lim_{x\to-\infty} \xi(x)=+\infty $ and $ \liminf_{x\to+\infty} \xi(x) = -\infty $.
We can build a configuration $\xi\in\cW_*$ with $ T\xi\not\in\cW_* $ by appending a soliton of height $2^n$ between Records $ -n $ and $ -n+1 $, so that $ \liminf_{x\to-\infty} T\xi(x) = -\infty $.})

Note that, unlike the lift $\xi[\eta]$ from $\cX$ to $\cW$ which was not unique, for $\eta$ in $ \hcX $
there is a unique lift $\xi[\eta]$ which is in $\hcW_{*}$.
We denote this unique lift by $\bxi[\eta]$. So the maps $ \eta \mapsto M_k \eta $ as seen in Theorem~\ref{bijection1}
are given by $M_k\eta = M_k\bxi[\eta]$.
But to remain consistent with the previous subsections, we continue to work with $\xi$ instead of $\eta$.

Denote by $M$ the map $\xi \mapsto M\xi :=(M_k\xi)_{k\ge 1}$.
We now show that $M: \hcW_{*}\to \mathcal M$ is invertible.

The \emph{height} of the excursion between Record~$j$ and Record~$j+1$, denoted as $m(j)$, is defined as $0$ if the excursion is empty or as the largest $k$ such that a $k$-soliton is contained in the excursion.
Denote also $i_k(j)$ the label of the $k$-slot located at Record~$j$.
Since $k$-slots can only be created by solitons of larger sizes, we note that
\begin{align}
\label{eq:height}
m(j) &=
\min\{k\ge 0: M_{k'}\xi(i_{k'}(j))=0 \hbox{ for all }k'>k\} \in \{0,1,2,\dots\}.
\end{align}
Both $m(j)$ and $i_k(j)$ depend on $\xi$ but we omit it in the notation.
Denote the slot decomposition $M\xi$ of $\xi$ by $\zeta$, that is,
\[
\zeta = (\zeta_k)_k = M \xi
.
\]
Note that, for each $\xi \in \cW_{*}$ and $j\in\Z$, $i_k(j)=r(\xi,j)$ for all $k\ge \max\{m(0), \dots, m(j)\}$.
Since $m(j)$ is finite, we have that
\begin{equation}
\label{eq:finitezeta}
\text{ for each } i\in\Z,\
\zeta_k(i) = 0
\text{ for all large }
k.
\end{equation}
Namely, $\zeta\in \mathcal M$.
Conversely, suppose $\zeta = (\zeta_k)_{k\ge 1} \in \mathcal M$, so in particular $\zeta$ satisfies~\eqref{eq:finitezeta}.
We first give an algorithm which permits to reconstruct the excursion of $\xi$ between Records 0 and 1. 
To that end, we introduce the following notation: denote the number of $k$-slots in the excursion $\vep$ between successive records $y_1<y_2$ by
\begin{align}
\label{sup-vep}
n_k(\vep) := 1 + |S_k \xi \cap \{y_1+1,\dots, y_2-1\}|,
\end{align}
where the term $1$ refers to the record $y_1$ preceding $\vep$ and the second term counts the number of $k$-slots belonging to $m$-solitons of $\vep$ with $m>k$.
Here is the algorithm:

\begin{algorithm}[H]
\label{alg:reconstruct}
Let $\vep$ be the empty configuration. \\
Let $m:= \min\{k\ge0: \zeta_{k'}(0)=0 \hbox{ for all }k'>k\}$ \\
\For{$k = m,m-1,\dots,2,1$}{
Let $n_k := n_k(\vep)= \#\{x \in S_k \vep : r(\vep,0)\le x<r(\vep,1) \}$, as in~\eqref{sup-vep} \\
\For{$i=0,1,\dots,n_k-1$}{
Insert a number $\zeta_{k}(i)$ of $k$-solitons in the $i$-th $k$-slot of $\vep$, that is, to the right of site $x=s_k(\vep,i)$; boxes to the right of $x$ are shifted further right in order to accommodate the insertion of these $k$-solitons \\
This produces an updated configuration $\vep$ \\
}
}
\end{algorithm}

Note that $m$ is well-defined by~\eqref{eq:finitezeta}.
In case $m=0$, the algorithm produces an empty configuration. 
Let us also note that when inserting solitons in the above algorithm, there is only one way to do it consistently with the soliton decomposition: if a soliton is inserted to the right of the site $x$ with $\eta(x)=0$, then it has its head on the left and tail on the right (i.e.~$11\cdots100\cdots0$); otherwise, it is inserted with its tail on the left and head on the right (i.e.~$00\cdots011\cdots1$). 
The procedure is illustrated in Fig.~\ref{fig:reconstruction-3}.
Call $\vep^0$ the excursion between Record~$0$ and Record~$1$ obtained from $ \zeta $ as just described.
Construct $\vep^1$, the excursion between Records~1 and~2, using the same algorithm but with the data $\zeta^1 = (\zeta^1_k)_{k\ge 1}$, where each component is given by
\[
\zeta^1_k = \big( \zeta_k(n_k+i) \big)_{i \geq 0},
\]
which consists of the entries of $\zeta$ with non-negative indices $i$ not used in the reconstruction of $\vep^0$.
Note that $\zeta^1$ also satisfies~\eqref{eq:finitezeta}.
Iterate this procedure to construct an infinite sequence of excursions $(\vep^j)_{j=0,1,2,\dots}$.
See Fig.~\ref{fig:reconstruction-4}.

To reconstruct the configuration to the left of Record~$0$, that is, to obtain the excursions $\vep^j$ with negative $j$, we use an analogous algorithm that uses the entries of $\zeta$ with $i$-indices starting at $-1$ and moving left instead of starting at $0$ and moving right.
First take $\zeta^{-1} = (\zeta^{-1}_k)_{k \geq 1}$ where each component is given by $\zeta^{-1}_k = \big( \zeta_k(i) \big)_{i < 0}$ and use $\zeta^{-1}$ to construct $\vep^{-1}$.
Then define $\zeta^{-2} = (\zeta^{-2}_k)_{k \geq 1}$ where each component is given by $\zeta^{-2}_k = \big( \zeta_k(i-n_k) \big)_{i < 0}$ and use it to construct $\vep^{-2}$.
Iterate this procedure to construct an infinite sequence of excursions $(\vep^j)_{j=-1,-2,\dots}$.

Put Record~$0$ at the origin and concatenate the excursions with one record between each pair of consecutive excursions.
This yields a walk denoted as $\xi^*$, shown in Fig.~\ref{fig:zeta-2}. Note that in $\xi^{*}$, all the excursions are finite and therefore $r(\xi, j)$ is finite for all $j\in \Z$. Namely, $\xi^{*}\in\hcW_{*}$.

Call $M^{-1}:\zeta \mapsto \xi^*$ the resulting transformation.
We claim that $M^{-1}$ is the inverse map of $M$, that is, $M^{-1} M\xi = \xi$ for $\xi\in\hcW_{*}$ and $MM^{-1}\zeta=\zeta$ for $\zeta\in \mathcal M$.
The second identity follows from the fact that $M_{k}\xi^{*}=\zeta_{k}$ for each $k$.
Now let $\vep$ be an excursion of $ \xi $.
For $k\ge 0$, denote by $\vep_{[k]}$ the ball configuration obtained by removing all the boxes belonging to an $\ell$-soliton with $\ell\le k$.
Then $\vep_{[0]}=\vep$ and $\vep_{[k]}$ is the empty excursion for $k$ sufficiently large. Now observe from the previous definitions that $M_m\vep_{[k]}=M_m \vep$ for all $m>k$, because $m$-slots are only created by solitons of sizes larger than $m$.
So the reconstruction algorithm correctly finds $\vep_{[k]}$ from $M_{k}\vep$ and $\vep_{[k+1]}$, hence it correctly finds $\vep$.

This shows that $M: \hcW_{*}\to \mathcal M$ is invertible, and so is its restriction to a subset $M: \hcW\to M(\hcW)\subset \mathcal M$, where $\hcW:=\{\xi[\eta]\in \hcW_{*}: \eta\in \hcX\}$.
This proves Theorem~\ref{bijection1}.

Let us also point out that the image set $M(\hcW)$ is not simple to characterize. However, as we will see below, if we sample $\zeta$ with an appropriate measure, the resulting (random) element $M^{-1}\zeta$ does belong a.s.\ to $\hcW$.

\section{Evolution of components}
\label{sec:compevol}

\begin{figure}[b]
\centering
\includegraphics[width=.98\textwidth]{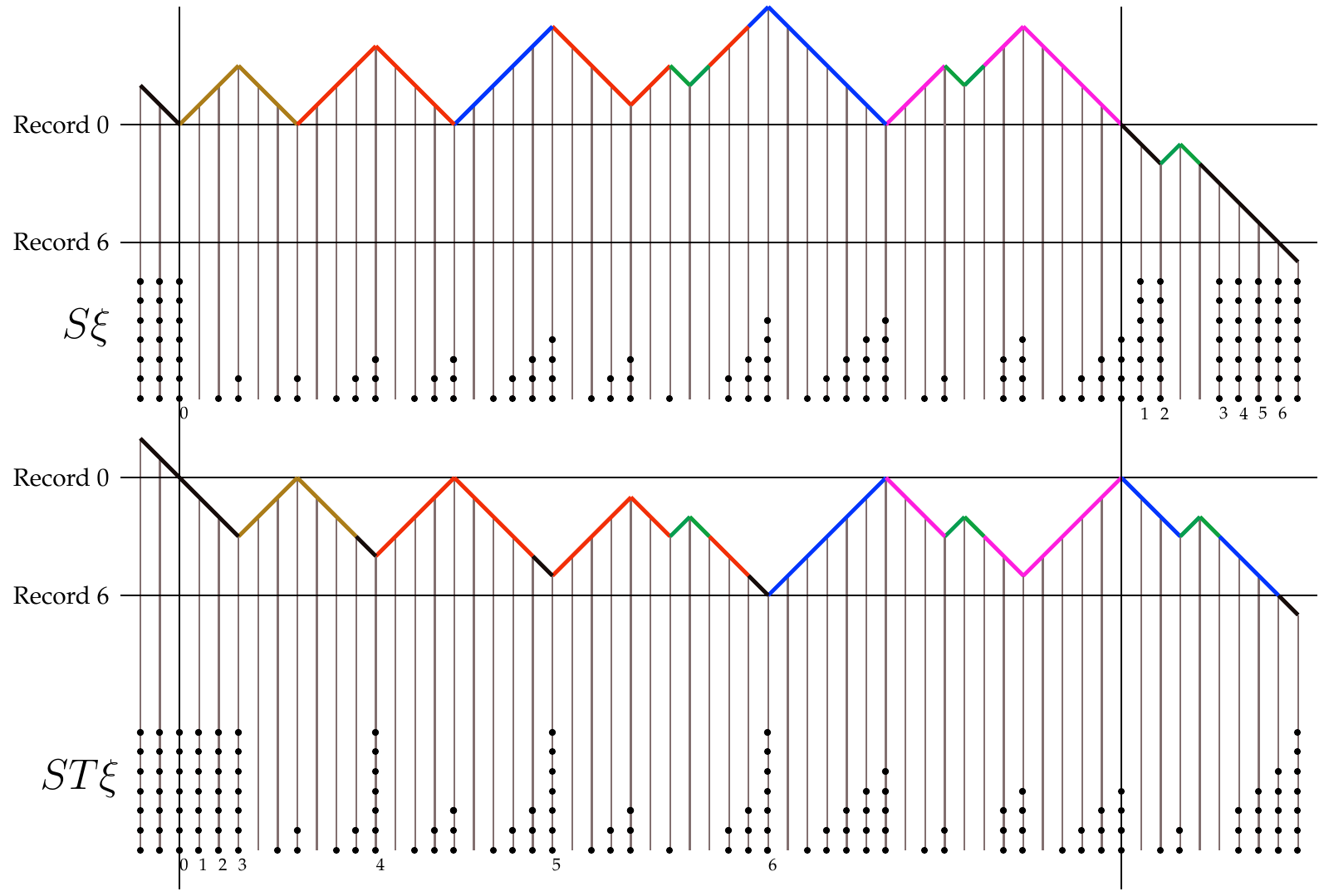}
\caption{\small%
We depict $T\xi$ below $\xi$. This example illustrates the various situations discussed in the proof of Theorem~\ref{thm:hierarchy}. For instance, the $5$-soliton (in pink) stays put and is nested in the $6$-soliton (in blue).
The displacement of the 6-soliton brings 5 ``new'' 5-slots to the left of the 5-soliton.
(color online)}
\label{fig:movingslots}
\end{figure}

Here we prove a stronger version of Theorem~\ref{simple-linear}. 
Recall that we can track a tagged soliton $\gamma$ after $t$ iterations of $T$ by~\eqref{eq:trackgamma}.
In order to describe how the BBS dynamics act on the components,
we introduce the flows of solitons as follows.
First, let
\begin{align}
\label{eq:solitonflow}
J^t_m \xi := \# \big\{ \gamma\in\Gamma_m\xi : \gamma\subset(-\infty,r(\xi,0))\hbox{ and } \gamma^t \subset[r(T^t\xi,0),\infty)\big\}.
\end{align}
In words, $J^{t}_{m}\xi$ counts the number of $m$-solitons which were to the left of Record $0$ initially and are found to be on the right of Record $0$ at time $t$. 
We say such solitons \emph{cross Record $0$.} 
Note that this is not the same as ``crossing the origin'' because Record~$ 0 $ is itself moving left.

Note that $ J^1_m \xi=0 $ for all large $ m $.
Indeed, using Proposition~\ref{prop:solitontrack} one can see that no more than $ r(\xi,0)-r(T\xi,0) $ solitons can cross Record $0$ at time $1$, 
so $ \sum_m J^1_m \xi < \infty $.
By the same argument, $ \sum_m J^2_m \xi < \infty $, and so on.

We now define an observable $o^t_k(\xi)$ which counts the flow of $k$-slots through Record~$0$
after $t$ iterations of $T$.
Because each $m$-soliton crossing Record~$0$ from left to right carries $2(m-k)$ $k$-slots, we define
\begin{align}
\label{itk}
o^t_k(\xi) := \sum_{m>k} 2(m-k) J^t_m \xi<\infty.
\end{align}

Using the observable $o^t_k(\xi)$, we define the \emph{tagged $0$-th $k$-slot at time $t$} by
\begin{align}
\label{eq:stk}
s^t_k(\xi,0) &:= s_k(T^t\xi,o^t_k(\xi)),
\end{align}
which is the position of the $o_k^t$-th $k$-slot counting from Record~$0$ of $T^t\xi$.
More generally, the \emph{tagged $i$-th $k$-slot at time $t$} is defined as
\begin{align}
s^t_k(\xi,i) &:= s_k(T^t\xi,o^t_k(\xi)+i).
\end{align}

We now state one of the central results of this paper.
In some sense, it shows that the action of $ T $ on the component configuration $ (M_k \xi)_{k \in \N} $ has a simple form, analogue to a Jordan form or upper triangular form for a linear map.
The $k$-th component is shifted by $k$ units, plus possibly a number resulting form the re-indexing of $k$-slots caused by larger solitons crossing Record~$ 0 $.
\begin{theorem}
\label{thm:hierarchy}
Let $\xi\in\cW$ and $t\in\{1,2, 3, \dots\}$. Then, for any $k$-soliton $\gamma$ of $\xi$,
\begin{align}
\label{eq: slot-shift}
\# \Big\{ {i\in\Z} : \gamma \subset \big( -\infty,s_k(\xi,i) \big) \hbox{ and } \gamma^t \subset \big[s^t_k(\xi,i),\infty\big) \Big\} = kt,
\end{align}
that is, between times $0$ and $t$ a tagged $k$-soliton moves $kt$ unit to the right in terms of tagged $k$-slots, 
or equivalently, the right-to-left flow of tagged $k$-slots through a tagged $k$-soliton is exactly $kt$. 
For each $k\in \N$, the $k$-soliton component of $T^t\xi$ is a shift of the $k$-soliton component of~$\xi$:
\begin{align}
\label{eq:M_kdynamic}
M_k T^t \xi(i) = M_k\xi\big(i - o^t_k(\xi)-kt\big)
,
\end{align}
where $o^{t}_{k}(\xi)$ is defined in~\eqref{itk}.
Moreover, the offset $o_k^t(\xi)$ is determined by $(M_m \xi)_{m>k}$, so that $M_{k}T^{t}\xi$ is a function of $(M_m \xi)_{m \geq k}$.
\end{theorem}

Another way to interpret the theorem is based on the notion of \emph{$k$-bearer}, which we introduce now. 
Let $\pi \in S_k\xi \subseteq \Z$.
Then $\pi = s_k(\xi,j)$ for some $j$, and we define the corresponding $k$-bearer at time $t$ to be 
\begin{equation}
\label{eq:trackslot}
\pi^{k,t} := s_k \big( T^t\xi,o^t_k(\xi)+kt+j \big)
.
\end{equation}
An immediate consequence of Theorem~\ref{thm:hierarchy} is as follows: a $k$-soliton $\gamma$ is appended to the $k$-slot located at $\pi=s_k(\xi,j)$ if and only if $\gamma^{t}$ is appended to the $k$-slot located at $\pi^{k, t}$.
In particular, since the definition of a tagged soliton depends only on $\eta[\xi]$ according to Proposition~\ref{prop:solitontrack}, this implies that $\pi^{k,t}$ also only depends on $\eta[\xi]$.
We note that the difference between the tagged slots introduced before and bearers introduced just now is the factor of $kt$ related to the motion of $k$-solitons.
So a tagged $k$-soliton crosses $k$ tagged $k$-slots per unit time whereas it does not cross $k$-bearers (in fact it just follows one of them).
As a side remark, the notion of $k$-bearer allows us to define the soliton speed $v_{k}$ even if $\rho_{k}=0$ by replacing $x(\gamma^{t})$ with $\pi^{k, t}$ in~\eqref{eq:speedexists}.

\begin{proof}[Proof of Theorem~\ref{thm:hierarchy}]
We start by showing how the second statement follows from the first one.
Since every $k$-soliton crosses exactly $k$ tagged $k$-slots at each step,
the number of $k$-slots between any pair of tagged $k$-solitons is conserved by $T$.
Hence the $k$-th component as seen from the tagged $0$-th $k$-slot just shifts $k$ $k$-slots per unit time, while the term $o^t_k(\xi)$ accounts for the relabeling of $k$-slots caused by bigger solitons crossing Record~0.

For the proof of the first statement, it suffices to consider the case $t=1$. Let us first assume that there is only a finite number of balls in $\xi[\eta]$. Without loss of generality, we can also assume $r(\xi, 0)=0$. For $x\in \Z_{+}$, let us define $D_{k}(\xi, x)=\# S_{k}T\xi\cap [0, x] - \# S_{k}\xi\cap [0, x]$, namely the change in the number of $k$-slots found in $[0, x]$ after one iteration of $T$.
Let us recall the notation $\cH_{i}(\gamma)$ which stands for the $i$-th site of the head of a soliton $\gamma$.
We introduce the set $\mathcal D_{k}(\xi, x)=\{z\in [0, x]: \exists\,\gamma \text{ s.t. } z=\cH_{i}(\gamma), 1\le i\le k, \text{ and } \T_{i}(\gamma^{1})>x\}$.
Let us observe that if such a soliton $\gamma$ exists, then necessarily $x(\gamma^{1})> x(\gamma)$; otherwise, $\gamma$ merely swaps its head and tail, and we will have $\T_{i}(\gamma^{1})=\cH_{i}(\gamma)$ in that case.
Therefore, $\#\mathcal D_{k}(\xi, x)$ counts the number of slots in $[0, x]$ which have order $<k$ and belong to the heads of solitons that move to the right of $x$ after one iteration.
See Fig.~\ref{fig:movingslots} for an example.
We claim that for all $x\in \Z_{+}$ and all $k\ge 1$,
\begin{equation}
\label{pf-thm6}
D_{k}(\xi, x)=\# \mathcal D_{k}(\xi, x).
\end{equation}
To prove this, let us first suppose that $\eta$ consists of one soliton, say at the sites $[0, 2m)$.
Then the slot configuration $ST\xi$ is obtained from $S\xi$ by swapping the labels of the sites in $[0, m-1]$ with those in $[2m, 3m-1]$.
One readily checks that in this case $D_{k}(\xi, x)$ is tent shaped: linear with slope $1$ on $[0, k\wedge m-1]$, constant on $[k\wedge m-1, 2m]$ and null on $[2m+m\wedge k, \infty)$. This can be similarly checked for $\mathcal D_{k}(\xi, x)$ and the claimed identity holds in this case.

For a general configuration $\xi$, the sites in $\mathcal D_{k}(\xi, x)$ do not necessarily become records in $T\xi$, as they can be claimed by other solitons. However, we will see that even if this happens, it will only affect where the ``new'' $k$-slots are located but not the quantity $D_{k}(\xi, x)$. Proceeding with the proof of~\eqref{pf-thm6}, let us suppose that it holds for configurations with up to $p\ge 1$ solitons and let $\xi[\eta]$ be a configuration with $p+1$ solitons. Let $m\ge 1$ be the smallest size of the solitons in $\xi$ and let $\gamma$ be its leftmost $m$-soliton. We note that $\cH(\gamma)$ and $\T(\gamma)$ is back-to-back in $\xi$ (i.e~no other solitons lodged inside $\gamma$) as a consequence of Lemma~\ref{lem:basics}. Then we can write $I(\gamma)=\cH(\gamma)\cup\T(\gamma)=[a, a+2m)$ with $a=x(\gamma)$. Let us introduce $\tilde\xi(x)=\xi(x+2m)$ for all $x\ge a$ and $\tilde\xi(x)=\xi(x)$ otherwise. Namely, $\tilde\xi$ is the configuration obtained by removing $\gamma$. By noting that the operator $T$ consists in flipping portions of $\xi$ between consecutive records, we deduce that
\[
T\xi(x)=T\tilde\xi(a-1)+\sum_{a\le y\le x}(1-2\eta(y)), \quad \text{ if } x\in [a, a+2m),
\]
and $T\xi(x)=T\tilde\xi(x-2m)$ if $x\ge a+2m$, and $T\xi(x)=T\tilde\xi(x)$ otherwise.
In words, $T\xi$ is obtained by inserting a ``flipped'' version of $\gamma$ into $T\tilde\xi$ at site $a$.

To check~\eqref{pf-thm6}, let us first note that if $x(\gamma^{1})=x(\gamma)$, then $ST\xi$ is simply obtained by inserting the slot configurations of the sites in $\gamma^{1}$ into $ST\tilde\xi$, so that $D_{k}(\xi, x)=D_{k}(\tilde\xi, x)$ for $x<a$, $D_{k}(\xi, x+2m)=D_{k}(\tilde\xi, x)$ for $x\ge a$ and $D_{k}(\xi, x)=D_{k}(\tilde\xi, a-1)$ for $a\le x< a+2m$.
In other words, putting $\gamma$ back does not modify $D_{k}(\xi, x)$ in real terms.
Similar relationship holds between $\mathcal D_{k}(\xi, x)$ and $\mathcal D_{k}(\tilde\xi, x)$. In that case,~\eqref{pf-thm6} follows from the induction hypothesis.

Now suppose $x(\gamma^{1})\ne x(\gamma)$. From Lemma~\ref{lem:xgamma} and the fact that $\gamma^{1}$ is the smallest soliton in $T\xi$, we deduce the only possibility to be $\gamma^{1}=[a+m, a+3m)$ with $\cH(\gamma^{1})$ preceding $\T(\gamma^{1})$. Appealing again to the conservation of solitons, we see that the configuration $T\xi$ is obtained by inserting an $m$-soliton into $T\tilde\xi$ at the site $a+m$. It follows that $ST\xi$ is obtained by inserting the slot configurations of the sites in $\gamma^{1}$ into $ST\tilde\xi$ at the site $a+m$.
In particular, the slot configurations of $T\xi$ and $T\tilde\xi$ coincide up to the site $a+m$. For $x\in [a+m, a+2m)$, note that
$D_{k}(\xi, x)=D_{k}(\xi, a+m-1)+\# S_{k}T\xi\cap [a+m, x)- \# S_{k}\xi\cap [a+m, x)=D_{k}(\xi, a+m-1)$, as the slot configuration in $[a+m, a+2m)$ is unchanged. On the other hand, one can readily check $\mathcal D_{k}(\xi, x)=\mathcal D_{k}(\xi, a+m-1)$. Finally, let us check~\eqref{pf-thm6} for $x\in [a+2m, a+3m)$. Writing $i=x-a-2m+1$ and letting $A_{x}$ be the event $S\xi(x)< k$, we note that
$D_{k}(\xi, x)-D_{k}(\xi, x-1)=-\one\{i\le k\}+\one\{A_{x}\}$.
On the other hand, when comparing $\mathcal D(\xi, x)$ and $\mathcal D(\xi, x-1)$, we see that if $i\le k$, then $\mathcal D(\xi, x)$ loses one element, namely $z=\cH_{i}(\gamma)$, since $x=\T_{i}(\gamma^{1})$; but if $A_{x}$ occurs, then $x=\cH_{j}(\tilde\gamma)$ for some other soliton $\tilde\gamma$ and $j\le k$, so that $x\in \mathcal D_{k}(\xi, x)$.
Putting all the pieces together, we conclude that~\eqref{pf-thm6} also holds in this case, which completes its proof.

To see why~\eqref{pf-thm6} leads to the first statement in the theorem, let $\gamma$ be a $k$-soliton to the right of $r(\xi, 0)=0$. Let us check $D_{k}(\xi, x(\gamma^{1}))=k$.

Firstly, suppose $x(\gamma^{1})=x(\gamma)=a$. By Lemma~\ref{lem:xgamma}, this can only happen if $I(\gamma)\subset I(\gamma')$ for some $\gamma'$. Let $\tilde\gamma$ be the largest soliton among such $\gamma'$, which exists by virtue of Lemma~\ref{lem:basics}. Since $\gamma$ has to be appended to a $k$-slot, and since it can be checked there is no $k$-slot in $[x(\tilde\gamma), \cH_{k}(\tilde\gamma)]$, we deduce that $\cH_{1}(\tilde\gamma)< \dots<\cH_{k}(\tilde\gamma)<a$. Noting that $\T(\tilde\gamma^{1})$ is on the right of $\max I(\tilde\gamma)-1$, we see that $\cH_{1}(\tilde\gamma), \dots, \cH_{k}(\tilde\gamma)\in \mathcal D_{k}(\xi, a)$.
On the other hand, $\tilde\gamma$ is the only soliton in $ \xi $ containing $\gamma$ that is appended to a record, and therefore is also the only one that moves forward. If $\widehat\gamma$ is another soliton satisfying $x(\widehat\gamma)=\min I(\widehat\gamma)<a<\max I(\widehat\gamma^{1})$, then
we must have $I(\widehat\gamma)\cap I(\gamma)=\varnothing$. In particular, $\gamma$ is nested in the tail of $\widehat\gamma^{1}$, that is, $a>\T_{1}(\widehat\gamma^{1})$. Moreover, the first $k$-slot after $\T_{1}(\widehat\gamma^{1})$ is $\T_{k+1}(\widehat\gamma^{1})$. This implies $a\ge \T_{k+1}(\widehat\gamma^{1})$. We then conclude with $D_{k}(\xi, a)=\#\mathcal D_{k}(\xi, a)=k$.

Secondly, suppose $a'=x(\gamma^{1})>x(\gamma)$. Let us show $\mathcal D_{k}(\xi, a')=\{\cH_{1}(\gamma), \dots, \cH_{k}(\gamma)\}$ in this case. Indeed, if $\widehat\gamma\ne \gamma$ is a soliton with $x(\widehat\gamma)<a'<\max I(\widehat\gamma^{1})$, as $\gamma$ is appended to a record, we must have $I(\widehat\gamma)\cap I(\gamma)=\varnothing$ and $\widehat\gamma$ has size $>k$. As previously, we deduce that $\gamma^{1}$ is appended to a $k$-slot inside the tail of $\widehat\gamma^{1}$, so that $a'\ge \T_{k+1}(\widehat\gamma^{1})$. It follows $\cH_{i}(\widehat\gamma)\notin \mathcal D_{k}(\xi, a')$ for each $i\le k$, and therefore $D_{k}(\xi, a')=k$.

Now if $r(T\xi, 0)=r(\xi, 0)$, then Proposition~\ref{prop:solitontrack} and Lemma~\ref{lem:basics} imply that $o^{1}_{k}(\xi)=0$ for all $k$. In the case that $x(\gamma^{1})=x(\gamma)$, the first statement in the theorem readily follows. If, on the other hand, $x(\gamma^{1})>x(\gamma)$, denote by $\mathcal T_{1}$ the first site in the tail of $\gamma$; then according to Proposition~\ref{prop:solitontrack} and Lemma~\ref{lem:xgamma}, we must have $x(\gamma^{1})=\mathcal T_{1}$. However, $[x(\gamma), \mathcal T_{1}]$ can only contain solitons of size $\le k$ by Lemma~\ref{lem:basics}, from which it follows that $S_{k}\xi\cap [x(\gamma), x(\gamma^{1}))=\varnothing$. So the first statement in the theorem also holds in this case. 
If $r(T\xi, 0)<r(\xi, 0)$, then we need to incorporate a further shift into the indices of $k$-slots which is precisely given by $o^{1}_{k}(\xi)$. If $\gamma$ is to the left of $r(\xi, 0)$, the arguments are similar: it suffices to note that after each iteration, the soliton loses $k$ $k$-slots on the right.

To extend to an infinite configuration $\xi\in \cW$, let us take $x\in \Z$ and let $r(x)$ be the rightmost record of $\xi$ preceding $x$. We note that $T\xi|_{[r(x), x]}$ only depends on $\xi|_{[r(x), x]}$ in the sense that if $\xi'$ is a finite configuration with $\xi'(y)=\xi(y)$ for all $y\in [r(x), x]$ and $r(x)$ being a record of $\xi'$, then $T\xi(y)=T\xi'(y)$ for all $y\in [r(x), x]$. Since solitons are contained in excursions, we also have $S\xi(y)=S\xi'(y)$ for $y\in [r(x), x]$. Let $r'(x)=r(T\xi, -\xi(r(x)))\le r(x)$ be the record in $T\xi$ which replaces $r(x)$. Then the same argument implies that $ST\xi(x)$ only depends on $T\xi|_{[r'(x), x]}$, which in its turn only depends on $\xi|_{[r(r'(x)), x]}$.
Therefore, if we take a large enough box $[-n, n]$ containing $[r(r'(x)), x]$ as well as $r(r(T\xi, 0))$, then the restriction of $\xi$ to this box is enough to determine the truncated $k$-slot configurations $(s_{k}(\xi, i))_{i\le i_{0}}$ with $i_{0}=\max\{i: s_{k}(\xi, i)\le x\}$ and $(s_{k}(T\xi, i))_{i\le i'_{0}}$ with $i'_{0}=\max\{i: s_{k}(T\xi, i)\le x\}$. Hence, the first statement of the theorem also holds for a general $\xi\in\cW$.

For the third statement, suppose $\xi,\xi'\in\cW$ satisfy $(M_m\xi :m>k)=(M_m\xi':m>k)$; 
let us show that $J^{1}_{m}\xi=J^{1}_{m}\xi'$ for all $m> k$, from which it will follow $o^t_k(\xi)= o^t_k(\xi')$ by~\eqref{itk}. 
We have seen that there exists some $k_{0}(\xi)\in \N$ (resp.~$k_{0}(\xi')\in \N$) such that $J^{1}_{m}\xi=0$ for all $m\ge k_{0}(\xi)$ (resp.~$J^{1}_{m}\xi'=0$ for all $m\ge k_{0}(\xi')$). Taking $k_{0}=\max(k_{0}(\xi), k_{0}(\xi'))$, we see that $J^{1}_{m}\xi=J^{1}_{m}\xi'=0$ for all $m\ge k_{0}$. For $k=k_0-1$, we immediately deduce $o^1_k(\xi)=o^{1}_{k}(\xi')=0$ by~\eqref{itk}.  Moreover, we have $J^{1}_{k}\xi$ counting the number of $k$-solitons crossing Record $0$, namely, the $k$-solitons appended to a $k$-slot with a negative index at time $0$ and then appended to a positive-indexed $k$-slot at time $1$. Thus, 
$J^{1}_{k}\xi = \sum_{i=0}^{k-1}M_{k}T\xi(i)=\sum_{i=-k}^{-1}M_{k}\xi(i)$, by the second statement of the theorem. A similar identity holds for $J^{1}_{k}\xi'$. Therefore, if $k=k_{0}-2$, we have $J^{1}_{m}\xi=J^{1}_{m}\xi'$ for all $m>k$ and so  $o^t_k(\xi)= o^t_k(\xi')$. Proceeding by downward induction, the same will be true for $k=k_0-2,k_0-3,\dots,2,1$.
This concludes the proof of Theorem~\ref{thm:hierarchy}.
\end{proof}

\section{Invariant measures}
\label{sec:measures}

It is known that stationary Markov chains on $ \{0,1\} $ with density of 1's less than $\frac12$ are $T$-invariant,\footnote
{The operator $T$ applied to walks $\xi$ is equivalent to the $2M-X$ Pitman operator. Using reversibility and Burke arguments,~\cite{HamblyMartinOConnell01} show that product measures and stationary Markov chains with density less than $\frac12$ are $T$-invariant. Extensions and a discussion of the relation between BBS and the Pitman operator can be found in~\cite{CroydonKatoSasadaTsujimoto18,CroydonSasada19,croydon2020discrete}.}
but in fact there are many other invariant measures for the BBS. This is due to the existence of many conservation laws intrinsic to this dynamics, in particular the conservation of solitons studied in the previous section

In this section we prove Theorem~\ref{thm:invariant}.
More precisely, we show explicitly how invariant measures can be constructed by specifying the distribution of each $k$-th component $\zeta_k$.

We will refer to probability measures as simply \emph{measures}.
We will also refer to measurable functions as \emph{random elements}, and refer to the push-forward of a pre-specified measure
by such functions
as the \emph{law} of these random elements.
A measure $\mu$ on $\cX$ is \emph{$T$-invariant} if $\mu \circ T^{-1} = \mu$.

\subsection{Construction of the measures}
\label{sub:muconstruct}

Our recipe to produce invariant measures uses the construction described in \S\ref{sub:reconstruction}, and it gives a distribution $\hmu$ of configurations ``seen from a typical record.''
The proof of Theorem~\ref{thm:invariant} is based on properties of $\hmu$ and how they relate to the the dynamics.
This relationship is given by the Palm theory, which we briefly recall now.

Let $\eta\in \{0,1\}^\Z$.
From~\eqref{eq:reta} and~\eqref{def: theta}, we have $\theta R \eta = R \theta \eta$.
Suppose $\eta\in\cX$, and denote
\[
\ww(\eta) := \inf\{x\ge 1: x \in R\eta\}
.
\]
If $ \eta $ is random with law $ \hmu $
such that $\hmu(\ww):=\int \ww(\eta)\,\hmu(\dd\eta) <\infty$, its \emph{inverse-Palm measure} $\mu=\Palm_{R}^{\Z}(\hmu)$ is defined as follows.
For every test function $\varphi$,
\begin{align}
\label{eq:palm1}
\int \varphi(\eta) \,\mu(\dd\eta)
:=
\int
\frac
{\textstyle\sum_{i=0}^{\ww(\eta)-1} \varphi(\theta^i\eta)}
{\ww(\eta)}
\cdot
\frac{\ww(\eta)}{\hmu(\ww)}
\hmu(\dd\eta)
.
\end{align}
Informally, to sample a configuration distributed as $\mu$ one can first sample a configuration using the distribution $\hmu$ biased by the length of the first excursion (when $ 0\in R\eta $), and then choose a site uniformly from this excursion (along with the record preceding it) to place the origin.

For $ \eta \in \cX $,
we define the \emph{record-shift} operator $\htheta$ as
\[
\htheta \eta := \theta^{\ww(\eta)}\eta
\]

We also define the \emph{dynamics seen from a record} $\hT:\hcX\to\hcX$ by
\begin{align}
\label{eq:defht}
\hT \eta := \theta^{r(T \bxi[\eta],0)}T\eta,
\end{align}
where $\bxi[\eta]$ was defined in \S\ref{sub:reconstruction} as the unique lift of $\eta$ with Record $0$ at $0$.
In words, $ \hT $ means apply $ T $ and recenter the configuration at the new position of Record~0.
The following lemma follows from standard properties of Palm measures and is proved in \S\ref{sec:palm}

\begin{lemma}
\label{lemma:harris}
Let $\hmu$ be a probability measure of $\{0, 1\}^{\Z}$.
Suppose that $\hmu(0\in R\eta)=1$, $\hmu(\ww)< \infty$, $\hmu$ is $\htheta$-invariant and $\hmu(R\eta=\Z)=0$.
Then $\hmu$ is supported on $\hcX$, and $\Palm_{R}^{\Z}(\hmu)$ is $\theta$-invariant, supported on $\cX$ and satisfies
\[
\Palm_{R}^{\Z}(\hmu) \circ T^{-1} = \Palm_{R}^{\Z}(\hmu \circ \hT^{-1})
.
\]
If moreover $\hmu$ is $\htheta$-ergodic, then $\Palm_{R}^{\Z}(\hmu)$ is $\theta$-ergodic.
\end{lemma}

We now introduce the assumptions and notation used throughout the rest of this section.

Let $\zeta=(\zeta_k)_{k\ge 1}$ be a sequence of independent random elements of $(\Z_{+})^{\Z}$ with shift-invariant distributions.
Let $ P $ and $ E $ denote the underlying probability measure and expectation, and suppose the law of $ \zeta $ satisfies $\sum_k k E[\zeta_k(0)] <\infty$ and $P(\sum_{k,i}\zeta_{k}(i)>0)=1$.
In particular, this implies
the random field $\zeta$ a.s.\ satisfies~\eqref{eq:finitezeta} by Borel-Cantelli.
Take $\xi=M^{-1} \zeta$ as the walk reconstructed from $\zeta$ according to the algorithm described in \S\ref{sub:reconstruction} and depicted in Fig.~\ref{fig:zeta-2}.
Let $\hmu$ denote the resulting law of $\eta=\eta[\xi]$.

\begin{proposition}
\label{prop:invariant}
The measure $\hmu$ defined above is $\hT$-invariant and $\htheta$-invariant, and it also satisfies $\hmu(\ww)<\infty$ and $\hmu(R\eta=\Z)=0$.
If moreover $(\zeta_k(i))_{i\in\Z}$ is i.i.d.\ for each $k$, then $\hmu$ is also $\htheta$-ergodic.
\end{proposition}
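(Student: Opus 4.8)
The plan is to establish the three assertions of Proposition~\ref{prop:invariant} separately, building on the component dynamics of Theorem~\ref{thm:hierarchy} and the reconstruction map $M^{-1}$ from \S\ref{sub:reconstruction}. The key conceptual point is that $\hmu$ is defined as the law of $\eta[M^{-1}\zeta]$ under independent shift-invariant $\zeta_k$, so all three properties should reduce to statements about how the operations $\htheta$ and $\hT$ act on the component field $\zeta = M\eta$.

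\textbf{Integrability.} I would treat $\hmu(\ww)<\infty$ first, as it is the most computational and is needed to invoke Lemma~\ref{lemma:harris}. The first excursion $\vep^0$ is reconstructed from the entries $\zeta_k(0),\dots,\zeta_k(n_k-1)$, and its length $\ww$ is a sum over $k$ of contributions proportional to the number of $k$-solitons appended in that excursion, each soliton of size $k$ contributing $2k$ sites. Since the reconstruction inserts $\zeta_k(i)$ solitons of size $k$, bounding $E[\ww]$ amounts to controlling $\sum_k k\, E[\text{(number of $k$-solitons in }\vep^0)]$. The slot counts $n_k$ are themselves built hierarchically from the larger components, so I expect a recursive estimate: $E[n_k]$ is governed by the expected number of $k$-slots created by $m$-solitons with $m>k$, i.e.\ by $\sum_{m>k} 2(m-k) E[\zeta_m(0)]$. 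The hypothesis $\sum_k k\,E[\zeta_k(0)]<\infty$ should be exactly what makes the resulting double sum converge. This is the step I expect to require the most care, since one must show the hierarchical slot-creation does not blow up the expected excursion length.

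\textbf{Record-shift invariance.} For $\htheta$-invariance I would argue that shifting the configuration by one record, i.e.\ moving Record~$0$ to Record~$1$, corresponds on the level of components to discarding the entries of $\zeta$ used in reconstructing $\vep^0$ and re-indexing. Concretely, $M(\htheta\eta)$ should be obtained from $M\eta$ by the shift $\zeta_k(\cdot)\mapsto\zeta_k(n_k+\cdot)$ appearing in the construction of $\zeta^1$ in \S\ref{sub:reconstruction}. Because each $\zeta_k$ is shift-invariant and the $\zeta_k$ are independent, and because the shift amount $n_k$ is measurable with respect to the field, I would invoke a change-of-measure / stationarity argument showing this re-indexing preserves the joint law. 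The cleanest route is to verify that the law of $\zeta$ is invariant under the record-shift map it induces, using independence across $k$ and shift-invariance within each $k$ to reduce to a one-component statement.

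\textbf{Dynamical invariance and ergodicity.} The heart is $\hT$-invariance. Here I would use Theorem~\ref{thm:hierarchy}, which says $M_k T^t\xi(i) = M_k\xi(i - o^t_k(\xi) - kt)$, so that the dynamics acts on each component as a shift by $o^t_k(\xi)+kt$, with the offset $o^t_k$ determined by the larger components $(M_m\xi)_{m>k}$. Combined with the record-shift in the definition $\hT\eta = \theta^{\psi_\eta(0)}T\eta$, the net effect on $\zeta$ should again be a shift whose amount is measurable in the field. Invariance of $\hmu$ under $\hT$ then follows from shift-invariance of each $\zeta_k$ together with independence across $k$, exactly as in the $\htheta$ case; the essential input is that the offset for component $k$ depends only on components $m>k$, so one can condition downward from large $k$ and apply shift-invariance one component at a time without destroying independence. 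For the final ergodicity claim under the i.i.d.\ assumption, I would observe that an i.i.d.\ field on $\Z$ is mixing under shifts, and that the hierarchical structure lets the $\htheta$-dynamics be realized as a skew-product of shifts over independent i.i.d.\ coordinates; a standard argument (e.g.\ via the ergodicity of products of mixing systems, or a direct tail-triviality computation using independence across $k$) then yields $\htheta$-ergodicity. The main obstacle I anticipate is making the ``shift amount depends only on larger components'' reasoning fully rigorous as a measure-preserving statement, since the shift is random and component-dependent; the downward induction on $k$ from Theorem~\ref{thm:hierarchy} is the right tool, but care is needed to confirm that conditioning on $(\zeta_m)_{m>k}$ and shifting $\zeta_k$ by a measurable amount preserves the conditional law.
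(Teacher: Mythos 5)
Your treatment of $\hT$-invariance, $\htheta$-invariance and ergodicity follows essentially the paper's own route: the paper proves the product identity~\eqref{ttt} by exactly the downward conditioning you describe, using Theorem~\ref{thm:hierarchy} to write $M_k\hT\eta=\theta^{-o^1_k-k}\zeta_k$ with $o^1_k$ measurable in $(\zeta_m)_{m>k}$, and proves $\htheta$-invariance by observing that $\htheta\eta=M^{-1}\big(\theta^{n_k}\zeta_k\big)_{k\ge1}$ with each shift amount $n_k$ independent of $\zeta_k$. (For ergodicity the paper is more direct than your skew-product/tail-triviality suggestion: under the i.i.d.\ assumption the reconstruction produces an i.i.d.\ sequence of excursions separated by records, which immediately gives $\htheta$-ergodicity.)

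The genuine gap is in the integrability step, and it is twofold. First, the recursion you write down is incorrect: the expected number of $k$-slots created inside $\vep^0$ by larger solitons is not $\sum_{m>k}2(m-k)E[\zeta_m(0)]$ but $\sum_{m>k}2(m-k)\,w_m\alpha_m$, where $w_m=E[n_m]$ and $\alpha_m=E[\zeta_m(0)]$; indeed the number of $m$-solitons in the excursion is the sum of $\zeta_m(i)$ over all $n_m$ of its $m$-slots, which by independence of $n_m$ from $\zeta_m$ (a Wald-type identity) has mean $w_m\alpha_m$. The correct system is therefore $w_k=1+\sum_{m>k}2(m-k)w_m\alpha_m$, i.e.\ \eqref{eq:skk1}, which is self-referential: the unknowns $w_m$ reappear on the right. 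Your version captures only one level of nesting, whereas solitons nest inside solitons indefinitely, so the quantity to be controlled is a multiplicative cascade, and the hypothesis $\sum_k k\alpha_k<\infty$ does not by itself make any ``double sum'' converge; one must rule out blow-up of this cascade. Second, the actual convergence argument is missing. The paper supplies it by interpreting $w_k$ as the expected absorption time of an auxiliary Markov chain with transition probabilities $q(k,m)=2(m-k)\alpha_m\one\{m>k\}$ and absorbing state $\aleph$: summability gives $c_k:=2\sum_{m>k}(m-k)\alpha_m<\tfrac12$ for all $k\ge\tilde k$, hence $w_k\le\frac{1}{1-c_k}<2$ there, and finiteness propagates to every $k$ by downward recursion in~\eqref{eq:skk1}. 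A separate truncation and monotone-convergence argument (setting $\zeta_k\equiv0$ for $k>n$ and letting $n\to\infty$) is then needed to identify $E[n_k]$ with this finite solution. None of these ingredients — the Wald identity, the absorbing-chain bound, the truncation — appear in your sketch, so as written the claim $\hmu(\ww)<\infty$ remains unproven.
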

Before giving the proof, let us see how it implies Theorem~\ref{thm:invariant}.

\begin{proof}
[Proof of Theorem~\ref{thm:invariant}]
Let $\hmu$ be the law of $\eta = M^{-1}\zeta$, and define
\[
\mu:=\Palm_{R}^{\Z}(\hmu)
.
\]
By Proposition~\ref{prop:invariant} and Lemma~\ref{lemma:harris}, $\mu$ is $\theta$-invariant and supported on $\cX$.
Also, $$\mu\circ T^{-1} = \Palm_{R}^{\Z}(\hmu\circ\hT^{-1})=\Palm_{R}^{\Z}(\hmu)=\mu,$$ so $\mu$ is also $T$-invariant.
Moreover, under the i.i.d.\ assumption, the second part of Proposition~\ref{prop:invariant} combined with the second part of Lemma~\ref{lemma:harris} implies that $\mu$ is $\theta$-ergodic.
\end{proof}

\begin{remark}
[i.i.d.\ measures]
A natural question is whether the product measures $\nu_\lambda$ can be constructed in this way.
This is indeed the case, as shown in~\cite{FerrariGabrielli20}.
\end{remark}

\begin{example}
[Ergodicity without independent components]
\label{exp1}
It is possible for law $\mu$ of $\eta$ to be $\theta$-ergodic and $T$-invariant while its components $M_k \eta$ not being independent under $\hmu$.
Let $\zeta'$ be the configuration $\zeta'(x) = \one\{x \ \mathrm{mod} \ 3 = 0\}$. Let $\zeta_1=\zeta_4$ be a configuration chosen uniformly at random in the set $\{\zeta', \theta\zeta', \theta^2\zeta'\}$; let $\zeta_k\equiv 0$ for all $k\notin\{1,4\}$ and $\zeta=(\zeta_k)_{k\ge 1}$.
The reader can check that this example satisfies the stated properties.
The resulting configuration is periodic for $ \theta $ and invariant for $ T $:
\[
\bf
\cdots
00{\blue 10}{\red 1111{\blue 01}0000{\blue 10}}000{\blue 10}{\red 1111{\blue 01}0000}{\blue 10}0
\cdots,
\]
where colors (online) represent records, $ 4 $-solitons and $ 1 $-solitons.
\end{example}

\begin{example}
[Independent components without ergodicity]
\label{exp2}
It is also possible for $\zeta$ to be independent over $k$, $\theta$-ergodic for each $k$, but produce (by the above procedure) a configuration $\eta$ whose law is not $\theta$-ergodic.
To see that, take $\zeta_5(x)\equiv 1$, $\zeta_1$ as in the previous example, $\zeta_k\equiv 0$ for all $k\notin\{1,5\}$ and $\zeta=(\zeta_k)_{k\ge 1}$.
The resulting configurations give three classes periodic for $ \theta $ and cyclic over $ T $:
\begin{gather*}
\bf
\cdots
0{\blue 10}{\red 1111{\blue 01}1000{\blue 10}00}%
0{\blue 10}{\red 1111{\blue 01}1000{\blue 10}00}%
\cdots
\\
\bf
\cdots
0{\red 111{\blue 01}1100{\blue 10}000{\blue 10}}%
0{\red 111{\blue 01}1100{\blue 10}000{\blue 10}}%
\cdots
\\
\bf
\cdots
0{\red 11{\blue 01}111{\blue 01}0000{\blue 10}0}%
0{\red 11{\blue 01}111{\blue 01}0000{\blue 10}0}%
\cdots
\end{gather*}
where colors (online) represent records, $ 5 $-solitons and $ 1 $-solitons.
\end{example}

\begin{remark}
\label{rmk:conjecture}
A measure $\mu$ is said to be $\theta$-mixing if for all $A, B \in \mathcal{B}$, $\mu(A\cap \theta^{-n}B) {\to} \mu(A)\mu(B)$ as $ n\to\infty $.
We conjecture that, for every measure $ \mu $ supported on $ \cX $,
$ \mu $ is $T$-invariant and $\theta$-mixing if and only if under $ \hmu $ the components $\zeta_k$ are independent over $k$ and each one is $\theta$-mixing.
\end{remark}

\subsection{Invariance of the reconstructed configuration}
\label{sub:invariance}

We now prove the main part of Proposition~\ref{prop:invariant}, namely $\htheta$-invariance and $\hT$-invariance of $\hmu$, as well as $\theta$-ergodicity in case of i.i.d.\ components.
The proof of $\hmu(\ww)<\infty$ is given in \S\ref{sub:finitew}.
The condition $\hmu(R\eta=\Z)=0$ is easily checked from the construction of $M^{-1}\zeta$ and the assumption $\sum_{k,i}\zeta_{k}(i)>0$ almost surely.
Denote by
$\cF_k$ the sigma-field generated by $(\zeta_m)_{m>k}$.

Since one can write $\eta = M^{-1}M\eta$ and $\hT \eta = M^{-1}M\hT\eta$ by Theorem~\ref{bijection1}, it suffices to show that the slot decompositions $M\eta$ and $M\hT\eta$ have the same law.
More precisely, it suffices to show 
\begin{align}
\label{ttt}
E\Bigl(\prod_{k=1}^n \varphi_k(M_k\hT\eta)\Bigr) = \prod_{k=1}^n E \varphi_k(\zeta_k)
\end{align}
for test functions $\varphi_1,\dots,\varphi_n$ and $n \in \N$.

We proceed by an induction on $n$.
First,
recall the notation from~\eqref{eq:defht} and
note that
\begin{align*}
&E\big(\varphi_k(M_k\hT\eta)\,\big|\, \cF_k \big)
\\
&\qquad= E\big(\varphi_k(\theta^{-o^1_k(\bxi[\eta])-k}M_k\eta)\,\big|\, \cF_k \big) \qquad (\hbox{by Theorem~\ref{thm:hierarchy}})
\\
&\qquad= E\big(\varphi_k(\theta^{-o^1_k(\bxi[\eta])-k}\zeta_k)\,\big|\, \cF_k \big) \qquad (\hbox{because }M_k\eta=\zeta_k)
\\
&\qquad= E\varphi_k(\zeta_k)
,
\end{align*}
because $\zeta_k$ is shift-invariant and independent of $(\zeta_m)_{m>k}$ whereas $o^1_k(\bxi[\eta])$ is determined by these elements.
The inductive step to show~\eqref{ttt} is then
\begin{align}
E\Bigl(\prod_{i=k}^n \varphi_i(M_i\hT\eta)\Bigr)
&= E\Bigl(E\Bigl(\prod_{i=k}^n \varphi_i(M_i\hT\eta)\Big| \cF_k \Bigr)\Bigr),
\\
&= E\Bigl( \prod_{i=k+1}^n \varphi_i(M_i\hT\eta) E\Bigl(\varphi_k(M_k\hT\eta)\Big| \cF_k \Bigr)\Bigr),
\\
&= E \varphi_k(\zeta_k) \ E\Bigl(\prod_{i=k+1}^{n} \varphi_i(M_i\hT\eta)\Bigr);
\end{align}
in the second identity we have used that $M_i\hT\eta$ is determined by $(\zeta_m)_{m\ge i}$ by the last statement of Theorem~\ref{thm:hierarchy}.
This shows that $\hmu$ is $\hT$-invariant.

We now prove $ \htheta $-invariance.
Consider the transformation $M^{-1} : \zeta \mapsto \xi^*$ defined in \S\ref{sub:reconstruction} and 
denote $\eta^{*}=\eta[\xi^{*}]$ the corresponding ball configuration. 
Call $\vep^0_*$ the excursion of $\eta^*$ between Records~0 and~1. The construction of \S\ref{sub:reconstruction} gives
\begin{align}
\htheta \eta^*
=
\theta^{r(\bxi[\eta^*],1)}\eta^*
=
M^{-1} \big(\theta^{n_k(\vep^0_*)}\zeta_k: k\ge 1\big).
\end{align}
So it suffices to show that $( \theta^{n_k(\vep^0_*)}\zeta_k )_{k\ge1}$ has the same law as $(\zeta_k)_{k \ge 1}$.
But $n_k(\vep^0_*)$ is determined by $(\zeta_m)_{m>k}$, thus independent of $\zeta_k$.
Hence the law of $\zeta_k$ is invariant by the random shift of $n_k(\vep^0_*)$ and it is independent of $(\zeta_m)_{m>k}$.
This shows that $ \eta^* $ and $ \htheta \eta^* $ have the same law.

Finally, under the extra assumption that $(\zeta_k(i))_{i\in\Z}$ is i.i.d.\ for each $k$,
the reconstruction map mentioned above will produce an i.i.d.\ sequence of excursions separated by records.
This in turn implies that the resulting configuration $\eta$ is $\htheta$-ergodic.

\subsection{Expected excursion length}
\label{sub:finitew}

We continue with the proof of Proposition~\ref{prop:invariant} to prove that $\hmu(\ww)<\infty$.
Let
\begin{equation}
\label{eq:alphak}
\alpha_k :=
E[\zeta_k(0)]
.
\end{equation}
The proof consists of two steps: first, we show that the following system
\begin{align}
\label{eq:skk1}
w_k &= 1 + {\sum_{m>k}} 2(m-k) w_m\alpha_m,\qquad k=0,1,2,\dots
\end{align}
has a unique finite non-negative solution $w=(w_k)_{k\ge 0}$, i.e., $ w \in [0,\infty)^{\N_0} $; second, we show that
the average number of $k$-slots per excursion in $M^{-1} \zeta$ is $w_k$, whence the average number of $k$-solitons per excursion satisfies
\begin{equation}
\label{eq:rhok}
\rho_k=\alpha_k w_k
.
\end{equation}
In particular, this will imply that the average size of the excursions (along with the record preceding them) satisfies
\begin{align}
\label{eq:wfinite}
\hmu(\ww) = w_0 = 1 + \sum_{m\ge1} 2m\, \rho_m <\infty.
\end{align}

So we start by studying~\eqref{eq:skk1}.
Let
\(
c_k := 2 {\sum_{m>k}}(m-k) \alpha_m.
\)
Since $\sum_{k} k E[\zeta_{k}(0)]<\infty$, we can take $\tilde{k}$ such that
\(
\sum_{m>\tilde{k}} 4 m \alpha_m < 1
,
\)
so
$c_k< \frac{1}{2}$ for $k\ge \tilde k$. Let $K:= \{k\in\N:k\ge \tilde k\}\cup \{\aleph\}$ and consider a Markov chain $(X_n)_{n\ge 0}$ on $K$ with absorbing state $\aleph$ and transition probabilities $q_{k,m} := 2(m-k) \alpha_m \one\{m>k\}$; $q_{k,\aleph} = 1-c_k$; $q_{\aleph,\aleph}=1$ and $q_{k,m}=0$ otherwise. Define the absorption time by
$\tau:= \inf\{n\ge 0: X_n=\aleph\}$.
Denote by $P_k$ the law of $(X_n)_{n\ge 0}$ starting from $k$, and by $ E_k $ the expectation.
Applying the Markov property at time $1$, we see that $w_k := E_k\tau$ satisfies
\begin{align}
\label{eq:markov}
w_{k}=E_{k}[\tau\one\{X_{1}=\aleph\}]+\sum_{m>k}E_{k}[\tau\one\{X_{1}=m\}] = (1-c_{k})+ \sum_{m>k}(1+w_{m})q_{k,m}.
\end{align}
That is, $(w_{k})_{k\ge \tilde k}$ verifies the system~\eqref{eq:skk1} with $k= \tilde k, \tilde k+1, \dots$.
Since $c_k\ge c_{k+1}$, we have $P_k(\tau>n) \le c_k^{n}$ and thus
\(
w_k= E_k\tau \le \frac{1}{1-c_k} < 2 < \infty, \hbox{ for $k\ge \tilde k$}.
\)
Since $w_{\tilde{k}}<\infty$, we can use~\eqref{eq:skk1} with $k=\tilde{k}-1$ to define $w_{\tilde{k}-1}<\infty$, and iterating this argument we get $w_k < \infty$ for all $k$.
This proves the existence of a finite solution to~\eqref{eq:skk1}.

For uniqueness, suppose $(\widetilde w_{k})_{k\ge 0}$ is a finite non-negative solution to~\eqref{eq:skk1}. In particular, $\sum_{m>k}\widetilde w_{m}q_{k,m}=\sum_{m>k}2(m-k)\widetilde w_{m}\alpha_{m}<\infty$, $k\ge 0$. Moreover, the previous choice of $\tilde k$ ensures that $\sum_{m>k}q_{k,m}<\infty$ for all $k\ge \tilde k$. It follows that for all $k\ge \tilde k$,
\[
|w_{k}-\widetilde w_{k}|
\le
 \sum_{m>k} q_{k,m} |w_{m}-\widetilde w_{m} \big|
\le
c_{k}\sup_{m>k} |w_{m}-\widetilde w_{m}|
\leq \tfrac{1}{2} \sup_{m>k} |w_{m}-\widetilde w_{m}|
.
\]
Taking the supremum over $k \geq \tilde{k}$ we get
$$\sup_{k\ge \tilde k}|w_{k}-\widetilde w_{k}| \leq \frac12 \sup_{k \geq \tilde{k}}|w_{k}-\widetilde w_{k}|.$$
By~\eqref{eq:skk1}, $\widetilde w_{k}$ and $ w_k $ are decreasing in $ k $, so the above supremum is finite, hence it is zero.
That is, $\widetilde w_{k}=E_{k}\tau$, $k\ge \tilde k$.
Given $(\widetilde w_{k})_{k\ge \tilde k}$,~\eqref{eq:skk1} determines in a unique way the values of $(\widetilde w_{k})_{k\le \tilde k}$; this completes the proof of uniqueness.

We now consider truncated approximations for the reconstruction algorithm of \S\ref{sub:reconstruction}.
Let
\begin{equation}
\label{eq:zetatrunc}
\zeta^{[n]}_k(i) :=
\begin{cases}
\zeta_k(i), & k \le n,
\\
0 , & k>n.
\end{cases}
\end{equation}
Let $\vep^{[n]}$ denote the first excursion (i.e.\ the one between Records~0 and~1) of $M^{-1} \zeta^{[n]}$.
Let $W^n_k\zeta = n_k(\vep^{[n]})$ be the number of $k$-slots in $\vep^{[n]}$; see~\eqref{sup-vep}.
Then $W^n_k\zeta \nearrow W_k\zeta$ a.s., where $W_k \zeta := W_k^\infty \zeta$.
Letting $w_k^n:=E[W^n_k\zeta]$ and $w_k:=E[W_k\zeta]$, by monotone convergence we have $w^n_k\nearrow w_k$ as $n\to\infty$.
On the other hand,
since each $m$-soliton contains $2(m-k)$ $k$-slots,
\begin{align}
W_k^n \zeta = 1 + \sum_{m>k} 2(m-k) \times (\hbox{number of $m$-solitons in $\vep^{[n]})$}
\end{align}
and thus
\begin{align}
w_k^n=1+\sum_{m>k} 2(m-k) E(\hbox{number of $m$-solitons in $\vep^{[n]})$}.
\end{align}
Let $\alpha^n_m := \alpha_m \one\{m\le n\}$ denote the expected number of $m$-solitons per $m$-slot in $\zeta^{[n]}$.
Since $W^n_k\zeta$ is a function of $(\zeta_m:m>k)$ which is independent of $\zeta_k$, the expected number of $m$-solitons in $\vep^{[n]}$ is $w_m^n \times \alpha_m^n$.
Therefore, $(w_k^n)_{k \ge 0}$ and $(\alpha_k^n)_{k \ge 1}$ satisfy relation~\eqref{eq:rhok} and the system~\eqref{eq:skk1}.
Finally, since $w^n_k < 2$ for all $k \ge \tilde{k}$ and $n\in\N$, $w_k$ is finite for every $k$ and therefore~\eqref{eq:wfinite} is satisfied, concluding the proof that $\hmu(\ww)<\infty$.

\section{Palm transformations}
\label{sec:palm}

We recall some fundamental properties on the Palm measures from Thorisson~\cite{Thorisson00}, and study their interplay with operator~$ T $ following~Harris~\cite{Harris71}.
Although Thorisson deals with processes in the continuum, adapting the arguments to discrete space is straightforward.

For our use in \S\ref{sec:speeds}, we will consider here a general situation of which $R\eta$ is a particular case.
Let $Z\eta$ be a subset of $\Z$ which depends on $\eta$ in a translation-covariant way ($Z\theta\eta = \theta Z\eta$).

The map $ \Palm_\Z^Z : \mu \mapsto \tmu $ is defined as follows.
Let $\mu$ be a $\theta$-invariant probability measure on $\{0, 1\}^{\Z}$ and assume that
$ \mu(Z\eta = \varnothing) = 0. $
Define
$$\tilde{\ww}=\tilde{\ww}(\eta) := \inf\{x\ge 1: x\in Z\eta\}$$
and $\ttheta := \theta^{\tilde{\ww}}$, the shift to the next element in $Z\eta$.

We define the measure $\tmu$ as follows: for each $ m \in \N $ and each test function $\varphi$,
\begin{align}
\label{def:Palm}
\int \varphi(\eta)\tmu(\dd\eta) :=
\frac
{\int \sum_{i=0}^{m-1} \varphi(\theta^{i}\eta) \one\{0 \in Z\theta^i \eta\} \mu(\dd \eta)}
{\int \sum_{i=0}^{m-1} \one\{0 \in Z\theta^i \eta\} \mu(\dd \eta)}
.
\end{align}

The above definition does not depend on $m$ (Theorem 8.3.1 of~\cite{Thorisson00}). In particular, if we specialize it to the case $m=1$, it becomes
\begin{align}
\label{eq:palm3}
\int \varphi(\eta)\tmu(\dd\eta) :=
\frac
{\int \varphi(\eta) \one\{0 \in Z\eta\} \mu(\dd \eta)}
{\int \one\{0 \in Z\eta\} \mu(\dd \eta)},
\end{align}
which means
\begin{equation}
\label{eq:palm2}
\tmu = \mu( \ \cdot\ |\, 0\in Z\eta).
\end{equation}

Theorem 8.4.1 and Formula~(8.4.14) of~\cite{Thorisson00} then assert that the measure $\tmu$ is $\ttheta$-invariant.
Moreover, by Formula~(8.4.6) of the same book we have that the mean distance between successive points of $Z\eta$ under $\tmu$ is the inverse density of $Z\eta$ under $\mu$:
\[
\tmu(\ww)=
\frac{1}{\mu(0\in Z\eta)}
< \infty
.
\]
Conversely, suppose $\tmu$ is a $\ttheta$-invariant measure on $ \{0,1\}^\Z $ satisfying
$\tmu(\tilde{\ww})<\infty$.
Then its inverse Palm measure $\mu = \Palm_Z^\Z \tmu $ is defined as follows: for every test function $\varphi$,
\begin{equation}
\label{eq:palm-gen}
\int \varphi(\eta) \,\mu(\dd\eta)
:=
\frac
{\int \textstyle\sum_{i=0}^{\tilde{\ww}(\eta)-1} \varphi(\theta^i\eta) \tmu(\dd\eta)}
{\int \textstyle\sum_{i=0}^{\tilde{\ww}(\eta)-1} 1 \tmu(\dd\eta)}.
\end{equation}
Moreover, $\mu$ is $\theta$-invariant (Theorem 8.4.1 and Formula (8.4.14$^{\circ}$) of~\cite{Thorisson00}), $ \mu(Z\eta = \varnothing)=0 $, and its Palm measure $\Palm_{\Z}^Z \mu$ is given by $\tmu$.

The above observations give the following.
\begin{lemma}
\label{lemma:bijection}
The operations~\eqref{eq:palm2} and~\eqref{eq:palm-gen} define a bijection between $\theta$-invariant measures $\mu$ on $ \{0,1\}^\Z $ with $ \mu(Z\eta = \varnothing) = 0 $ and $\ttheta$-invariant measures $\tmu$ on $ \{0,1\}^\Z $ with $\tmu(\tilde{\ww}) < \infty$.
\end{lemma}

We now analyze how the Palm transform relates to almost-sure properties.

\begin{lemma}
\label{lemma:aspalm}
Let $\mu$ be a $ \theta $-invariant measure on $ \{0,1\}^\Z $ with $ \mu(Z\eta = \varnothing) = 0 $, and $ A $ be a $ \theta $-invariant event.
Then $ \mu(A)=0 $ if and only if $ \tmu(A) = 0 $.
\end{lemma}
\begin{proof}
If $ \mu(A)=0 $, then $ \tmu(A)=0 $ by~\eqref{eq:palm2}.
Now suppose $ \tmu(A)=0 $.
Then, again by~\eqref{eq:palm2} we have $ \mu(A \cap \{0\in Z\eta\}) = 0 $.
By $ \theta $-invariance of both $ \mu $ and $ A $, and $ \theta $-covariance of $ Z $, this gives $ \mu(A \cap \{x\in Z\eta\}) = 0 $ for every $ x \in \Z $.
Taking union over $ x $, this gives $ \mu(A) \leq \mu(Z\eta = \varnothing) = 0 $.
\end{proof}

As a side remark, the denominator in~\eqref{eq:palm-gen} is simply $ \tmu(\tilde{\ww}) $, and this is the same formula as~\eqref{eq:palm1} with $ Z $ instead of $ R $.
We wrote it in this apparently cumbersome way to highlight the similarity with~\eqref{eq:palm3}.
The number $ 1 $ in the denominator of~\eqref{eq:palm-gen} equals the indicator that $ 0 \in \Z $ and the absence of a sum over $ i $ in~\eqref{eq:palm3} is due to the fact that the analog to $ \ww $ is just the distance from $ 0 $ to the next point in $ \Z $, which is $ 1 $.
This also helps explain~\eqref{eq:invert} below.

\begin{lemma}
\label{lemma:ergodic}
For $ \theta $-invariant $ \mu $ with $ \mu(Z\eta=\varnothing)=0 $,
$\mu$ is $\theta$-ergodic if and only if $\tmu$ is $\ttheta$-ergodic.
\end{lemma}
\begin{proof}
Suppose $\tmu$ is $\ttheta$-ergodic. Let us prove the $\theta$-ergodicity of $\mu$.
It suffices to show that the Cesàro limits on test functions are $\mu$-a.s.\ constant.
Let $ \eta\in\{0,1\}^\Z $ be such that $ Z\eta $ is bi-infinite.
Write $k_n=\#Z\eta\cap(0, n)$.
For a non-negative bounded test function $ \varphi $, we rewrite its partial average as
\begin{align}
\frac{\sum_{x=0}^{n-1} \varphi (\theta^x \eta)}{n}
&=
\frac{\frac{1}{k_n}}{\frac{1}{k_n}}
\cdot
\frac
{\sum_{i=1}^{k_n-1}\sum_{j=0}^{\tilde{\ww}(\ttheta^{i}\eta)-1}\varphi(\theta^j \ttheta^i \eta) + \sum_{x=0}^{k_1-1}\varphi(\theta^x\eta) + \sum_{x=k_n}^{n-1}\varphi(\theta^x\eta) }
{\qquad \sum_{i=1}^{k_n-1} \tilde{\ww}(\ttheta^{i}\eta) \qquad + \qquad k_1 \qquad + \quad (n-k_n)}.
\end{align}
Applying the Ergodic Theorem~\cite[Chapter~2]{Coudene16} to $\tmu$ and $\ttheta$, we find that the event $ A $ given by the set of all configurations $ \eta $ such that
\[
\lim_{n\to\infty} \frac1n\sum_{x=0}^{n-1} \varphi (\theta^x \eta) = \frac{\int\textstyle\sum_{j=1}^{\tilde{\ww}(\eta)} \varphi(\theta^j\eta) \, \tmu(\dd\eta)}{\int\textstyle \tilde{\ww}(\eta) \, \tmu(\dd\eta) }
\]
satisfies $ \tmu(A)=1 $ (because the second and third terms of the numerator are bounded by the summands with $ i=0 $ and $ i=k_n $, and the same holds for the denominator).
By Lemma~\ref{lemma:aspalm}, this proves that $\mu$ is $\theta$-ergodic.

The implication in the other direction is proved similarly. A brief sketch is
\[
\frac{\sum_{i=0}^{k-1}\varphi(\ttheta^i \eta)}{k}
=
\frac{\frac{1}{n_k}\sum_{x=0}^{n_k-1}\bar\varphi(\ttheta^x \eta)}{\frac{k}{n_k}} \xrightarrow[k\to\infty]{\mu-\text{a.s.}} \frac{\int \bar\varphi\,\dd\mu}{\mu(0\in Z\eta)},
\]
where $ \bar\varphi(\eta):=\varphi(\eta)\one\{0\in Z\eta\} $ and $ n_k $ is the position of the $ k $-th element of $ Z\eta $.
\end{proof}

We now describe a rather useful consequence of the above theory, to be used in \S\ref{sec:speeds}.
It is given by the following diagram:
\begin{displaymath}
\xymatrixcolsep{7pc}
\xymatrixrowsep{5pc}
\xymatrix{
& \mu
\ar@/_.5pc/@{->}[dl]_{\Palm_\Z^R}
\ar@/^.5pc/@{->}[dr]^{\Palm_\Z^Z}
\\
\hmu
\ar@/_.5pc/@{->}[ur]_<(.6){\Palm_R^\Z}
\ar@/^.5pc/@{->}[rr]^{\Palm_R^Z}
&&
\tmu
\ar@/^.5pc/@{->}[ul]^<(.6){\Palm_Z^\Z}
\ar@/^.5pc/@{->}[ll]^{\Palm_Z^R}
}
\end{displaymath}
where $ \mu $, $ \hmu $ and $ \tmu $ describe respectively the configuration seen from a typical site, a typical record, or a typical element of $ Z $.
More precisely, suppose $ \mu(Z\eta = \varnothing) = \mu(R\eta = \varnothing) = 0 $.
Then by Lemma~\ref{lemma:aspalm} the same holds for $ \tmu $ and $ \hmu $.
In this case, $ \tmu $ and $ \hmu $ are related by
\begin{equation}
\label{eq:invert}
\int \varphi(\eta) \tmu(\dd\eta)
=
\frac
{
\int
\textstyle
\sum_{i=0}^{\ww(\eta)-1}
\varphi(\theta^i\eta)
\one\{0 \in Z\theta^i\eta\}
\,
\hmu(\dd\eta)
}
{
\int
\textstyle
\sum_{i=0}^{\ww(\eta)-1}
\one\{0 \in Z\theta^i\eta\}
\,
\hmu(\dd\eta)
}
.
\end{equation}
For the inverse transform $ \Palm_Z^R $, the same formula is valid if we swap $ \tmu,Z,\tilde{\ww}$ with $\hmu,R,{\ww} $.
Both~\eqref{eq:invert} and its inverse can be proved using~\eqref{eq:palm-gen} and~\eqref{eq:palm3}.

We finally move to studying the interplay between a Palm measure and the operator $ T $.\footnote{For concreteness, we consider the BBS operator $ T $, but we only use the fact that it is a $ \theta $-covariant operator and its domain $ \cX $ is $ \theta $-invariant.}

Let $ \cZ := \{ \eta\in\cX : Z\eta \text{ is bi-infinite}\} $.
Suppose for each $ \eta\in\cZ $ there is a bijection $\Psi_{\eta}$ between $Z\eta$ and $ZT\eta$ that depends on $ \eta $ in a translation covariant way, that is, $ \Psi_{\theta\eta}(\theta x) = \theta \Psi_\eta(x) $.
Intuitively, $ \Psi $ is just an honest way to follow elements of $ Z\eta $ after applying the operator $ T $.
Let
\[
\tcZ:=\{ \eta\in\cZ : 0\in Z\eta \}
.
\]
For $\eta\in \tcZ$, we define
$$
\widetilde T\eta=\theta^{\Psi_{\eta}(0)}T\eta
.
$$
We now show the following.

\begin{lemma}
\label{lemma:palmconjugate}
For $ \theta $-invariant $ \mu $ on $\cZ$, we have the identity
$$\Palm_{\Z}^Z (\mu\circ T^{-1})= (\Palm_{\Z}^Z \mu) \circ \widetilde T^{-1}.$$
In particular, $\mu$ is $T$-invariant if and only if $\Palm_{\Z}^Z \mu$ is $\widetilde T$-invariant.
\end{lemma}

\begin{proof}
We follow the classical arguments in~\cite{Harris71,PortStone73}.
For a test function $\varphi$, we have
\begin{align}
\int \one\{0\in Z T\eta\}\, \varphi(T\eta) \, \mu(\dd \eta)
&=
\int
\sum_x
\one{\{x \in Z\eta,\Psi_\eta(x)=0\}} \, \varphi(T\eta) \, \mu(\dd \eta)
\\
&=
\sum_x
\int
\one{\{0 \in Z \theta^x \eta,\Psi_{\theta^x\eta}(0)=-x\}} \, \varphi(T\eta) \, \mu(\dd \eta)
\\
&=
\sum_x
\int
\one{\{0 \in Z \eta,\Psi_{\eta}(0)=-x\}} \, \varphi(T \theta^{-x} \eta) \, \mu(\dd \eta)
\\
&=
\int
\one\{0 \in Z\eta\}\,
\sum_x
\one{\{\Psi_{\eta}(0)=x\}} \, \varphi(T \theta^{x} \eta) \, \mu(\dd \eta)
\\
&=
\int
\one\{0 \in Z\eta\}\,
     \varphi(\tT \eta) \, \mu(\dd \eta)
.
\end{align}
The third identity holds by the translation invariance of $\mu$.
Taking $\varphi\equiv 1$ we get $(\mu\circ T^{-1})(\tcZ)=\mu(\tcZ)$.
Hence,
\begin{align}
  \Palm_{\Z}^{Z}(\mu\circ T^{-1})\,\varphi
&=
\tfrac{1}{(\mu\circ T^{-1})(\tcZ)}
\int \one\{0\in Z T\eta\}\, \varphi(T\eta) \, \mu(\dd \eta)
\\
&=
\tfrac{1}{\mu(\tcZ)}
\int
\one\{0 \in Z\eta\}\,
\varphi(\tT \eta) \, \mu(\dd \eta) = \big( (\Palm_{\Z}^Z\mu) \circ \tT^{-1}\big)\varphi. \qedhere
\end{align}
\end{proof}

\begin{proof}[Proof of Lemma~\ref{lemma:harris}]
Define a bijection $\psi_\eta$ between $R\eta$ and $RT\eta$ as follows.
For $x \in R\eta$, let $\psi_\eta(x) := r(T \xi,j)$, where $ \xi = \xi[\eta] $ and $x=r(\xi,j)$.
Note that this definition does not depend on the lift $\xi[\eta]$.
Note also that $ \psi_{\theta\eta}(\theta x) = \theta \psi_\eta(x) $.
Finally note that~\eqref{eq:defht} can be written as
\begin{align*}
\hT \eta := \theta^{\psi_\eta(0)}T\eta.
\end{align*}
All the statements in the lemma follow from the previous lemmas by taking $Z\eta=R\eta$ and $ \Psi=\psi $, except that $ \mu $ is supported on $ \cX $ and $ \hmu $ is supported on $ \hcX $.
By Lemma~\ref{lemma:aspalm}, it is enough to show the latter.

By the Ergodic Theorem, the limit
\[
\lambda(\eta)=
\lim_{y\to\infty}\,\frac1{y} \sum_{x=-y}^{-1}\eta(x) = \lim_{y\to\infty}\,\frac1{y} \sum_{x=1}^y\eta(x)
\]
exists for $ \mu $-a.e.\ $ \eta $, hence for $ \hmu $-a.e.\ $ \eta $, and we have to show that $ 0 < \lambda(\eta) < \frac{1}{2} $ for $ \hmu $-a.e.\ $ \eta $.

By the Ergodic Decomposition Theorem~\cite[Chapter~14]{Coudene16}, we can assume that $ \hmu $ is $ \htheta $-ergodic.
In the above limit, we can take a subsequence $y_{k}\to\infty$ with elements from $R\eta$, giving
\[
\lambda(\eta)
=
\lim_{k\to\infty} \frac{\sum_{x=1}^{y_k} \eta(x)}{y_k}
=
\lim_{k\to\infty}
\frac
{\sum_{j=0}^{k-1} \frac{1}{2}(\ww(\htheta^j \eta)-1)}
{\sum_{j=0}^{k-1} \ww(\htheta^j \eta)}
=
\frac
{\frac{1}{2}(\hmu(\ww)-1)}
{ \hmu(\ww)}
<
\frac{1}{2}
.
\]
By assumption, $ \hmu(R\eta=\Z) = 0 $, thus $ \hmu(\ww=1)<1 $, hence $ \hmu(\ww)>1 $ and the above limit is also positive, concluding the proof.
\end{proof}

\section{Asymptotic speed of solitons}
\label{sec:speeds}

In this section we prove Theorems~\ref{thm:hspeed} and~\ref{thm:speedsexplicit}
from a combination of simpler statements.
By looking at the dynamics as seen from a $k$-soliton, we show that
the soliton speed $v_k$ exists and equals the expected length of the jump of a typical $k$-soliton in one step.

In \S\ref{sub:speedexists} we show that the soliton speed $v_k$ that appears in~\eqref{eq:speedexists} is $\mu$-a.s.\ well-defined and is given by an explicit formula~\eqref{eq:vfrompalm}.
We also show that it is given another formula~\eqref{eq:anothervfrompalm}, and that it is finite.
Analyzing the interaction between solitons of different sizes, in \S\ref{sub:speedequation} we show that the soliton speeds $(v_k)_k$ satisfy~\eqref{eq:hspeedseq}.
For completeness, in \S\ref{sub:speedpositive}
we show that $ v_k $ is positive and increasing in $ k $ without assuming that $ \orho_k>0 $.
In \S\ref{sub:speedexplicit} we analyze the formula~\eqref{eq:anothervfrompalm} using the description of $\mu$ from \S\ref{sub:muconstruct} to show that the soliton speeds are given by~\eqref{eq:hspeedexplicit}.
Finally, in \S\ref{sub:vertical} we briefly mention the results about vertical speeds.

\subsection{Existence of speeds via Palm measure and ergodicity}
\label{sub:speedexists}

Here we show that $\lim_t \frac{1}{t}x(\gamma^t)$ exists $ \mu $-a.s.\ and we give an explicit formula for it.
Recall the definition of $x(\gamma)$ given before the statement of Theorem~\ref{thm:hspeed}.
Let $\Gamma_k^\circ\eta:=\{x(\gamma): \gamma\in\Gamma_k\eta\}$ denote the set of leftmost sites of $k$-solitons of $\eta$.
If $\rho_{k}=0$, then $\Gamma_{k}$ is empty and~\eqref{eq:speedexists} holds for all $ \gamma\in\Gamma_k(\eta) $ by vacuity, for any value of $ v_k $.

From now on, we assume $\rho_{k}>0$, which implies $\mu(0\in \Gamma_k^\circ\eta)>0$.
Since $\Gamma_{k}^\circ\theta\eta=\theta\Gamma_{k}^\circ\eta$, the Palm construction described in \S\ref{sec:palm} applies.

For $\gamma\in\Gamma_k\eta$ and $z=x(\gamma)$, we define $\Delta_\eta^k(z) := x(\gamma^1)-x(\gamma)$, the size of the jump of $k$-soliton $\gamma$ after one iteration of $T$.
For $z\not\in \Gamma_k^\circ\eta$ we set $\Delta_\eta^k(z)=0$.
With this notation, the displacement of a tagged $k$-soliton after $t+1$ iterations of $T$ can decomposed as
\[
x(\gamma^{t+1})-x(\gamma) = \Delta^k_\eta(x(\gamma)) + \Delta^k_{T\eta}(x(\gamma^1)) + \cdots + \Delta^k_{T^t\eta}(x(\gamma^t))
.
\]
We want to divide both sides by $t$ and use the Ergodic Theorem.
This will require a couple of subtle observations.
The first step is to consider the system as seen from a typical $k$-soliton.

Let $\hcX^k$ be the set of configurations in $\cX$ such that $\Gamma_k^\circ\eta$ is doubly-infinite and contain $ 0 $.
Let $\hmu_k:=\Palm_{\Z}^{\Gamma_k^\circ}(\mu)$ be the Palm measure of $\mu$ with respect to $\Gamma_k^\circ\eta \subseteq \Z$, i.e., $\hmu_k=\mu(\,\cdot\,|\,0\in \Gamma_k^\circ\eta)$.
For $\eta\in\hcX^k$, let $\ww^k(\eta) := \inf\{x\ge 1: x\in \Gamma_k^\circ\eta \}$ and define $\htheta_k : \hcX^k \to \hcX^k$ as the ``shift to the next $k$-soliton'' given by $\htheta_k \eta := \theta^{\ww^k(\eta)}\eta$.
Also, for a $k$-soliton $\gamma$ such that $x(\gamma)=0$, let $\hT_k\eta := \theta^{x(\gamma^1)} T\eta$ denote the dynamics as seen from a tagged $k$-soliton.

From Lemma~\ref{lemma:palmconjugate}, $ \hmu_k $ is $\hT_k$-invariant.
Now for $\eta \in \hcX^k$ and $\gamma$ with $x(\gamma)=0$, the above decomposition becomes
\[
x(\gamma^{t+1}) = \Delta^k_{\eta}(0) + \Delta^k_{\hT_k\eta}(0) + \Delta^k_{\hT_k^2\eta}(0) + \cdots + \Delta^k_{\hT_k^t\eta}(0).
\]
By the Ergodic Theorem for $(\hT_k^t\eta)_{t \in \Z}$, $\lim\limits_{t\to\infty} \frac{1}{t}x(\gamma^t)$ exists $\hmu_k$-a.s., and on average it equals
\begin{equation}
\label{eq:vfrompalm}
v_k
:=
\int \Delta_\eta^k(0) \, \hmu_k(\dd\eta)
\end{equation}
(to avoid integrability issues, note that by Lemma~\ref{lem:xgamma} we have $ \Delta_\eta^k(0) \geq 0 $).

It remains to show that this limit is in fact non-random.

Consider the field
\[
\tv_k(\eta,z) :=
\begin{cases}
\lim\limits_{t\to\infty}
\dfrac{x(\gamma^t)}{t}, &
\text{if } z = x(\gamma) \text{ for some } \gamma\in\Gamma_k(\eta), \\
0,&\hbox{otherwise.}
\end{cases}
\]
By Proposition~\ref{prop:solitontrack} and Lemma~\ref{lem:basics}, if $\gamma, \tilde\gamma$ are two $k$-solitons with $x(\gamma)\le x(\tilde\gamma)$, then $x(\gamma^{t})\le x(\tilde\gamma^{t})$ for all $t$.
Hence, we have
\begin{equation}
\label{eq:speednondec}
\tv_k(\eta,x) \le \tv_k(\eta,y) \text{ for all } x\le y \text{ in } \Gamma_k^\circ\eta
.
\end{equation}
On the other hand, by Lemma~\ref{lemma:ergodic} the measure $ \hmu_k $ is $\htheta_k$-ergodic.
So, under $ \hmu_k $ the sequence $ \big( \tv_k(\htheta^j_k \eta,0) \big)_{j \in \Z} \in [0,+\infty]^\Z $ is a.s.\ non-decreasing by~\eqref{eq:speednondec} and its law is $ \theta $-ergodic, which implies that it is $ \hmu_k $-a.s.\ equal to some constant $ v_k \geq 0 $.
So $ \tv_k (\eta,x) = v_k $ for all $ x \in \Gamma^\circ_k(\eta) $, for $ \hmu_k $-a.e.\ $ \eta $.
By Lemma~\ref{lemma:aspalm}, it also holds for $ \mu $-a.e.\ $ \eta $.

We conclude with a short proof that $ v_k<\infty $.
From~\eqref{eq:invert} with $\Gamma_k^\circ$ in the role of $ Z $, we get
\begin{equation}
\label{eq:anothervfrompalm}
v_k
=
\frac
{
\int
\textstyle
\sum_{i=0}^{\ww(\eta)-1}
\one\{\theta^i\eta\in\hcX^k\}
\,
\Delta_{\theta^i\eta}^k(0)
\,
\hmu(\dd\eta)
}
{
\int
\textstyle
\sum_{i=0}^{\ww(\eta)-1}
\one\{\theta^i\eta\in\hcX^k\}
\,
\hmu(\dd\eta)
}
=
\frac{1}{\rho_k}
\int
\sum_{y=0}^{\ww(\eta)-1}
\Delta_{\eta}^k(y)
\,
\hmu(\dd\eta)
.
\end{equation}
Denote by $h(\gamma)$ the leftmost site of the tail of $\gamma$ and by $t(\gamma)$ the leftmost site of its head. Then Proposition~\ref{prop:solitontrack} implies that $\Delta_{\eta}^{k}(x)\le |h(\gamma)-t(\gamma)|$. Combined with the fact that intervals spanned by different $k$-solitons do not overlap (Lemma~\ref{lem:basics}), this yields $\sum_{y=1}^{\ww(\eta)} \Delta_{\eta}^k(y) \le \ww(\eta)$,
from which we get $v_k \le \frac{\hmu(\ww)}{\rho_k} = \frac{w_0}{\rho_k} < \infty$ by~\eqref{eq:wfinite}.

\subsection{Equation for speeds from soliton interactions}
\label{sub:speedequation}

We now prove that the soliton speeds satisfy~\eqref{eq:hspeedseq}.
It is enough to consider $ k $ such that $ \rho_k>0 $, and otherwise take~\eqref{eq:hspeedseq} as the definition of $ v_k $.
We start by proving the following identity on the displacement of solitons.

\begin{proposition}
\label{prop:displacement}
We have
\begin{equation}
\label{eq:displacement}
x(\gamma^{t})-x(\gamma) = k\, t - 2k \sum_{m>k} \tfrac12 M_t^\gamma(m) + \sum_{m<k} 2m \, N_t^\gamma(m),
\end{equation}
where
\begin{align*}
M_{t}^{\gamma}(m)&=\#\{(\tilde\gamma, s): \tilde\gamma\in \Gamma_{m}\eta, 0\le s < t, I(\gamma^{s})\subset I(\tilde\gamma^{s}) \text{ and } x(\tilde\gamma^{s+1}) \ne x (\tilde\gamma^{s}) \}, \\
N_{t}^{\gamma}(m)&=\#\{(\tilde\gamma, s): \tilde\gamma\in \Gamma_{m}\eta, 0\le s<t, I(\tilde\gamma^{s})\subset I(\gamma^{s}) \text{ and } I(\tilde\gamma^{s+1})\cap I(\gamma^{s+1})=\varnothing\}.
\end{align*}
\end{proposition}

\begin{proof}
Before proving it, let us first explain the intuition underpinning~\eqref{eq:displacement}.
From Lemma~\ref{lem:xgamma}, we know that
as long as there exists some $\tilde\gamma$ satisfying $I(\gamma)\subset I(\tilde\gamma)$,
$\gamma$ is prevented from moving.
We will see that it takes two time steps for $\tilde\gamma$ to overtake $\gamma$: the first step occurs when $\gamma$ is nested in the right half of $\tilde{\gamma}$, after which it will be nested in the left half of $\tilde{\gamma}$; the second step occurs when $\gamma$ is nested in the left half of $\tilde{\gamma}$ after which it will no longer be nested inside $\tilde{\gamma}$. Since $\tilde\gamma$ can be inside an even larger soliton, during which it stays frozen, these two times are not necessarily consecutive. On the other hand, when $\gamma$ overtakes a smaller soliton of size $\ell$ say, it causes $\gamma$ to move $2\ell$ units more than it normally would have. Put formally, we have the following identity
\begin{equation}
\label{eq:collision}
x(\gamma^{1})-x(\gamma)=k-k\cdot\one\{\exists\,\tilde\gamma: I(\gamma)\subset I(\tilde\gamma) \text{ and } x(\tilde\gamma^{1}) \ne x (\tilde\gamma) \}+\sum_{\ell<k} 2\ell N^{\gamma}_{0}(\ell).
\end{equation}
The proof of this identity is elementary and is postponed to \S\ref{sec:postponed}. Proceeding with the proof of~\eqref{eq:displacement}, we note that if $\tilde\gamma$ exists in~\eqref{eq:collision}, then it has to be unique:
solitons $\tilde\gamma$ satisfying $I(\tilde\gamma)\supseteq I(\gamma)$ are ordered by their sizes and only the largest one will move forward.
Iterating the above $t$ times then yields~\eqref{eq:displacement}.
\end{proof}

Now take $\tilde\gamma$ to be the leftmost $m$-soliton in $[1, \infty)$ and $\gamma$ the rightmost $k$-soliton in $(-\infty, 0]$.
We have shown previously $v_{m}=\lim_{t\to\infty} \frac{1}{t}x(\tilde\gamma^{t})$ and $v_{k}=\lim_{t\to\infty} \frac{1}{t}x(\gamma^{t})$, $\mu$-a.s. Combined with Lemma~\ref{lemma:comparex}, this yields
\begin{equation}
\label{eq:vk-vm}
v_{k}\le v_{m} \quad \text{if} \quad k<m \text{ and } \rho_{k}, \rho_{m}>0.
\end{equation}

From now on, we will work under $\hmu_{k}$ and always take $\gamma$ to be the soliton with $x(\gamma)=0$. To lighten the notation, we will drop the superscript ${\gamma}$ from $ M_t $ and $ N_t $.
Recall that $\orho_m = \frac{\rho_m}{w_0}$ is the mean number of $m$-solitons per unit space.

Let us assume for the moment $\orho_m=0$ for all large $m$.

By~\eqref{eq:speedexists} and~\eqref{eq:displacement}, to get~\eqref{eq:hspeedseq} it suffices to show the following.
For $m>k$ and $\ell<k$:
\begin{equation}
\label{eq:collisions}
\displaystyle \tfrac{1}{2t} M_t(m) \to \orho_m (v_m-v_k) \quad\text{and}\quad \tfrac{1}{t} N_t(\ell) \to \orho_\ell (v_k-v_\ell) \quad\text{ in probability}.
\end{equation}

\begin{proof}[Proof of~\eqref{eq:collisions}]
Let us denote $\mathbf{M}_{t}(m)=\{(\tilde\gamma, s): \tilde\gamma\in \Gamma_{m}\eta, 0\le s< t, I(\gamma^{s})\subset I(\tilde\gamma^{s}) \text{ and } x(\tilde\gamma^{s+1}) \ne x(\tilde\gamma^{s}) \}$, the set of solitons which overtake $\gamma$, so that $M_{t}(m)=\#\mathbf{M}_{t}(m)$. Let $\gamma_{-}$ be the leftmost $m$-soliton satisfying $x(\gamma_{-}) > x(\gamma)$ and let $\gamma_{+}$ be the rightmost $m$-soliton satisfying $x(\gamma_{+}) < x(\gamma)$.
Let us consider the $m$-solitons $\tilde\gamma$ which are found between $\gamma^{t}$ and $\gamma_{-}^{t}$ at time $t$ and denote by $\mathbf{L}^{-}_{t}$ their set, namely,
$\mathbf{L}^{-}_{t} := \{ \tilde\gamma\in \Gamma_{m}\eta: x(\gamma^{t}) < x(\tilde\gamma^{t}) < x(\gamma_{-}^{t}) \}$. 
Note that if $\tilde\gamma\in \mathbf L^{-}_{t}$, then necessarily $I(\gamma^{t})\cap I(\tilde\gamma^{t})=\varnothing$. 
We also define $\mathbf{L}^{+}_{t}=\{\tilde\gamma\in \Gamma_{m}\eta: I(\tilde\gamma^{t})\cap [x(\gamma^{t}), x(\gamma_{+}^{t})]\ne\varnothing\}$, 
which includes in particular the $m$-soliton $\tilde\gamma$ satisfying $I(\gamma^{t})\subset I(\tilde\gamma^{t})$ (if such soliton exists). 
Roughly speaking, $\#\mathbf{L}^{-}_{t}$ counts those $m$-solitons which have completed the 2-step overtaking of $\gamma$ between times $0$ and $t-1$, so it will give an undercount of $M_{t}(m)$, while $\#\mathbf{L}^{+}_{t}$ will be an overcount. 
More precisely, let us show that $2\#\mathbf{L}^{-}_{t}\le M_{t}(m)\le 2 \#\mathbf{L}_{t}^{+}$.
Indeed, if $(\tilde\gamma, s)\in \mathbf{M}_{t}(m)$, then Lemma~\ref{lemma:comparex} implies that necessarily $x(\tilde\gamma) < x(\gamma)$. It follows $x(\tilde\gamma)\le x(\gamma_{+})$, and then $x(\tilde\gamma^{t})\le x(\gamma_{+}^{t})$. On the other hand, since $I(\gamma^{s})\subset I(\tilde\gamma^{s})$,
there are only two possibilities at $s+1$:
either $x(\tilde\gamma^{s+1})> x(\gamma^{s+1})$ ($\tilde\gamma$ overtakes $\gamma$), which implies $x(\tilde\gamma^{t})>x(\gamma^{t})$ by Lemma~\ref{lemma:comparex};
or $I(\gamma^{s+1})\subset I(\tilde\gamma^{s+1})$.
In the latter case, if $s+1<t$, we then repeat the arguments used for $s$; otherwise, we have $I(\gamma^{t})\subset I(\tilde\gamma^{t})$. 
All cases considered,  we conclude that $\tilde\gamma\in \mathbf{L}^{+}_{t}$. Moreover, for such a $\tilde\gamma$, there are at most two times $s$ such that $(\tilde\gamma, s)\in \mathbf{M}_{t}(m)$:
the first time $s'$ when $I(\gamma^{s})\subset I(\tilde\gamma^{s})$ and $\tilde\gamma$ is the largest soliton containing $\gamma$ (i.e.~first step in overtaking), and the first time $s''$ after $s'$ when $\tilde\gamma$ moves forward (i.e.~second step in overtaking).
And for all $u\ge s''+1$, we have $x(\tilde\gamma^{u})> x(\gamma^{u})$ as implied by Lemmas~\ref{lem:xgamma} and~\ref{lemma:comparex}, so that $\gamma$ and $\tilde\gamma$ will never intersect again.
Since $s'$ and $s''$ do not necessarily belong to $[0, t)$, we only have $M_{t}(m)\le 2 \#\mathbf{L}_{t}^{+}$. For the other inequality, we note that if $\tilde\gamma\in \mathbf{L}^{-}_{t}$, then $x(\tilde\gamma)< x(\gamma_{-})$, which implies actually $x(\tilde\gamma)\le x(\gamma_{+})<x(\gamma)$.
As $x(\tilde\gamma^{t})> x(\gamma^{t})$, it follows that both $s' $ and $ s'' $ described previously belong to $[0, t)$. So the claimed inequalities follow.

Write whp to denote ``with high $ \hmu_k $-probability,'' and note that high $ \mu $ implies high $ \hmu_k $.
By~\eqref{eq:speedexists}, for every $ \varepsilon>0 $, $ x(\gamma^t) $ is in $ v_k t \pm \varepsilon t $ and $ x(\gamma_-^t) $ is in $ v_m t \pm \varepsilon t $ whp.
Using $ \theta $-invariance of $ \mu $, by the Ergodic Theorem applied to counting $ m $-solitons on a large interval, the number of $ m $-solitons of $ T^t \eta $ located in $ [v_k t \pm \varepsilon t , v_m t \pm \varepsilon t ] $ is within $ \orho_k(v_m-v_k)t \pm 3\varepsilon t $ whp.
Therefore, $ \# \mathbf{L}_t^- $ is in $ \orho_m (v_m-v_k)t \pm 3\varepsilon t $ whp.
Since $ \mathbf{L}_t^+ $ and $ \mathbf{L}_t^- $ can differ by at most two $ m $-solitons, this concludes the proof that $ \tfrac{1}{2t} M_t(m) \to \orho_m (v_m-v_k) $ in probability.

The proof for $N_{t}(\ell)$ is similar but simpler. The only difference is that for each $\tilde\gamma\in \Gamma_{\ell}\eta$, there is at most one time $s$ such that $(\tilde\gamma, s)$ is counted in the tally $N_{t}(\ell)$.
\end{proof}

To complete the proof of~\eqref{eq:hspeedseq}, it remains to drop the assumption that $\orho_m=0$ for all large $m$.
The above proof contains all the argument, except that combining~\eqref{eq:displacement} and~\eqref{eq:collisions} requires a limit and an infinite sum to commute. To prove the general case, we take expectation in~\eqref{eq:displacement} with $t=1$ and combine it with~\eqref{eq:vfrompalm}:
\begin{align}
v_k &= \int x(\gamma^1) \hmu_k (\dd \eta) =
\int \Big[ k - \sum_{m>k} k\,M_1(m) + \sum_{m<k} 2m\, N_1(m) \Big] \, \hmu_k(\dd \eta)
\\
& =
k - k\sum_{m > k} \int M_{1}(m) \hmu_{k}(\dd\eta)+ \sum_{m<k}2m \int N_1(m) \, \hmu_k(\dd \eta),
\end{align}
So it remains to show that $\int N_1(m) \, \hmu_k(\dd \eta) = \orho_m(v_k-v_m)$ and $\frac12\int M_1(m) \, \hmu_k(\dd \eta) = \orho_m(v_m-v_k)$.
Similarly to the argument in \S\ref{sub:speedexists}, we decompose
\[
N_{t+1}(m) = N^{\eta}_{1}(m) + N_1^{\hT_k \eta}(m) + N_1^{\hT_k^2 \eta}(m) + \dots + N_1^{\hT_k^t \eta}(m)
\]
with a slight abuse of notation.
By $\hT_k$-invariance, $\frac{1}{t}N_t^\eta(m)$ converges $\hmu_k$-a.s.\ to a random variable (i.e.\ a measurable function of $\eta$) whose average is $\int N_1^\eta(m) \, \hmu_k(\dd \eta)$.
On the other hand, by~\eqref{eq:collisions} this variable is constant and equal to $\orho_m|v_k-v_m|$.
The same argument works for $M_{1}(m)$.
This concludes the proof of~\eqref{eq:hspeedseq}, and also of Theorem~\ref{thm:hspeed}.

\subsection{Soliton speeds are positive and increasing}
\label{sub:speedpositive}

Here we show that $ v_k $ is positive and increasing in $ k $ without assuming that $ \orho_k>0 $.
Both follow from combining~\eqref{eq:hspeedseq} with the following bound on the total flow of solitons crossing the origin:
\begin{align}
\label{eq:maxflow}
2 \orho \cdot v
:=
2 \sum_m \orho_m v_m < 1
.
\end{align}
The idea is that $ \rho_m v_m $ is the average flow of solitons through the origin and each one takes $ 2 $ time steps. Before sketching the proof, we show that soliton speeds are positive and increasing.

Splitting and regrouping the sums in~\eqref{eq:hspeedseq} gives
\begin{align}
\label{ll2}
v_k
=
k - \sum_{m} 2 (k \wedge m) \orho_m v_m
+
v_k \, \sum_{m} 2 (k \wedge m) \orho_m
=
a v_k + b,
\end{align}
where $ a \leq 2\lambda <1 $ and $ b \geq k - 2k \orho \cdot v>0 $.
Hence, $ v_k>0 $.

Now, writing $ d_k := v_{k+1}-v_k > 0 $, by subtracting the above equation from itself we get
\begin{align*}
d_k
&=
1
+
d_k \sum_{m} 2 (k \wedge m) \orho_m
+
\sum_{m \geq k+1} 2 \orho_m (v_{k+1}-v_m)
=
a d_k + b + c,
\end{align*}
where
$ a \leq 2 \lambda < 1 $, $ b \geq 1 - 2\orho\cdot v > 0 $ and $ c = \sum_{m>k}2 \orho_m v_{k+1} \geq 0 $.
Therefore, $ d_k>0 $.

\begin{proof}
[Proof of~\eqref{eq:maxflow}]
The proof bears much similarity to that of~\eqref{eq:collisions}, so we only sketch the arguments here. Denote by $Q_{t}(k)=\#\{(\gamma, s): \gamma\in \Gamma_{k}\eta, 0\le s<t, 0\in I(\gamma^{s}) \text{ and } x(\gamma^{s+1})\ne x(\gamma^{s})\}$, the number of occurrences that a $k$-soliton is the largest one straddling $0$ between times $0$ and $t$.
Note that each $ k $-soliton that crosses the origin in the time interval $[0,t]$ contributes twice to $Q_{t}(k)$, 
once when the origin is in its tail and once in its head.

To find the growth rate of $Q_{t}(k)$ as $ t\to\infty $, let us denote by $\gamma_{+}$ the rightmost $k$-soliton $\gamma'$ satisfying $x(\gamma')\le 0$. The counts
$Q^{-}_{t}(k) = 2\#\{\gamma\in \Gamma_{k}\eta: 0<x(\gamma^{t})<x(\gamma_{+}^{t})\}$ and $Q^{+}_{t}(k) = 2\#\{\gamma\in \Gamma_{k}\eta: I(\gamma^{t})\cap [0, x(\gamma_{+}^{t})]\ne \varnothing\}$ satisfy $Q^{-}_{t}(k)-3 \le Q_{t}(k) \le Q^{+}_{t}(k)$. Arguments identical to those in~\eqref{eq:collisions} show that $\frac{1}{t}Q_{t}(k)\to 2\orho_{k}v_{k}$ in probability for each $k\ge 1$.

Applying with the Ergodic Theorem for $ T $, convergence of $\frac{1}{t}Q_{t}(k)$ says that $ 2 \orho_k v_k $ equals the probability that the largest soliton $ \gamma $ such that $ 0\in I(\gamma) $ belongs to $\Gamma_k \eta$.
For each $ \eta\in\cX $, either $ 0\in R\eta $ or there is a unique such $ k $, so $$ \mu(0\in R\eta) + \sum_k 2 \orho_k v_k = 1 ,$$ concluding the proof.
\end{proof}

\subsection{Recursion formulas for independent components}
\label{sub:speedexplicit}

We now prove Theorem~\ref{thm:speedsexplicit}.
Let $\zeta = M \eta$.
We are assuming that the field $\left( \zeta_k(i) \right)_{i\in\Z}$ is i.i.d.\ over $i$ for each $k$, and independent over $k$.
So let us proceed the other way around.
We let $P$ denote the law of $\zeta$ and $E$ the corresponding expectation.
In this notation, $\hmu$ is the law of $\eta=M^{-1} \zeta$.

Note that~\eqref{eq:skk1} and~\eqref{eq:rhok} give the first two equations in~\eqref{eq:hspeedexplicit}.
The third equation can be taken as the definition of $s_k$, and it is a simple recursive definition once one has $\rho$, $w$ and $\alpha$.
Combining these with~\eqref{eq:anothervfrompalm}, to get the last equation in~\eqref{eq:hspeedexplicit} we need to show that
\begin{equation}
\label{eq:totaldisplacement}
E \sum_{y=0}^{\ww(\eta)-1} \Delta^k_\eta(y)
=
\alpha_k \cdot s_k
.
\end{equation}
We now use the assumption that $\eta=M^{-1} \zeta$, where $M^{-1}$ denotes de reconstruction map of \S\ref{sub:reconstruction}.
First note that $$\sum_{y=0}^{\ww(\eta)-1} \Delta^k_\eta(y) = \sum_\gamma x(\gamma_1)-x(\gamma), $$
where the sum is over all $ k $-solitons $ \gamma $ located between Record~0 and Record~1.
Moreover, $k$-solitons appended to $k$-slots belonging to bigger solitons will stay put, just switching zeros for ones, and only the $k$-solitons which are appended directly to the $0$-th $k$-slot at $x=0$ will actually jump (Lemma~\ref{lem:xgamma}).
Furthermore, by Proposition~\ref{prop:solitontrack} and Lemma~\ref{lem:xgamma}, the size of their jump equals the distance between their leftmost $ 1 $ and their leftmost $ 0 $, that is, the distance between the tip of their head and the tip of their tail.

Now the number of $k$-solitons appended to the $0$-th $k$-slot is exactly $\zeta_k(0)$, which on average equals $\alpha_k$ by~\eqref{eq:alphak}.
So to conclude the proof of~\eqref{eq:totaldisplacement}, it is enough to observe that $s_k$ given by the recursion relation in~\eqref{eq:hspeedexplicit} in fact gives the expected distance between the tips of the head and tail of a typical $k$-soliton appended to a record.
By independence of $ \zeta_k $ over $ k $, being appended to a record is irrelevant, and we can consider a typical $ k $-soliton instead.

We claim that $ s_k $
equals the a.s.\ empirical average of the distance between the tip of the head and the tip of the tail among all $ k $-solitons,
and also that $ 2s_k $
equals the a.s.\ empirical average of the size of $ I(\gamma) $ among all $ k $-solitons $ \gamma $.
We made the previous statement stronger so we can prove it by induction.
Remember that the interval $ I(\gamma) $ consists of sites occupied by $ \gamma $ together with the smaller solitons appended to its slots.
For $k=1$ we have $s_1=1$, consistent with the fact that a $1$-soliton is always given by the strings $a=10$ or $\tilde{a}=01$ with nothing appended inside it.
For $k=2$, note that each $2$-soliton (including smaller solitons appended to its slots) is of the form $b=11\tilde{a}^*00{a}^*$ or $\tilde{b}=00{a}^*11\tilde{a}^*$ where $a^*$ stands for $\zeta_1(i)$ copies of $a$ and $\tilde{a}^*$ stands for $\zeta_1(j)$ copies of $\tilde{a}$ for some $i,j$ which are determined by $(\zeta_m)_{m\geq2}$.
So the average size of $11\tilde{a}^*$ and that of $00a^*$ both equal $2+2\alpha_1$.
For $k=3$, note that each $3$-soliton is of the form
${c}=11\tilde{a}^*1\tilde{a}^*\tilde{b}^*00{a}^*0{a}^*{b}^*$
or
$\tilde{c}=00{a}^*0{a}^*{b}^*11\tilde{a}^*1\tilde{a}^*\tilde{b}^*$
where $b^*$ stands for $\zeta_2(i)$ independent copies of $b$, etc.
So the average size of each half of a $3$-soliton equals $s_3=3 + 2s_2 \alpha_2 + 4 s_1 \alpha_1$.
The induction step is clear, which concludes the proof of Theorem~\ref{thm:speedsexplicit}.

\subsection{Vertical speed}
\label{sub:vertical}

A previous version of this work [\href{https://arxiv.org/abs/1806.02798v3}{arXiv:1806.02798v3}] focused on measuring vertical speed of solitons, before the authors realized that it was in fact possible to study the horizontal speeds.

We now briefly mention the results about vertical speed.
The analysis is very similar to what is now done in \S\ref{sub:speedequation} above, and tedious details will be omitted.

For $\beta \in R\eta$, we define
\(
\beta^{t}:=r(T^t \xi,j),
\text{ where }
\beta=r(\xi,j)
.
\)
Note that this definition does not depend on the lift $\xi[\eta]$.
Recalling~\eqref{eq:trackslot},
we define the displacement of a $k$-bearer $\pi\in S_k\eta$ measured in terms of records by
\[
y_k^t(\eta,\pi) = \# \left\{ \beta\in R\eta: \pi < \beta \text{ and } \pi^{k,t} \ge \beta^t \right\}
,
\quad
\pi \in S_k \eta
.
\]
In case there is a $k$-soliton $\gamma\in\Gamma_k\eta$ appended to the $k$-slot $\pi$ in $\eta$, the tagged $k$-soliton $\gamma^t$ will appear appended to the $k$-slot $\pi^{k,t}$ in $T^t\eta$, so $y_k^t$ also measures the displacement of tagged $k$-solitons, but it is well defined even when there are no $k$-solitons. The limit~\eqref{eq:6.2} below gives a physical meaning to the soliton speeds $v_k$ when $\rho_k=0$ in Theorem~\ref{thm:hspeed}.

\begin{theorem}
\label{thm:speeds}
Let $\mu$ be a measure on $\cX$ such that under $\hmu$ each $k$-th component $M_k \eta$ is i.i.d.\ and they are independent over $k$.
There exists a non-decreasing deterministic sequence $h=(h_k)_{k\ge 1}$ such that, $\mu$-a.s.\ on $\eta$, for all $k\in\N$ and $\pi\in S_k\eta$,
\begin{align}
\label{eq:vk0}
\lim_{t\to\infty} \frac{y_k^t(\eta,\pi)}{t} &= h_k
\in [k,\infty].
\end{align}
Assuming
\(
\sum_{k} k^2 \rho_k < \infty,
\)
the vector $(h_k)_{k\ge 1}$ is the unique finite solution of the linear system
\begin{align}
\label{eq:veleq}
h_k &= k + \sum_{m>k} 2(m-k)(h_m-h_k)\rho_m ,\quad k\ge 1
.
\end{align}
The asymptotic speed of tagged records is given by
\begin{align}
\label{eq:6.1}
\lim_{t\to\infty} - \frac{\beta^t}{t}
=
v_0
:=
\sum_{m\ge 1} 2m \rho_m h_m,
\quad
\text{for all } \beta\in R\eta,\
\mu\text{-a.s.}
\end{align}
The asymptotic speed $v_k$ of $k$-bearers is also given by
\begin{align}
\label{eq:6.2}
\lim_{t\to\infty} \frac{\pi^{k,t}}{t} =
v_k =
h_k w_0 - v_0
,\quad
\text{ for all } \pi\in S_k\eta,\
\mu\text{-a.s.}
,
\end{align}
From the two last equations, the speed of tagged records is also given by
\begin{align}
\label{eq:anothervzero}
v_0=\sum_{m\ge 1} 2m \rho_m v_m.
\end{align}
Furthermore, the vertical speed of any walk representation $\xi=\xi[\eta]$ is given by
\begin{equation}
\label{eq:h0}
\lim_{t\to\infty} -\frac{T^t\xi(0)}{t} = h_0 := \frac{v_0}{w_0}.
\end{equation}
\end{theorem}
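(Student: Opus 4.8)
The plan is to derive every assertion from the Palm--ergodic machinery of \S\ref{sub:speedexists}, supplemented by two purely kinematic conversions between horizontal and vertical displacement and by the already-established speed equation~\eqref{eq:hspeedseq}. \emph{Existence of the limits.} I would fix $k$ and pass to the dynamics seen from a tagged $k$-slot, exactly as $\hmu_k$ and $\hT_k$ were built in \S\ref{sub:speedexists} but using the Palm measure with respect to the set $S_k\eta$ of $k$-slot positions (which has positive density under the i.i.d.\ assumption) instead of $\Gamma_k^\circ\eta$. The record count $y_k^t(\eta,\pi)$ is an additive cocycle along this flow, $y_k^{t+s}(\eta,\pi)=y_k^t(\eta,\pi)+y_k^s(T^t\eta,\pi^{k,t})$, because tagged $k$-slots keep their mutual order while records keep theirs (Record~$j$ stays Record~$j$), so a record once overtaken is never released. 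The Ergodic Theorem then gives a.s.\ convergence of $\tfrac1t y_k^t$ to an integrable limit, and the monotonicity argument of \S\ref{sub:speedexists} --- the limiting field is $\htheta_k$-covariant and nondecreasing in $\pi$, hence constant by ergodicity --- identifies it with a deterministic $h_k$, the same for every $\pi\in S_k\eta$ since consecutive $k$-slots (between which there is no record, records being $k$-slots themselves) differ in their counts by $O(1)$. The identical scheme applied to $\beta^t$ under the record-Palm measure $\hmu$, and to $\pi^{k,t}$ under the $k$-slot-Palm measure, produces the constants $-v_0$ and $v_k$ of~\eqref{eq:6.1}--\eqref{eq:6.2}, where $v_k$ agrees with the horizontal soliton speed of Theorem~\ref{thm:hspeed} on slots carrying a soliton and extends it to all $k$-slots by order preservation.

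\emph{Kinematic identities.} For~\eqref{eq:6.2} I would track the label of the tagged $k$-slot among records. By the Ergodic Theorem under $\hmu$ with $\hmu(\ww)=w_0$ from~\eqref{eq:wfinite}, records have spatial density $1/w_0$, so Record~$j$ sits at $r(\xi,j)\approx j\,w_0$ at time~$0$ and, keeping its label, at $r(T^t\xi,j)\approx j\,w_0-v_0t$ at time~$t$. Since at time~$t$ the slot lies at $\pi^{k,t}\approx v_k t$ next to Record $\approx y_k^t\approx h_k t$, equating $v_k t\approx h_k t\,w_0-v_0t$ gives $v_k=h_k w_0-v_0$, which is~\eqref{eq:6.2}. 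Reading the same bookkeeping at the fixed site~$0$ gives~\eqref{eq:h0}: $-T^t\xi(0)$ equals, up to an $o(t)$ excursion-height error, the number of records in $[r(T^t\xi,0),0]$, an interval of length $\approx v_0 t$ carrying $\approx v_0 t/w_0$ records, whence $-T^t\xi(0)/t\to v_0/w_0=h_0$.

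\emph{Record speed, the system, and finiteness.} The genuinely dynamical input is~\eqref{eq:6.1}. I would obtain it either by a flux balance for records --- in a large window the total leftward displacement rate of records equals the contribution of solitons overtaking them, each $m$-soliton (density $\orho_m$, crossing records at rate $h_m$) displacing a record by its core width $2m$ per crossing, so $v_0/w_0=\sum_m 2m\,\orho_m h_m$, i.e.\ $v_0=\sum_m 2m\rho_m h_m$ --- or, more self-containedly, by extracting~\eqref{eq:veleq} directly from Theorem~\ref{thm:hierarchy}: the relabeled $k$-slot flow through a tagged $k$-soliton is $k$ per step, and the offset~\eqref{itk} caused by each $m$-soliton ($m>k$) crossing Record~$0$ contributes $2(m-k)$ carried $k$-slots whose relative record rate is $h_m-h_k$, giving $h_k=k+\sum_{m>k}2(m-k)(h_m-h_k)\rho_m$. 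Either way, substituting $h_k=(v_k+v_0)/w_0$ and $v_k-v_m=(h_k-h_m)w_0$ into~\eqref{eq:hspeedseq} reduces it, after cancellation using $w_0=1+\sum_m 2m\rho_m$, exactly to~\eqref{eq:veleq}; the monotonicity of $(v_k)_k$ from the recursion of Theorem~\ref{thm:speedsexplicit} transfers through~\eqref{eq:6.2} to monotonicity of $(h_k)_k$, which via~\eqref{eq:veleq} yields $h_k\ge k$, and~\eqref{eq:anothervzero} follows by the short manipulation of~\eqref{eq:6.1}--\eqref{eq:6.2} already recorded in the statement. Finiteness and uniqueness of the solution of~\eqref{eq:veleq} under $\sum_k k^2\rho_k<\infty$ I would get as in \S\ref{sub:finitew}: writing $h_k$ as the weighted average of $k$ and $(h_m)_{m>k}$ with weights $2(m-k)\rho_m$, the moment condition keeps $\sum_{m>k}2(m-k)\rho_m h_m$ finite when $h_m\sim m$, and a sub-Markovian contraction representation pins down the unique finite solution.

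The main obstacle is~\eqref{eq:6.1}. Pinning the per-crossing record displacement to exactly $2m$ requires the same delicate collision analysis as in \S\ref{sub:speedequation} --- distinguishing whether the overtaking $m$-soliton is itself free, and attributing displacement to each soliton core separately so that nested smaller solitons are not double-counted --- and, as there, the passage from finitely supported $\rho$ to the general case forces one to justify interchanging the long-time average with the infinite sum over $m$, which I would handle by the truncation and monotone-approximation device of \S\ref{sub:finitew} (or, equivalently, by computing the one-step expected record displacement directly from the reconstruction map and the independence of components, as was done for $s_k$ in \S\ref{sub:speedexplicit}). A secondary subtlety is that, without the moment condition, the cocycle limit $h_k$ exists only in $[0,\infty]$ and may be $+\infty$; confining it to $[k,\infty)$ is exactly what the combination of monotonicity, the system~\eqref{eq:veleq}, and $\sum_k k^2\rho_k<\infty$ accomplishes.
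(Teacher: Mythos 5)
Your proposal is correct and follows essentially the same route as the paper's (outline) proof: existence of the limits via Palm measures and the ergodic theorem as in \S\ref{sub:speedexists}, the record speed~\eqref{eq:6.1} by the mass-conservation/flux argument (each tagged $m$-soliton crossing a tagged record pushes it $2m$ boxes left, and crossings occur at rate $\rho_m h_m$ per record), the identity~\eqref{eq:6.2} by tracking the index of the record next to the tagged slot and using that Record~$j$ starts near $w_0 j$ and moves left at speed $v_0$, and~\eqref{eq:h0} by converting the walk height at the origin into a record count. The one genuine variation is your treatment of~\eqref{eq:veleq}: the paper obtains it by a second collision analysis carried out in the record frame (each overtaking $m$-soliton increments the record-measured position of a $k$-soliton by $2(m-k)$, smaller solitons contribute nothing), whereas your preferred route substitutes $v_k=h_k w_0-v_0$ and $v_k-v_m=(h_k-h_m)w_0$ into the already-proved system~\eqref{eq:hspeedseq} and cancels using $w_0=1+\sum_m 2m\rho_m$; the algebra indeed reduces~\eqref{eq:hspeedseq} exactly to~\eqref{eq:veleq}, so this is a legitimate shortcut that avoids repeating the overtaking bookkeeping, at the cost of importing Theorem~\ref{thm:hspeed} (available here, since the i.i.d.-component measure is $T$-invariant and shift-ergodic by Theorem~\ref{thm:invariant}). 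Your finiteness/uniqueness sketch also matches the paper's toolkit: setting $g_k=h_k w_k$ turns~\eqref{eq:veleq} into $g_k=k+\sum_{m>k}2(m-k)\alpha_m g_m$, the same sub-Markovian kernel as~\eqref{eq:skk1} with source $k$ in place of $1$, which is where $\sum_k k^2\rho_k<\infty$ enters. Both the paper and your write-up leave the same details (interchange of long-time limits with infinite sums, exact per-crossing displacement accounting) at the level of an outline, deferring them to the longer arXiv version.
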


Let us outline the argument for the first four equations.

The proof of~\eqref{eq:vk0} is similar to that of~\eqref{eq:speedexists} and only uses $ \theta $-ergodicity of $ \eta $.
Each time an $m$-soliton overtakes a $k$-bearer, this causes the position of the $k$-bearer measured in records to be incremented by an extra factor of $2(m-k)$. On the other hand, the position of a $ k $-bearer measured in records is not affected by overtaking smaller solitons.
These two facts explain the origin of~\eqref{eq:veleq}, but the proof uses a truncation argument that deletes all large components, and the i.i.d.\ assumption is to ensure that the resulting configuration is still $ \theta $-ergodic.

The speeds $ h_k $ are either all finite or all infinite, they are finite if and only if $ \sum_m m^2 \rho_m < \infty $.
Deleting large solitons, one gets from~\eqref{eq:veleq} that $ 0 \leq h_m - h_k \leq m-k $.
Plugging this back into~\eqref{eq:veleq} and using $ \sum_m m \rho_m < \infty $, one eventually gets $ h_{k+j} - h_k \geq \delta j $ for some $ \delta>0 $ and all large $ k $.
This gives $ \delta k \leq h_k \leq k + 2 \sum_m m^2 \rho_m $ uniformly with respect to the deletion threshold, which yields stated equivalence.
Also, under this condition, one can show existence and uniqueness of the solution to~\eqref{eq:veleq} using truncation, as we did for~\eqref{eq:skk1}.

Each time a tagged $m$-soliton crosses a tagged record from left to right, it causes the record to move $2m$ boxes left.
On the other hand, by mass conservation the number of such crossings by time $t$ equals $\rho_m y_m^t$ which is about $\rho_m h_m t$ by~\eqref{eq:vk0}.
Summing over $m$ we get~\eqref{eq:6.1}.
As a side remark, the vertical speed $h_0$ is given by the expected vertical jump at the origin $h_0 = \int (\xi[\eta](0)-T\xi[\eta](0)) \mu(\dd\eta)$, and intuitively this is related to $ \sum_k k^2 \rho_k $ because the excursion contained the origin is size-biased.

Finally, by~\eqref{eq:vk0}, the $k$-bearer $\pi = 0 \in S_k \xi$ will typically have crossed about $h_k t$ records by time $t$, so it will be between two tagged records with initial index about $h_k t$.
By ergodicity, the initial position of these records is about $w_0 h_k t$, so by~\eqref{eq:6.1} their position at time $t$ will be about
$w_0 h_k t - v_0 t$.
Dividing by $t$ and taking a limit one gets~\eqref{eq:6.2}.

From~\eqref{eq:6.2} we have $h_m = \frac{v_0+v_m}{w_0}$, substituting into~\eqref{eq:6.1} and using~\eqref{eq:wfinite} we get~\eqref{eq:anothervzero}.
To prove~\eqref{eq:h0} we note that after $t$ iterations of $T$, Record~$i$ will be at $x=o(t)$ if $r(\xi,i) = v_0 t + o(t)$, which implies that $T^t \xi(0) = -i + o(t)$.
On the other hand, $r(\xi,i) = w_0 i + o(i)$, whence $T^t \xi(0) = -h_0 t + o(t)$, concluding the proof.

\section{Postponed proofs}
\label{sec:postponed}

\begin{proof}
[Proof of Proposition~\ref{prop:solitontrack}]
Let us prove for finite $\eta$ first.
The proof is by induction on number of balls contained in $\eta$.
Identifying $0$ with ``$\ominus$'' and $1$ with ``$\oplus$'', consider the following data stream version of the TS-Algorithm.

\begin{algorithm}[H]
Start with the word $\ominus^\infty$ which is semi-infinite to the left \\
\For{\rm each symbol in the finite configuration $\eta$}{
Append the symbol to the word
\\
Perform annihilation if the two last runs have the same length
\\
Symbols that annihilate correspond to a soliton
}
\end{algorithm}

For example, for the finite sequence $\eta=\oplus\oplus\ominus\oplus\oplus\ominus\ominus\oplus\oplus\oplus\ominus\ominus\ominus\ominus\ominus$ the algorithm would produce the words
$\ominus^\infty\oplus$,
$\ominus^\infty\oplus^2$,
$\ominus^\infty\oplus^2\ominus$,
$\ominus^\infty\oplus^2\underline{\ominus\oplus}$,
$\ominus^\infty\oplus^3$,
$\ominus^\infty\oplus^3\ominus$,
$\ominus^\infty\oplus^3\ominus^2$,
$\ominus^\infty\oplus^3\ominus^2\oplus$,
$\ominus^\infty\oplus^3\underline{\ominus^2\oplus^2}$,
$\ominus^\infty\oplus^4$,
$\ominus^\infty\oplus^4\ominus$,
$\ominus^\infty\oplus^4\ominus^2$,
$\ominus^\infty\oplus^4\ominus^3$,
$\ominus^\infty\underline{\oplus^4\ominus^4}$,
and
$\ominus^\infty\ominus=\ominus^\infty$,
identifying a $1$-soliton, a $2$-soliton and a $4$-soliton.
For the example in Fig.~\ref{fig:algo}, it produces
$\ominus^\infty\oplus$,
$\ominus^\infty\oplus^2$,
$\ominus^\infty\oplus^3$,
$\ominus^\infty\oplus^4$,
$\ominus^\infty\oplus^4\ominus$,
$\ominus^\infty\oplus^4\ominus^2$,
$\ominus^\infty\oplus^4\ominus^2\oplus$,
$\ominus^\infty\oplus^4\ominus^2\underline{\oplus\ominus}$,
$\ominus^\infty\oplus^4\ominus^2\oplus$,
$\ominus^\infty\oplus^4\underline{\ominus^2\oplus^2}$,
$\ominus^\infty\oplus^5$,
$\ominus^\infty\oplus^5\ominus$,
$\ominus^\infty\oplus^5\underline{\ominus\oplus}$,
$\ominus^\infty\oplus^5\ominus$,
$\ominus^\infty\oplus^5\ominus^2$,
$\ominus^\infty\oplus^5\ominus^3$,
$\ominus^\infty\oplus^5\ominus^4$,
$\ominus^\infty\oplus^5\ominus^4\oplus$,
$\ominus^\infty\oplus^5\ominus^4\oplus^2$,
$\ominus^\infty\oplus^5\ominus^4\oplus^2\ominus$,
$\ominus^\infty\oplus^5\ominus^4\oplus^2\underline{\ominus\oplus}$,
$\ominus^\infty\oplus^5\ominus^4\oplus^3$,
$\ominus^\infty\oplus^5\ominus^4\oplus^3\ominus$,
$\ominus^\infty\oplus^5\ominus^4\oplus^3\ominus^2$,
$\ominus^\infty\oplus^5\ominus^4\underline{\oplus^3\ominus^3}$,
$\ominus^\infty\underline{\oplus^5\ominus^5}$,
identifying three $1$-solitons, a $2$-soliton, a $3$-soliton, and a $5$-soliton.

Let us call \emph{$\oplus$-alternating suffix} (or simply \emph{$\oplus$-suffix}) a finite word $\omega$ which is either empty or starts with $\oplus$ and such that each run in the word is strictly longer than the next one.
So the above algorithm always produces words given by $\ominus^\infty$ followed by a $\oplus$-suffix.
We define \emph{$\ominus$-suffix} in the obvious way.
The \emph{net value} $v(\omega)$ of a finite suffix $\omega$ is the number of $\oplus$'s minus the number of $\ominus$'s.
We will use the following observation about the above procedure.

Observation~1. The net value of a non-empty $\oplus$-suffix $\omega$ is positive and it is at most equal to the length $\ell_1(\omega)$ of its first run (e.g.\ for $\cdots\oplus^4\ominus^3\oplus$ we have $0 < 2 \le 4$).
In particular, $v(\omega)=\ell_1(\omega)$ only if it consists of a single run.

Observation~2. The net value of a finite suffix $\omega$ equals the net value of the portion of $\eta$ that generated it, which in turn is given by the net increase in $\xi[\eta]$.

Observation~3. If the suffixes $\omega_1,\dots,\omega_n$ produced while processing a certain piece of $\eta$ are all $\oplus$-suffixes, then $\ell_1(\omega_n)$ equals the maximal net value of $\omega_i$ for $i=1,\dots,n$.
In particular, if $v(\omega_n)=\max_{i} v(\omega_i)$, then $\ell_1(\omega_n)=v(\omega_n)$ and, by Observation~1, $\omega_n$ consists of a single run.

To prove the proposition we will split a finite $\eta$ into three blocks and analyze how they interact under the data stream algorithm, both before and after the application of $T$, as shown in Fig.~\ref{fig:proof-prop-conserv}.

\begin{figure}[b]
\centering
\includegraphics[width=.8\textwidth]{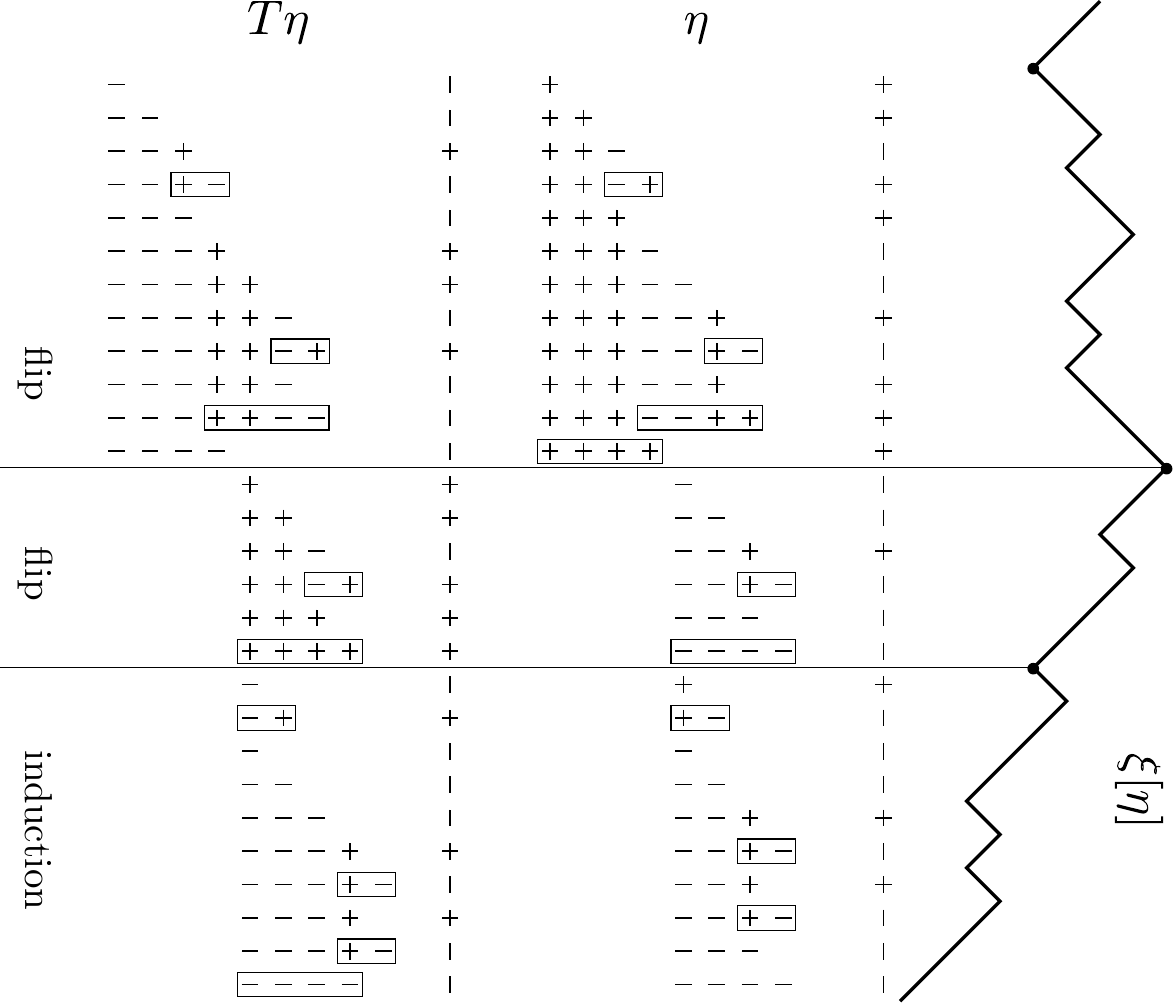}
\caption{\small%
Example showing conservation of solitons by splitting space in three parts: rising, falling and remainder. After applying $T$, the configuration on the rising and falling parts are flipped, the smaller solitons are conserved and flipped, the biggest soliton moves forward, and will have its tail in the remainder part. Applying $T$ to the remainder part conserves solitons by induction.}
\label{fig:proof-prop-conserv}
\end{figure}

Define the first non-empty soft excursion as the piece of $\eta$ going from the first $\oplus$ until the first point that makes the net value equal zero.
Split this excursion into \emph{rising} and \emph{falling} parts as follows.
The rising part goes until the point where the net value $k$ is maximal (in case the maximum is attained more than once, take the rightmost one), and the falling part consist of the remaining boxes, until the end of the first soft excursion.
The \emph{remainder} consists of all the sites to the right of the falling block.
Let $I_1,I_2,I_3 \subseteq \Z$ denote these sets of sites.

By definition of $I_1$ and by Observation~2, the streaming algorithm applied to $\eta$ on $I_1$ always produces a non-empty $\oplus$-suffix, its net value is always at most $k$ and ends being equal to $k$.

By Observations~1 and~3, the word produced by the algorithm after processing this first block is $\oplus^k$.
By similar considerations, the algorithm applied to $\eta$ on $I_2$ always produces non-empty $\ominus$-suffixes whose net values are strictly between $-k$ and $0$, except for the final step when it produces $\ominus^k$.

Hence, when processing $\eta$ on $I_1 \cup I_2$, the $\oplus^k$ obtained after processing the rising part is kept untouched until the very end, when it is annihilated by the $\ominus^k$ obtained after processing the falling part.
So when the algorithm starts processing $\eta$ on $I_3$ there is no suffix left by the previous steps and this part of $\eta$ is decomposed into solitons just as it would if it was processing $\eta_{|_{I_3}}$ instead.

Now notice that, by the definition of $T$ on $\xi[\eta]$, the net value of $T \eta$ on any prefix of $I_3$ is non-positive.
Indeed, at the rightmost site $y$ of $I_2$, the walk $\xi$ coincides with its running minimum, so $T\xi(y)=\xi(y)$ and $T\xi(x) \le T\xi(y)$ for all $x>y$.
Hence, applying the streaming algorithm to this portion of $T\eta$ produces a $\ominus$-suffix at all steps.

Also, since $\xi(x) \ge \xi(y)$ for all $x\in I_1 \cup I_2$, by definition of $T$ we have that $\eta$ and $T \eta$ are the complementary of each other on these two blocks.
So by the previous observations, the streaming algorithm applied to $\eta$ and to $T \eta$ on $I_1$ will produce exactly the opposite suffixes at every step.
The same is true for $I_2$.
The only difference is that now the $\ominus^k$ produced after processing $T\eta$ on $I_1$ is incorporated into the infinite prefix $\ominus^\infty$, and it will not annihilate with the $\oplus^k$ obtained after processing $T\eta$ on $I_2$.
Hence, while processing $T\eta$ on $I_1 \cup I_2$, the same solitons will be generated, with $\oplus$ replaced by $\ominus$, that is, with the head occupying the former position of the tail, except for this last $k$-soliton.

Finally, the $\oplus^k$ obtained after processing $T\eta$ on $I_1 \cup I_2$ will not increase its length while processing $T\eta$ on $I_3$, because processing $T\eta$ on $I_3$ always produces $\ominus$-suffixes.
So this run $\oplus^k$ is preserved until the first time when the processing of $T\eta$ on $I_3$ produces a $\ominus^k$, and they both annihilate.
This eventually occurs because $T \eta$ has infinitely many records to the right.
So again the head of the corresponding $k$-soliton will take the position previously occupied by the tail of a $k$-soliton.
Moreover, when it occurs, it annihilates $\ominus$'s that were not going to be annihilated while processing $(T\eta)_{|_{I_3}}$
because they would have been simply absorbed by the prefix $\ominus^\infty$.
Hence, the presence of this $\oplus^k$ does not change how the algorithm processes $T\eta$ on $I_3$, neither before nor after such annihilation occurs.
To conclude, note that $\eta_{|_{I_3}}$ contains fewer balls than $\eta$ so we can assume by induction that the tails of all the solitons of $\eta_{|_{I_3}}$ will become the heads of the solitons of $T\eta_{|_{I_3}}$, proving the proposition for the case of a finite configuration $\eta$.

We finally consider general $\eta\in\cX$.
Let $A$ be a set of $k$ sites.
Let $y_2,y_3$ be records for $T\eta$ to the left and right of $A$, respectively.
Let $y_1<y_2$ and $y_4>y_3$ be records for $\eta$.
Let $\eta'$ denote the restricted configuration, given by $\eta'(x):=\eta(x)\one\{x\in[y_1, y_4]\}$.
Since solitons are always contained in the interval between two consecutive records, if some $\gamma\in\Gamma_k\eta$ intersects $A$ then it is contained in $[y_1,y_4]$. Since $\eta'\le\eta$, and $x$ being a record for $\eta$ is a non-increasing property of $\eta$, $y_1$ and $y_4$ are also records for $\eta'$.
Hence, the soliton configuration $\Gamma_k\eta$ restricted to $[y_1,y_4]$ coincides with $\Gamma_k\eta'$.
Now notice that $T\eta'=T\eta$ on $[y_1,y_4]$ and $T\eta'=0$ on $(-\infty,y_1]$.
In particular, $T\eta'=T\eta$ on $[y_2,y_3]$, $T\eta' \le T\eta$ on $(-\infty,y_2]$,
and thus $y_2,y_3$ are also records for $T\eta'$.
Hence, by the same argument as above, if some $\gamma\in\Gamma_k T\eta'$ intersects $A$ then it is contained in $[y_2,y_3]$, moreover $\Gamma_k T \eta$ restricted to $[y_2,y_3]$ coincides with $\Gamma_k T \eta'$ restricted to $[y_2,y_3]$.
Since $\eta'$ is a finite configuration, by the previous case this concludes the proof.
\end{proof}

\begin{proof}[Proof of Lemma~\ref{lem:basics}]
Let us first take a configuration with a finite number of balls and let $\gamma$ be a soliton.
The second statements follow from the definition of $I(\gamma)$. For the rest,
note that while implementing the Takahashi--Satsuma algorithm, when $\gamma$ is identified, it appears as a set of consecutive sites (after the previous steps which remove solitons lodged inside $I(\gamma)$). Moreover, the rules of the algorithm imply that solitons intersecting $I(\gamma)$ which are removed from the previous steps must have a strictly smaller size than $\gamma$ and are completely contained in $I(\gamma)$. A formal proof can be done by an induction on the number of solitons.
These properties can be extended to infinite configurations similarly as in the proof of Proposition~\ref{prop:solitontrack} above, since the algorithm applies to each excursion of the
configuration.
\end{proof}

\begin{proof}[Proof of Lemma~\ref{lem:xgamma}]
Assume in the first place there is a finite number of balls and we proceed by an induction on the number of solitons. The basic case when there is only one soliton is clear. Suppose that the statement holds true for a system with up to $n$ solitons and suppose that $\eta$ is a configuration with $n+1$ solitons. Let $k$ be the minimal size of its solitons and let $\gamma$ be the leftmost $k$-soliton. Denote by $a=x(\gamma)$.
Then Lemma~\ref{lem:basics} implies that $\cH(\gamma)\cup \T(\gamma)=[a, a+2k)$.

Let us first show $x(\gamma^{1})\ge x(\gamma)=a$.
If $x(\gamma^{1})<a$, combined with the fact $\cH(\gamma^{1})=\T(\gamma)$, this would imply $\cH(\gamma^{1})\cup\T(\gamma^{1})=[a-k, a+k)$, as $\gamma^{1}$ has minimal size in $T\eta$.
Moreover, in this case we would have $\T(\gamma^{1})=\{a-k, \dots, a-1\}$ and $\T(\gamma)=\{a, \dots, a+k-1\}$. If $a-1$ is a record of $\eta$, then $a$ will also be a record of $\eta$ as $\eta(a)=0$, which is absurd. If $a-1\notin R\eta$ but $a-2\in R\eta$, we readily check $a+1\in R\eta$ in this case, which is also impossible unless $k=1$. Proceeding in this way, we deduce that all the sites of $\T(\gamma^{1})$ must belong to the head of some soliton $\gamma'$, which must have size $>k$ as a result of our choice of $\gamma$. Note the site $a-k-1$ also belongs to the head of $\gamma'$; otherwise the TS algorithm would produce a $k$-soliton from the sites $[a-k, a+k]$. Since none of the sites in $[a-k-1, a+k]$ are records, they are flipped in $T\eta$, so that we have a run of at least $k+1$ 0's preceding $a$. But this contradicts with the rules of the TS algorithm to have a $k$-soliton on $[a-k, a+k]$. This means it is impossible to have $x(\gamma^{1})<a$.

If $a-1\in R\eta$, then we must have $\cH(\gamma)=[a, a+k)$ and $\T(\gamma)=[a+k, a+2k)$; otherwise we would have $k+1$ 0's followed by $k$ 1's, contradicting the TS algorithm. Then in $T\eta$, the sites in $[a, a+2k)$ get flipped while $a-1$ still has value 0. In particular, we cannot have $\gamma^{1}$ on the same sites as $\gamma$. But $x(\gamma^{1})\ge x(\gamma)$ and $\cH(\gamma^{1})=\T(\gamma)$. So the only possibility is $\T(\gamma^{1})=[a+2k, a+3k)$.

If instead $a-1\notin R\eta$, then it also gets flipped in $T\eta$. Note that we must have $\eta(a-1), \eta(a)$ having different values; otherwise the TS algorithm would not produce a soliton starting from $a$. In that case it is straightforward to check that $\gamma^{1}$ is the soliton on the sites $[a, a+2k)$ and then $\T(\gamma^{1})=\cH(\gamma)$.

So far we have shown $x(\gamma^{1})\ge x(\gamma)$ and that $\T(\gamma^{1})=\cH(\gamma)$ if and only if $a-1$, the $k$-slot to which $\gamma$ is appended is not a record; so it must belong to some soliton $\gamma'$ with size $>k$. Moreover, in this case, we have $I(\gamma)\cap I(\gamma')\ne\varnothing$, which yields $I(\gamma)\subset I(\gamma')$ by Lemma~\ref{lem:basics}. 
Applying the induction hypothesis to the configuration $\eta'$ obtained from $\eta$ by removing $\gamma$, we see that the statement also holds for the $n$ solitons of $\eta'$. Together  with the previous arguments for $\gamma$, this implies the statement holds for all the $n+1$ solitons of $\eta$, since if we have $I(\gamma_{1})\subset I(\gamma_{2})$, then inserting smaller solitons will not affect this.
To extend to the infinite configuration, we follow the same strategy as employed in the proof of Theorem 3.1, noting that $T\eta(x)$ depends only on $\eta|_{[r(x), x]}$ with $r(x)$ being the rightmost record preceding $x$; we omit the details here.
\end{proof}

\begin{proof}
[Proof of Lemma~\ref{lemma:comparex}]
By Lemma~\ref{lem:basics} we have $\max I(\gamma) < x(\tilde\gamma)$.
By Proposition~\ref{prop:solitontrack}, $ x(\gamma^1) \leq \max I(\gamma) $.
By Lemma~\ref{lem:xgamma}, $ x(\tilde\gamma) \leq x(\tilde\gamma^{1}) $.
Combining these we have $ x(\gamma^{1}) < x(\tilde\gamma^{1}) $, and by induction on $ t $ we get $ x(\gamma^{t}) < x(\tilde\gamma^{t}) $.
\end{proof}

\begin{proof}[Proof of~\eqref{eq:collision}]
By Lemma~\ref{lem:xgamma}, we have $x(\gamma^{1})=x(\gamma)$ if and only if there exists some $\hat\gamma$ satisfying $I(\gamma)\subset I(\hat\gamma)$. Moreover, Lemma~\ref{lem:basics} implies that if there is another soliton $\gamma'$ with $I(\gamma)\subset I(\gamma')$, then $\hat\gamma$ and $\gamma'$ are of different sizes and the smaller one is nested inside the larger one. In that case, we take $\tilde\gamma$ to be the soliton with maximal size satisfying $I(\tilde\gamma)\supseteq I(\gamma)$. Note that $\tilde\gamma$ is necessarily appended to a record (if not, we can find an even larger soliton which contains $\gamma$), and therefore $x(\tilde\gamma^{1}) \ne x(\tilde\gamma)$.

It remains to show that if $x(\gamma^{1})>x(\gamma)$, then their difference is given by $\sum_{\ell<k} 2\ell N^{\gamma}_{0}(\ell)$. According to Lemma~\ref{lem:xgamma}, we must have $\cH(\gamma)$ to the left of $\T(\gamma)$ in this case.
It then follows from Proposition~\ref{prop:solitontrack} that $x(\gamma^{1})$ is the leftmost site of $\T(\gamma)$, denoted as $t(\gamma)$.
To conclude, it suffices to note that $t(\gamma)-x(\gamma)=k+\sum_{\gamma'} 2\times\text{size of }\gamma'$, where the sum is over all the solitons $\gamma'$ lodged inside the left side of $\gamma$. One readily check this leads to the desired identity.
\end{proof}

\begin{proof}
[Proof of Proposition~\ref{prop:uniqueness}]
Let $c_k := \textstyle{\sum_{m\neq k} 2(m\wedge k)\orho_m }$.
Since $ \ell \orho_\ell \to 0 $, by dominated convergence we have 
$ c:=\sup_k c_k = \lim_k c_k = \sum_\ell 2\ell\orho_\ell < \frac{1}{2} $ using our assumption. Define
\begin{align}
q_{k,m} :=
\begin{cases}
\frac{1}{c} 2(m\wedge k)\orho_m &\text{if } m\neq k\\
\frac{1}{c} (c-c_k) &\text{if } m=k
\end{cases}
\qquad
\text{ and }
\qquad
Q:=(q_{k,m})_{k,m}
.
\end{align}
Since $ \sum_m 2m \orho_m<\infty $, we can split the sums and rewrite~\eqref{eq:hspeedseq} as
\begin{align}
\label{eq:vkpp}
v_k &= k + c\, v_k - c\, (Qv)_k\,=\, \displaystyle{\frac{k -c\, (Qv)_k}{1-c} },
\quad
v_k \in [0,\infty)
.
\end{align}
Note that $(Qv)_k$ is finite for solutions $v$ to~\eqref{eq:vkpp}.

Let $v$ and $\tv$ be two solutions to~\eqref{eq:hspeedseq}, and denote $ r_k := v_k-\tv_k$ and $ s_k=|r_k| $.
Then
\begin{align}
s_k
=
|r_k| =
a \, |(Qr)_k|
\le
a \, (Qs)_k
,
\end{align}
where $a := c/(1-c) < 1$.
By induction, for all $ n\in\N $,
\begin{align}
\label{fkan}
s_k
\le
a^n \, (Q^n s)_k
.
\end{align}
The measure $\pi=(\pi_k)_{k\ge1}$ given by
$$\pi_k := \frac{\orho_k}{\sum_\ell \orho_\ell} $$
satisfies $\pi Q=\pi$.
Since $ \frac{1}{1-c}<2 $ and $ q_{k,m}\geq 0 $, solutions to~\eqref{eq:vkpp} satisfy $v_k\le 2k $ and
\begin{align}
\pi v
= \textstyle{\sum_k v_k\, \orho_k\, \bigl(\sum_\ell \orho_\ell \bigr)^{-1}\,
\le\, 2\sum_k k\,\orho_k\, \bigl(\sum_\ell \orho_\ell \bigr)^{-1}}
<\infty.
\end{align}
Likewise, $ \pi \tilde{v}<\infty $ and $ \pi s \leq \pi v + \pi \tilde{v} < \infty $.
Using~\eqref{fkan} and $ Q $-invariance of $\pi $,
\begin{align}
\textstyle{
\pi_k s_k
\le
\pi s
\leq
a^n \, \pi Q^n s
=
a^n \, \pi s,
}
\end{align}
for every $ n \in\N $.
Hence, if $\orho_k>0$, then $\pi_k>0$ and $|r_k|=0$, implying $v_k=\tv_k$.
When $\orho_k=0$ we have $v_k$ is a function of $(v_m:\orho_m>0)$, implying uniqueness also in this case.
\end{proof}

\section*{Acknowledgments}

We thank Roberto Fernández for very helpful discussions.

{
\small
\setstretch{1}
\bibliographystyle{bib/leo2020alphasc}
\bibliography{bib/leo,bib/leo2,bib/refs-pablo}
}

\vfill

Pablo A. Ferrari$^1$, Chi Nguyen$^1$, Leonardo T. Rolla$^{12}$, Minmin Wang$^1$

$^1$ Argentinian National Research Council at the University of Buenos Aires
\\
$^2$ Corresponding Author. e-mail: leorolla@dm.uba.ar

\end{document}